\def\BibTeX{{\rm B\kern-.05em{\sc i\kern-.025em b}\kern-.08em
    T\kern-.1667em\lower.7ex\hbox{E}\kern-.125emX}}
\crefname{lemma}{Lemma}{Lemmas}
\crefname{lemmax}{Lemma}{Lemmas}
\crefname{theorem}{Theorem}{Theorems}
\crefname{theoremx}{Theorem}{Theorems}
\crefname{assumption}{Assumption}{Assumptions}
\crefname{proposition}{Proposition}{Propositions}
\crefname{propositionx}{Proposition}{Propositions}
\crefname{corollary}{Corollary}{Corollaries}
\crefname{appen}{Appendix}{Appendices}
\crefname{figure}{Figure}{Figures}
\crefname{algorithm}{Algorithm}{Algorithms}
\crefname{remark}{Remark}{Remarks}
\theoremstyle{remark} 
\newtheorem{theorem}{Theorem}
\newtheorem{lemma}{Lemma}
\newtheorem{theoremx}{Theorem}[section]
\newtheorem{lemmax}{Lemma}[section]
\newtheorem{corollary}{Corollary}[section]
\newtheorem{assumption}{Assumption}
\theoremstyle{remark} 
\newtheorem{remark}{Remark}
\newtheorem{remarkx}{Remark}[section]
\def\mP{\mathbb{P}}
\def\mE{\mathbb{E}}
\def\td{\text{d}}
\newcommand{\abs}[1]{\left| #1\right|}
\newcommand{\Abs}[1]{\left\| #1\right\|}
\def\udf{\underline{f}}
\def\epp{\varepsilon_1}
\def\Mpri{M^\prime}
\begin{document}
\title{Recursive Identification of Binary-Valued Systems Under Uniform Persistent Excitations}
\author{Jieming Ke, \IEEEmembership{Student Member, IEEE}, Ying Wang, \IEEEmembership{Member, IEEE},  Yanlong Zhao, \IEEEmembership{Senior Member, IEEE}, and Ji-Feng Zhang, \IEEEmembership{Fellow, IEEE}
\thanks{The work is supported by National Key R\&D Program of China under Grant 2018YFA0703800, National Natural Science Foundation of China under Grants 62025306, 62303452 and T2293770, CAS Project for Young Scientists in Basic Research under Grant YSBR-008, and China Postdoctoral Science Foundation under Grant 2022M720159.}
\thanks{Jieming Ke, Ying Wang, Yanlong Zhao, and Ji-Feng Zhang are with the Key Laboratory of Systems and Control, Institute of Systems Science, Academy of Mathematics and Systems Science, Chinese Academy of Sciences, Beijing 100190, China, and	also with the School of Mathematics Sciences, University of Chinese	Academy of Sciences, Bejing 100149, China (e-mail: 
	kejieming@amss.ac.cn; 
	wangying96@amss.ac.cn; 
	ylzhao@amss.ac.cn; 
	jif@iss.ac.cn). }}

\maketitle

\begin{abstract}
This paper investigates the online identification problem of {binary-valued} moving average systems. 
A stochastic approximation-based algorithm without projections or truncations is proposed. 
To analyze the convergence property of the algorithm, the distribution tail of the parameter estimate is proved to be exponentially convergent through an auxiliary stochastic process.
Under uniform persistent excitations, the almost sure and mean square convergence of the algorithm is obtained. 
When the step-size coefficient is properly selected, the almost sure and mean square convergence rates are proved to reach $ O(\sqrt{\ln \ln k/k}) $ and $ O(1/k) $ respectively, where $ k $ is the sample size.
A numerical example is given to demonstrate the effectiveness of the proposed algorithm and theoretical results.
\end{abstract}

\begin{IEEEkeywords}
Binary-valued systems, stochastic systems, recursive identification, stochastic approximation, uniform persistent excitations.
\end{IEEEkeywords}

\section{Introduction}

{Binary-valued} systems emerge widely in practice.
For example, in automotive and chemical process applications, oxygen sensors are used for evaluating gas oxygen contents \cite{Wang2002prediction,Wang2003System,WangLY2010System}. Inexpensive oxygen sensors are switching types that change their voltage outputs sharply when excess oxygen in the gas is detected. 
More examples can be seen in genetic association studies \cite{bi2021efficient,kang2017robust}, radar target recognition \cite{wang2016radar}, and credit scoring \cite{Wang2021credit}, etc. 
The appearance of the above {binary-valued} sensors brings forward new requirements for identification theory,
which is the focus of the paper.

There are some important identification algorithms proposed for binary-valued and finite-valued systems \cite{Bottegal2017a,Colinet2010a,godoy2011on,gustafsson2009statistical,Risuleo2020identification,Shen2014robust,zhao2023system}, many of which are offline. Offline methods take full advantage of the statistical property of the finite-valued outputs, and require fewer assumptions than the online ones. However, in some scenarios, for instance, in adaptive control problems, online identification is of great importance, since online identification methods need less memory and computation complexity, and can update the parameter estimate quickly \cite{ljung2}.




The online identification of binary-valued and finite-valued systems has
been investigated under different type inputs \cite{Csaji2012recursive,Song2018recursive,Song2024identification,You2015recursive,Diao2020a,He2013Moderate,Mei2014almost,Moschitta2015Parametric,Wang2007asymptotically,WangLY2006Joint,WangLY2008Identification,Wang2003System,Zhao2010Identification,zhao2018consensus,Fu2022Distributed,WangY2022unified,guo2013recursive,zhang2019asymptotically,Wang2021distributed}. For example, \cite{Csaji2012recursive,Song2018recursive,Song2024identification} assume the inputs to be independent and identically distributed (i.i.d.), and propose stochastic approximation algorithms with expanding truncations for {binary-valued} systems.  \cite{You2015recursive} requires i.i.d. inputs, and gives a stochastic gradient-based algorithm. These algorithms are all proved to be convergent almost surely. Besides, \cite{Diao2020a,He2013Moderate,Li2021suboptimal,Mei2014almost,Moschitta2015Parametric,Wang2007asymptotically,WangLY2006Joint,WangLY2008Identification,Wang2003System,Zhao2010Identification,zhao2018consensus} consider periodic inputs, and propose empirical measurement methods. The methods can be applied in infinite impulse response systems and Hammerstein systems with {binary-valued} observations \cite{Mei2014almost,Zhao2010Identification}.  \cite{Fu2022Distributed} and \cite{WangY2022unified} consider uniformly persistently exciting inputs, and design sign-error type identification algorithms. \cite{guo2013recursive,zhang2019asymptotically,Wang2021distributed} assume the inputs to be persistently exciting, and propose recursive projection methods. 

There are two types of sensors considered in the finite-valued system identification problems. One type sensors are adaptive ones, whose thresholds can be adjusted according to historical data \cite{Csaji2012recursive,Fu2022Distributed,WangY2022unified}. In the adaptive sensor case, the system outputs provide richer information when the thresholds are properly selected. Another type sensors are fixed ones, whose thresholds are time-invariant \cite{Diao2020a,He2013Moderate,Li2021suboptimal,Mei2014almost,Moschitta2015Parametric,Wang2007asymptotically,WangLY2006Joint,WangLY2008Identification,Wang2003System,Zhao2010Identification,zhao2018consensus,guo2013recursive,zhang2019asymptotically,Wang2021distributed}. Fixed sensors are more common in practical scenarios. A practical example of the fixed finite-valued sensors is the oxygen sensors in automotive and chemical process \cite{Wang2002prediction,Wang2003System,WangLY2010System}.

This paper focuses on the {binary-valued} system identification problem under uniform persistent excitations and fixed {binary-valued} sensors. The problem has been studied in \cite{guo2013recursive,zhang2019asymptotically,Wang2021distributed}, but these works require that the unknown parameter is located in a known compact set. They design projections according to the \textit{a priori} information on parameter location to ensure the uniform boundedness of the identification algorithms. 

We consider the case without any \textit{a priori} information on the parameter location. In this case, {the identification algorithm should have the ability to search unknown parameters in the whole space. Therefore, }the projections in \cite{guo2013recursive,zhang2019asymptotically,Wang2021distributed} should be removed for the convergence properties{, which causes the algorithm to lose the uniform boundedness. This makes it difficult to analyze the convergence properties of the algorithm. To overcome the difficulty, in the periodic input case, \cite{zhao2018consensus} calculates the distribution tail of the parameter estimate, that is the probability that the parameter estimate exceeds a certain compact set. In the non-periodic input case, the distribution of observation sequences does not maintain periodicity and therefore more complex, which makes the distribution tail of the parameter estimate difficult to be calculated. }


To solve the difficulty, this paper constructs a stochastic process with averaged observations (SPAO), which builds a bridge between the average of the {binary-valued} observations and the algorithm. By SPAO, we can utilize the distribution tail of the observation average to estimate the distribution tail of the algorithm. 

In the paper, a stochastic approximation-based (SA-based) algorithm without projections or truncations is proposed for the {binary-valued} moving average (MA) system identification problem. 
The main contributions of the paper are as follows.

\begin{enumerate}[label={\roman*)}]
	\item 
	A new SA-based identification algorithm without projections is proposed for {binary-valued} MA systems. 
	{Using} this algorithm, we can recursively obtain the unknown parameter under uniform persistent excitations without any \textit{a priori} information on the parameter location. It is the first paper where such a property is derived in the fixed finite-level quantizer and non-periodic deterministic input case. 

	\item The convergence properties of the SA-based identification algorithm are established. To be specific, the almost sure convergence and mean square convergence are induced through the exponential convergence of the estimation error distribution tail. Besides, when the step-size coefficient is properly selected, the almost sure convergence rate is proved to be $ O(\sqrt{\ln \ln k/k}) $, which is firstly achieved among online identification algorithms of stochastic {binary-valued} systems under non-periodic inputs. Moreover, the mean square convergence rate is proved to be $ O(1/k) $, which is the best mean square convergence rate in theory under {binary-valued} observations and even accurate ones.

	\item A new constructive methodology is developed for the convergence analysis of {binary-valued} system identification algorithms.
	Specially, SPAO is constructed to reveal the connection between the average of the {binary-valued} observations and the convergence properties of the algorithm.
	Moreover, the methodology is also shown to be practical for a common class of recursive identification algorithms for {binary-valued} systems, such as the stochastic gradient-based algorithm \cite{You2015recursive} and the quasi-Newton type algorithms \cite{zhang2019asymptotically,Wang2021distributed}.
\end{enumerate}

The rest of the paper is organized as follows. \cref{sec:prob form} formulates the identification problem. \cref{sec:iden algo} proposes an SA-based identification algorithm of {binary-valued} systems. \cref{sec:conv prop} gives the convergence analysis. \cref{subsec:SPAO} constructs an auxiliary stochastic process named SPAO and discusses its property. Based on SPAO, \cref{subsec:DistTailEstm} estimates the distribution tail of the estimation error, and gives the almost sure and mean square convergence. \cref{subsec:ASConvRate} and \cref{subsec:M2ConvRate} analyze the almost sure and mean square convergence rates, respectively. \cref{sec:nume simu} simulates a numerical example to demonstrate the theoretical results. \cref{sec:conclusion} gives concluding remarks and future works. 

\subsection*{Notation}

In the rest of the paper, $\mathbb{N}$, $\mathbb{R}$ and $\mathbb{R}^{n}$ denote the sets of natural numbers, real numbers and $n$-dimensional real vectors, respectively. $I_{\{\cdot\}}$ denotes the indicator function, whose value is 1 if its argument (a formula) is true, and 0, otherwise. $\|x\|$ is the Euclidean norm for vector $x$. $I_n$ is an $n\times n$ identity matrix.  $\lfloor x\rfloor$ is the largest integer that is smaller than or equal to $x\in \mathbb{R}$. The positive part of $x$ is denoted as $x^+=\max\{x,0\}$. For square matrices $ A_l, \ldots, A_k $, denote $ \prod_{i=l}^k A_i = A_k \cdots A_l $ for $ k\geq l $. Relations between two series $a_k$ and $b_k$ are defined as

\begin{enumerate}[label={\roman*)}]
	\item $a_k=O(b_k)$ if $a_k=c_k b_k$ for a bounded $c_k$;
	\item $a_k=o(b_k)$ if $a_k=c_k b_k$ for a $c_k$ that converges to $ 0 $.
\end{enumerate}

\section{Problem formulation}\label{sec:prob form}

Consider the MA system:

\begin{equation}\label{1}
	y_k=\phi_k^\top\theta+d_k,\quad k\geq 1,
\end{equation}
where $ \phi_k = \phi(u_k,u_{k-1},\ldots,u_{k-\bar{n}+1}) \in \mathbb{R}^n $ is a regressed function of inputs $ u_k $ for some $ \bar{n}>0 $, $\theta\in \mathbb{R}^n$ is the unknown parameter, and $d_k$ is the system noise, respectively. The unobserved system output $y_k$ is measured by a {binary-valued} sensor with a fixed threshold $C$, which can be represented by an indicator function
\begin{equation}\label{2}
	s_k=I_{\{y_k\leq C\}}=
	\begin{cases}
		1,\quad y_k\leq C;\\
		0,\quad y_k > C.\\
	\end{cases}
\end{equation}

Our goal is to identify the unknown parameter $\theta$ based on the regressed vector $\phi_k$ and the binary observation $s_k$. 

\begin{assumption}\label{a1}
	The sequence $\{\phi_k, k\geq 1\}$ is bounded, i.e.,
	\begin{equation*}
		\sup\limits_{k\geq 1}\Abs{\phi_k}\leq M<\infty,
	\end{equation*}
	and there exist a positive integer $N\geq n$ and a real number $\delta>0$ such that
	\begin{equation}\label{AssumEq_SuffRichCond}
		\frac{1}{N}\sum^{k+N-1}_{i=k}\phi_i \phi_i^\top\geq\delta I_n,\ k\geq 1.
	\end{equation}
\end{assumption}

\begin{remark}
	The condition \eqref{AssumEq_SuffRichCond} is usually called ``uniform persistent excitation condition'' or ``sufficiently rich condition'' \cite{chen1987adaptive,guo2013recursive}.  \cref{a1} is  common in the {binary-valued} system identification field \cite{guo2013recursive,WangY2022unified}. 
\end{remark}

\begin{assumption}\label{a2}
	The system noise sequence $\{d_k,k\geq 1\}$ is i.i.d. with zero mean and finite variance, whose distribution and density function are  denoted as $F(\cdot)$ and $f(\cdot)$, respectively. The distribution $ F(\cdot) $ is Lipschitz continuous, and the density function $f(\cdot)$ satisfies
	\begin{equation}\label{eq_assump2}
		\inf\limits_{x\in \mathfrak{X}}f(x)>0
	\end{equation}
	for any bounded open set $\mathfrak{X}$.
\end{assumption}

For simplicity of notation, denote
\begin{equation*}
	F_k=F\left(C-\phi_k^\top\theta\right),\quad
	f_k=f\left(C-\phi_k^\top\theta\right).
\end{equation*}
Then $ \mE s_k = \mP\{y_k\leq C\} = F_k $.

\begin{remark}
	Gaussian noise, Laplacian noise and $ t $-distribution noise are all examples satisfying \cref{a2}. Moreover, if \eqref{eq_assump2} does not hold for the system noise, we can add a dither to the binary sensor \cite{Wang2003System}. Under \cref{a2}, the density function $ f(\cdot) $ is bounded because of the Lipschitz continuity of the distribution function $ F(\cdot) $. 
\end{remark}

{
\begin{remark}
	It will be a more general problem when the noise is Gaussian with an unknown variance $ \sigma^2 $. In this case, we can use the similar technique of \cite{WangLY2006Joint} to transform the joint identification problem for $ \theta $ and $ \sigma^2 $ into the identification problem for a new {binary-valued} system with known noise variance. 
\end{remark}
}
%
%


\section{Identification algorithm}\label{sec:iden algo}

This section will propose an SA-based algorithm for the MA system \eqref{1} with binary observation \eqref{2}.

%

In the viewpoint of SA, the identification problem can be treated as the problem to find the roots of 
\begin{align*}
	\mu_k(\hat{\theta}) = F\left(C-\phi_k^\top\hat{\theta}\right) - F_k,\ k \geq 1. 
\end{align*}
Because $ \theta $ is unknown, $ F_k $ is unavailable. Besides, $ s_k $ is available, and its expectation is $ F_k $. We replace $ F_k $ with $ s_k $. 
%
Then, based on the SA method \cite{chen2002stochastic}, the identification algorithm is given as
\begin{align*}
	\hat{\theta}_k
	= \hat{\theta}_{k-1} + \rho_k \phi_k \left( F\left(C-\phi_k^\top\hat{\theta}_{k-1}\right) - s_k \right),
\end{align*}
where $\rho_k\in\mathbb{R}$ is the step-size.   

{
\begin{remark}
	In the algorithm design, $ F_k $ can be replaced with $ s_k $ because $ \{s_k - F_k\} $ is a martingale difference sequence with uniformly bounded variances. When the step-size is properly selected, martingale difference noises with uniformly bounded variances will not influence the convergence of SA-based algorithms \cite{chen2002stochastic}. 
\end{remark}
}

Denote $ \hat{F}_k=F\left(C-\phi_k^\top\hat{\theta}_{k-1}\right) $.
Set $ \rho_k = \beta/k $, where $ \beta>0 $ is a constant coefficient. Then, the SA-based algorithm is given as follows. 
\begin{equation}\label{algo}
	\hat{\theta}_{k}=\hat{\theta}_{k-1}+\frac{\beta \phi_k}{k}\left(\hat{F}_k-s_k\right),\ \forall k>k_0,
\end{equation}
where $ k_0\in\mathbb{N} $ is the starting point, and the initial value $ \hat{\theta}_{k_0} $ can be arbitrarily selected in $ \mathbb{R}^n $.

The observation error $ \hat{\theta}_k - \theta $ is denoted as $ \tilde{\theta}_k $.

\begin{remark}
	Algorithm \eqref{algo} is similar to the recursive projection algorithm proposed in \cite{guo2013recursive}. The main difference is that we do not introduce any projections or truncations in \eqref{algo}. 
	The difference brings major difficulty in the convergence analysis. The convergence analysis of the recursive projection algorithm relies on the fact that if the search region is constrained in a compact set, then there is a uniform positive lower bound for $ -({\hat{F}_k-F_k})/\phi_k^\top\tilde{\theta}_{k-1}  $. Without any projection, Algorithm \eqref{algo} can reach every point in the whole space. Then, the infimum of $ -({\hat{F}_k-F_k})/\phi_k^\top\tilde{\theta}_{k-1} $ can be arbitrarily close to $ 0 $. 
	To overcome the problem, we should investigate the distribution tail of the algorithm. 
\end{remark}

\begin{remark}
		The step-size $ \rho_k $ that converges to $ 0 $ is used to reduce the effect of noise $ d_k $ \cite{chen2002stochastic}. In the SA method \cite{chen2002stochastic}, $ \rho_k $ should satisfy $\sum_{i=1}^\infty \rho_i = \infty$ and $\sum_{i=1}^\infty \rho^{2}_i < \infty$. One of the example is $ \rho_k = \beta/k $ that is used in \eqref{algo}. Another example is $  \rho_k = \beta / (1+\sum_{i=1}^{k}\Abs{\phi_i}^2) $ that is used in \cite{guo2013recursive}.
\end{remark}

\begin{remark}
	In Algorithm \eqref{algo}, $ \hat{F}_k $ is used to approximate $ s_k $ because $ \hat{F}_k = \mE[s_k(\theta)|\theta=\hat{\theta}_{k-1}] $. Therefore, in the multiple threshold case with threshold number $ q $,  Algorithm \eqref{algo} also works after replacing $ \hat{F}_k $ with $ \mE[s_k^q(\theta)|\theta=\hat{\theta}_{k-1}] $, where $ s_k^q $ is the corresponding observation in $ \{0,1,\ldots,q\} $. 
\end{remark}

\section{Convergence}\label{sec:conv prop}

This section will focus on the convergence analysis of the algorithm including the distribution tail, almost sure convergence rate and mean square convergence rate. An auxiliary stochastic process is introduced firstly to assist in the analysis.

\subsection{Stochastic process with averaged observations (SPAO)}\label{subsec:SPAO}

This subsection will introduce an auxiliary stochastic process satisfying
\begin{enumerate}[label=\roman*)]
	\item the trajectory of the stochastic process gradually approaches that of the estimation error $ \tilde{\theta}_k $;
	\item the convergence property of the stochastic process is easy to analyze compared with that of the algorithm.
\end{enumerate}

The construction is inspired by the idea that $ \beta \phi_k (F_k-s_k) $ can be replaced by the linear combination of $w_k$ and $w_{k-1}$, where
\begin{equation}\label{defT}
	w_k = \frac{\sum_{i=1}^k \beta \phi_i (F_i - s_i)}{k},
\end{equation}
i.e.,
\begin{align*}
	\beta \phi_k (F_k-s_k)
	=& \sum_{i=1}^{k} \beta \phi_i (F_i-s_i) - \sum_{i=1}^{k-1} \beta \phi_i (F_i-s_i) \nonumber\\
	=& k\left(w_k - w_{k-1}\right) + w_{k-1}.
\end{align*}
Define $ \psi_k = \tilde{\theta}_k - w_k $. Then, by the transformation above,
\begin{align}\label{recur_psi}
	\psi_k
	= & \psi_{k-1} + \frac{\beta \phi_k}{k} \left( F\left(C-\phi_k^\top\theta-\phi_k^\top\psi_{k-1}-\phi_k^\top w_{k-1}\right) \right. \nonumber\\
	& \qquad\qquad \left. - F\left(C-\phi_k^\top\theta\right) \right) + \frac{w_{k-1}}{k}.
\end{align}
The above stochastic process is named as SPAO. With SPAO, the convergence property of the algorithm can be analyzed through that of $ w_k $.

\begin{remark}
	For general stochastic approximation methods, $ w_k $ is also used to verify the robustness of the algorithm (\!\! \cite{chen2002stochastic}, Assumption 2.7.3 and Theorem 2.7.1).
\end{remark}

To analyze the properties of SPAO $ \psi_k $, we should firstly estimate the distribution tail of $ w_k $.

\begin{lemma}\label{lemma:T}
	Let $w_k$ be defined in \eqref{defT}, and assume that
	\begin{enumerate}[label=\roman*)]
		\item $\phi_k\in \mathbb{R}^n$ is bounded;
		\item $s_k\in \{0,1\}$ is a binary random variable with expectation $F_k$, and the sequence $ \{ s_k, k\geq 1 \} $ is independent.
	\end{enumerate}
	Then, for any $\varepsilon\in(0,\frac{1}{2})$ , there exists $m>0$ such that
	\begin{equation}
		\mP\left\{\sup_{j\geq k}j^{\varepsilon}\Abs{w_j}>1\right\}=O\left(\exp(-mk^{1-2\varepsilon})\right).
	\end{equation}
\end{lemma}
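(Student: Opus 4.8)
The plan is to exploit that the summands $\beta\phi_i(F_i-s_i)$ are independent, zero-mean (since $\mE s_i=F_i$ by hypothesis ii), and uniformly bounded (since $\Abs{\phi_i}\leq M$ by hypothesis i and $|F_i-s_i|\leq 1$). Consequently the partial sums $S_j:=\sum_{i=1}^{j}\beta\phi_i(F_i-s_i)=jT_j$ form a vector-valued martingale with increments bounded in norm by $\beta M$. Observing that $j^{\varepsilon}\Abs{T_j}=\Abs{S_j}/j^{1-\varepsilon}$, the event $\{\sup_{j\geq k}j^{\varepsilon}\Abs{T_j}>1\}$ is exactly $\{\sup_{j\geq k}\Abs{S_j}>j^{1-\varepsilon}\}$, so the lemma reduces to a maximal large-deviation estimate for $\Abs{S_j}$. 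Note that hypothesis iii is not needed for this upper tail; only independence, the zero-mean property, and boundedness enter.

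First I would pass to coordinates: since $\Abs{S_j}\leq\sqrt{n}\max_{c}|S_j^{(c)}|$, it suffices to control each scalar martingale $S_j^{(c)}=\sum_{i\leq j}\beta\phi_i^{(c)}(F_i-s_i)$, whose increments are bounded by $\beta M$ in absolute value. For a fixed horizon $J$, the process $\exp(\lambda S_j^{(c)})$ is a nonnegative submartingale, so Doob's maximal inequality combined with the Hoeffding bound on the moment generating function of bounded independent increments gives, after optimizing over $\lambda$, an estimate of the form $\mP\{\max_{j\leq J}|S_j^{(c)}|>t\}\leq 2\exp(-t^2/(2J\beta^2M^2))$.

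The heart of the argument is converting this fixed-horizon maximal bound into the claimed doubly-exponential tail over the infinite range $j\geq k$. I would use a \emph{dyadic blocking} scheme: decompose $\{j\geq k\}$ into blocks $[2^l,2^{l+1})$ for $l\geq l_0$, where $2^{l_0}\leq k<2^{l_0+1}$. On block $l$ the threshold $j^{1-\varepsilon}$ is at least $(2^l)^{1-\varepsilon}$, while the horizon is $J=2^{l+1}$, so the coordinate bound together with the union over coordinates yields $\mP\{\exists\, j\in[2^l,2^{l+1}):\Abs{S_j}>j^{1-\varepsilon}\}\leq 2n\exp(-c_0\,2^{l(1-2\varepsilon)})$ with $c_0=1/(4n\beta^2M^2)$. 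Summing over $l\geq l_0$ produces a series whose exponents $c_0\,2^{l(1-2\varepsilon)}$ grow geometrically, because the base $2^{1-2\varepsilon}>1$ precisely when $\varepsilon<\tfrac12$; hence the sum is dominated by a constant multiple of its leading term, namely $O(\exp(-c_0\,2^{l_0(1-2\varepsilon)}))$. Since $2^{l_0}\geq k/2$, this is $O(\exp(-m\,k^{1-2\varepsilon}))$ with $m=c_0\,2^{-(1-2\varepsilon)}$, as required.

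The main obstacle I anticipate is that the hypothesis $\varepsilon<\tfrac12$ must do double duty and the two uses must line up: it makes $j^{1-\varepsilon}$ outgrow the $\sqrt{j}$ fluctuation scale so each block's tail is genuinely small, and it forces the block exponents to grow geometrically so the infinite sum collapses to its first term. Obtaining the sharp exponent $k^{1-2\varepsilon}$, rather than something weaker, requires pairing the \emph{left}-endpoint threshold value $(2^l)^{1-\varepsilon}$ against the \emph{right}-endpoint horizon $2^{l+1}$ and verifying the constants do not degrade across blocks. This bookkeeping, rather than any single deep inequality, is where care is needed.
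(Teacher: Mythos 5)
Your proof is correct, and it handles the supremum over $j\geq k$ by a genuinely different device than the paper. The paper first reduces the supremum to a single-time bound: it proves $\mP\{\lVert T_k\rVert>k^{-\varepsilon}\}=O(\exp(-mk^{1-2\varepsilon}))$ for each fixed $k$ (via the same coordinate reduction you use, followed by a ready-made exponential inequality for sums of independent bounded variables, Theorem 5.5.1 of Tucker, with the variance bound $\sigma_{k,i}^2\leq 4\bar X^2k$), and then disposes of the supremum by the crude term-by-term union bound $\sum_{j\geq k}\exp(-mj^{1-2\varepsilon})=O(k^{2\varepsilon}\exp(-mk^{1-2\varepsilon}))=O(\exp(-mk^{1-2\varepsilon}/2))$, which is affordable precisely because the tails decay stretched-exponentially. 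You instead keep the supremum intact, control it on dyadic blocks via Doob's maximal inequality applied to $\exp(\lambda S_j^{(c)})$ plus Hoeffding, and sum a series whose exponents grow geometrically. Both routes need $\varepsilon<\tfrac12$ for the same two reasons and land on the same exponent; yours is self-contained (no external citation, only Azuma--Hoeffding machinery) and the blocking gives slightly cleaner constants, while the paper's is shorter because the pointwise tail is already summable without any maximal inequality. Your side remark that hypothesis iii is not needed for this bound is accurate --- the paper's proof does not use it either, relying only on the upper bound for the increments.
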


The proof is given in \cref{appen:proof}.

Next, we give the following lemma to describe the distance between $ \psi_k $ and the estimation error $ \tilde{\theta}_k $ in three different senses based on \cref{lemma:T}.

\begin{lemma}\label{prop:dist}
	Assume that
	\begin{enumerate}[label=\roman*)]
		\item The system \eqref{1} with binary observation \eqref{2} satisfies \cref{a1,a2};
		\item $w_k$ is defined in \eqref{defT}, and $\psi_k = \tilde{\theta}_k - w_k$.
	\end{enumerate}
	Then, we have
	\begin{enumerate}[label={(\alph*)}]
		\item for any $ \varepsilon\in(0,\frac{1}{2}) $, there exists $ m>0 $ such that $ \mP\left\{\lVert\tilde{\theta}_k - \psi_k\rVert > k^{-\varepsilon}\right\} = O\left(\exp\left(-mk^{1-2\varepsilon}\right)\right) $;
		\item $ \lVert\tilde{\theta}_k - \psi_k\rVert = O\left(\sqrt{\ln \ln k/k}\right) $, a.s.;
		\item $ \mE\lVert\tilde{\theta}_k - \psi_k\rVert^2 = O(1/k) $.
	\end{enumerate}
\end{lemma}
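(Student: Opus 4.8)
The plan rests on one simplifying observation: since $\psi_k = \tilde{\theta}_k - T_k$ by definition, we have $\tilde{\theta}_k - \psi_k = T_k$ identically, so all three claims are really statements about the averaged sum $T_k = \frac{\beta}{k}\sum_{i=1}^k \phi_i(F_i - s_i)$. Its summands $\phi_i(F_i - s_i)$ are independent (the $s_i$ are independent under \cref{a2}), mean zero (since $\mE s_i = F_i$), and bounded in norm by $\beta M$ (using \cref{a1}). The whole lemma thus reduces to controlling this average in three senses: an exponential probability tail, an almost sure rate, and an $L^2$ rate.

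For part (a) I would invoke \cref{lemma:T} directly, after checking its hypotheses. Boundedness of $\phi_k$ is \cref{a1}; the structure of $s_k$ is \cref{a2}; and the condition $\sup_k F_k<1$, $\inf_k F_k>0$ follows because $C-\phi_k^\top\theta$ lies in a fixed bounded interval (as $\Abs{\phi_k}\leq M$), on which the density $f$ is strictly positive by \eqref{eq_assump2}, forcing $F$ to stay bounded away from $0$ and $1$. Then, from the inclusion $\{\Abs{T_k}>k^{-\varepsilon}\} = \{k^{\varepsilon}\Abs{T_k}>1\} \subseteq \{\sup_{j\geq k} j^{\varepsilon}\Abs{T_j}>1\}$, monotonicity of $\mP$ together with \cref{lemma:T} yields the bound $O(\exp(-mk^{1-2\varepsilon}))$. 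For part (c) I would expand $\mE\Abs{T_k}^2 = \frac{\beta^2}{k^2}\,\mE\Abs{\sum_{i=1}^k \phi_i(F_i-s_i)}^2$; by independence and the zero-mean property the cross terms vanish, leaving $\frac{\beta^2}{k^2}\sum_{i=1}^k \Abs{\phi_i}^2 F_i(1-F_i)$, and the bounds $\Abs{\phi_i}\leq M$ and $F_i(1-F_i)\leq\frac14$ give $\mE\Abs{T_k}^2 \leq \frac{\beta^2 M^2}{4k} = O(1/k)$.

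The substantive work is part (b), which is a law-of-the-iterated-logarithm statement. I would argue coordinate by coordinate: for each $j$, the scalar sum $\frac{k}{\beta}T_k^{(j)} = \sum_{i=1}^k \phi_i^{(j)}(F_i-s_i)$ is a sum of independent, bounded, mean-zero terms with variance sum $V_k^{(j)} = \sum_{i=1}^k (\phi_i^{(j)})^2 F_i(1-F_i)$. The persistent excitation \eqref{AssumEq_SuffRichCond} gives $\sum_{i=k}^{k+N-1}(\phi_i^{(j)})^2 = e_j^\top\big(\sum_{i=k}^{k+N-1}\phi_i\phi_i^\top\big)e_j \geq N\delta$ for every window, which combined with $\inf_i F_i(1-F_i)>0$ forces $V_k^{(j)} = \Theta(k)$, in particular $V_k^{(j)}\to\infty$; boundedness of the summands then supplies the Lindeberg/Kolmogorov growth condition automatically. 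Invoking the upper half of the law of the iterated logarithm for independent sequences gives $\big|\sum_{i=1}^k \phi_i^{(j)}(F_i-s_i)\big| = O\big(\sqrt{V_k^{(j)}\ln\ln V_k^{(j)}}\big) = O(\sqrt{k\ln\ln k})$ a.s., so dividing by $k/\beta$ yields $|T_k^{(j)}| = O(\sqrt{\ln\ln k/k})$. Since $n$ is fixed, $\Abs{T_k}\leq \sqrt{n}\,\max_j |T_k^{(j)}| = O(\sqrt{\ln\ln k/k})$ a.s.

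I expect part (b) to be the main obstacle. The delicate points are selecting the correct normalization in the non-identically-distributed setting and confirming the linear growth $V_k^{(j)}=\Theta(k)$ that makes the $\ln\ln$ rate come out sharp — this is precisely where the uniform persistent excitation assumption is essential. Parts (a) and (c), by contrast, are short consequences of \cref{lemma:T} and of the independence/boundedness of the summands, respectively.
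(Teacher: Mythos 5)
Your proposal is correct and follows essentially the same route as the paper: since $\tilde{\theta}_k-\psi_k=T_k$ identically, part (a) is read off from \cref{lemma:T}, part (b) from the (coordinate-wise) law of the iterated logarithm for bounded independent mean-zero summands, and part (c) from the variance computation $\mE\Abs{T_k}^2=O(1/k)$. The paper compresses this into a single line, and your elaboration of the hypotheses --- that $\inf_k F_k>0$ and $\sup_k F_k<1$ follow from the boundedness of $\phi_k^\top\theta$ and \cref{a2}, and that the coordinate variances grow linearly under the persistent excitation condition so the LIL applies with the sharp $\sqrt{k\ln\ln k}$ normalization --- is accurate.
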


\begin{proof}
	Since $ \tilde{\theta}_k - \psi_k = w_k $,  the three parts of the lemma can be obtained immediately from \cref{lemma:T}, the law of iterated logarithm (\!\!\! \cite{YSC}, Theorem 10.2.1) and $ \mE\Abs{w_k}^2 = O(1/k) $, respectively.
\end{proof}



Then, by using \cref{lemma:T,prop:dist}, the following theorem estimates the distribution tail of SPAO $ \psi_k $. 

\begin{theorem}\label{prop:in}
	Under the condition of \cref{prop:dist}, for any $\Mpri>0$ and $\varepsilon\in(0,\frac{1}{2})$, when $ k $ is sufficiently large,
	\begin{equation*}
		\left\{ \lVert\psi_k\rVert^2 < \Mpri \right\} \supseteq \left\{ \sup_{j\geq \lfloor k^{1-\varepsilon} \rfloor} j^{\varepsilon}\Abs{w_j}\leq 1 \right\}.
	\end{equation*}
	Furthermore, there exists $ m>0 $ such that
	\begin{equation*}
		\mP \left\{ \lVert\psi_k\rVert^2 \geq \Mpri \right\} = O\left( \exp\left(-mk^{(1-\varepsilon)(1-2\varepsilon)}\right) \right).
	\end{equation*}
\end{theorem}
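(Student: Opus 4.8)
The plan is to work pathwise on the ``good event'' $E_k := \{\sup_{j\geq k_0} j^\varepsilon\Abs{T_j}\leq 1\}$ with $k_0 = \lfloor k^{1-\varepsilon}\rfloor$, prove the deterministic inclusion $E_k\subseteq\{\Abs{\psi_k}^2<\Mpri\}$ for $k$ large, and then read the probability bound directly off \cref{lemma:T}. The point of conditioning on $E_k$ is that there $\Abs{T_j}\leq j^{-\varepsilon}$ for all $j\geq k_0$, so the forcing driving the SPAO recursion \eqref{recur_psi} is summably small on the relevant range and the analysis collapses to a perturbed \emph{deterministic} stability problem for $\psi_j$.

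For the inclusion I would first linearize \eqref{recur_psi}: by the mean value theorem there is $\xi_j$ with $F(C-\phi_j^\top\theta-\phi_j^\top(\psi_{j-1}+T_{j-1}))-F(C-\phi_j^\top\theta)=-f(\xi_j)\phi_j^\top(\psi_{j-1}+T_{j-1})$, so that $\psi_j=(I_n-A_j)\psi_{j-1}+b_j$ with $A_j=\frac{\beta f(\xi_j)}{j}\phi_j\phi_j^\top\succeq 0$ and $b_j=\frac{T_{j-1}}{j}-A_jT_{j-1}$, where $\Abs{b_j}=O(j^{-1-\varepsilon})$ on $E_k$. Since $A_j\succeq 0$ and $\Abs{I_n-A_j}\leq 1$ for large $j$, telescoping the algorithm increments $\Abs{\hat\theta_j-\hat\theta_{j-1}}\leq\beta M/j$ yields the crude initial bound $\Abs{\psi_{k_0}}=O(\ln k)$. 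The core step is a window estimate: grouping the indices into blocks of length $N$ and combining the uniform persistent excitation \eqref{AssumEq_SuffRichCond} of \cref{a1} with the density positivity of \cref{a2}, I would show $\Abs{\prod_{j=l}^{l+N-1}(I_n-A_j)}\leq 1-c/l$ for some $c>0$, hence $\Abs{\prod_{j=k_0+1}^{k}(I_n-A_j)}\lesssim (k_0/k)^{c}$. A discrete variation-of-constants bound $\Abs{\psi_k}\leq\Abs{\prod_{j=k_0+1}^{k}(I_n-A_j)}\Abs{\psi_{k_0}}+\sum_{j=k_0+1}^{k}\Abs{\prod_{i=j+1}^{k}(I_n-A_i)}\Abs{b_j}$ then drives $\Abs{\psi_k}\to 0$ on $E_k$ (the first term is $O(\ln k\cdot k^{-\varepsilon c})$, the second is summable and vanishing), so $\Abs{\psi_k}^2<\Mpri$ once $k$ is large.

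The probability estimate is then immediate from the inclusion: passing to complements, $\mP\{\Abs{\psi_k}^2\geq\Mpri\}\leq\mP\{\sup_{j\geq k_0}j^\varepsilon\Abs{T_j}>1\}$, and \cref{lemma:T} with threshold $k_0=\lfloor k^{1-\varepsilon}\rfloor$ gives this as $O(\exp(-m k_0^{1-2\varepsilon}))=O(\exp(-m k^{(1-\varepsilon)(1-2\varepsilon)}))$ for some $m>0$, exactly as claimed.

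I expect the window contraction to be the main obstacle, specifically controlling the density factor $f(\xi_j)$. Because $\xi_j$ ranges over an interval whose width is proportional to $\Abs{\psi_{j-1}}$, a naive lower bound $f(\xi_j)\geq\udf$ degrades as $\Abs{\psi}$ grows (e.g. for Gaussian noise), so the multiplicative rate $c$ cannot be taken uniform while $\psi_{k_0}$ is only known to be $O(\ln k)$. The delicate part is therefore to pair the genuinely nonlinear ``saturation'' of $F$ for large $\phi_j^\top\psi_{j-1}$ (where the monotonicity of $F$ supplies a sign-definite drift of order $1/j$) with the quadratic contraction for small $\phi_j^\top\psi_{j-1}$, and to show that together they pull $\psi$ into the region where a fixed density lower bound applies already within the window $[k_0,k]$. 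Making this two-regime decay quantitatively strong enough to beat the initial error, using \eqref{AssumEq_SuffRichCond} over each block, is the crux of the argument.
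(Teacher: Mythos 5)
Your overall architecture is the right one and matches the paper's: condition on the good event $E_k=\{\sup_{j\geq\lfloor k^{1-\varepsilon}\rfloor}j^{\varepsilon}\Abs{T_j}\leq 1\}$, prove a deterministic inclusion there, and read the probability off \cref{lemma:T} (that last step is exactly the paper's). The gap is in the middle. Your central quantitative claim --- the window contraction $\Abs{\prod_{j=l}^{l+N-1}(I_n-A_j)}\leq 1-c/l$ with a \emph{uniform} $c>0$, hence $\Abs{\prod_{j=k_0+1}^{k}(I_n-A_j)}\lesssim(k_0/k)^c$ --- is not available at this stage of the argument, and you say so yourself: $f(\xi_j)$ has no uniform positive lower bound while $\Abs{\psi_{j-1}}$ is only known to be $O(\ln k)$ (for Gaussian noise it can be as small as $\exp(-c(\ln k)^2)$), so the variation-of-constants bound does not close. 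You correctly name the repair --- use the monotonicity/saturation of $F$ for large $\abs{\phi_j^\top\psi_{j-1}}$ --- but you leave it as "the crux" rather than executing it, and that crux is essentially the entire content of the paper's proof. Concretely, the paper never linearizes at all here: it works with the scalar Lyapunov quantity $\Abs{\psi_k}^2$ and uses only monotonicity of $F$ to extract a drift $-\frac{B}{k}\abs{\phi_k^\top\psi_{k-1}}I_{\{\abs{\phi_k^\top\psi_{k-1}}\geq b\}}$ whose constant $B$ depends only on the fixed threshold $b$ (because $C-\phi_k^\top\theta-\phi_k^\top T_{k-1}\pm b$ stays in a bounded set where $f$ is bounded below), \emph{not} on the size of $\psi$. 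Persistent excitation converts this into a per-window drift of order $\Abs{\psi}/k$ (\cref{lemma:Connection}), and the bookkeeping that this drift, accumulated over $[\lfloor k^{1-\varepsilon}\rfloor,k]$, reduces $\Abs{\psi}$ by $\Theta(\ln k)$ and therefore swamps the initial error is \cref{lemma:Seq}. No small-$\psi$ contraction regime is needed at all for this theorem.

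A second, related quantitative point: your crude initial bound $\Abs{\psi_{k_0}}=O(\ln k)$ from telescoping $\Abs{\hat{\theta}_j-\hat{\theta}_{j-1}}\leq\beta M/j$ would be enough if you had a genuine geometric-in-$\ln k$ contraction, but once the mechanism is the linear-in-$\Abs{\psi}$ saturation drift, the total reduction over the window is only $\Theta(\ln k)$ in the $\Abs{\psi}$ scale, which is the \emph{same order} as $O(\ln k)$; whether the drift wins would then hinge on unconstrained constants. The paper needs and proves the sharper bound $\Abs{\psi_{k_0}}=O(\sqrt{\ln k})$ (\cref{coro:worst_SPAO}, via the Lyapunov argument of \cref{lemma:vgx} using the boundedness of $x^\top\phi_k(F(C-\phi_k^\top\theta-\phi_k^\top x)-s_k)$ from the tails of $F$), so that the $\Theta(\ln k)$ reduction strictly dominates. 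So the proposal is not wrong in intent, but the two steps you defer --- the uniform-in-$\Abs{\psi}$ saturation drift with its summation lemma, and the $O(\sqrt{\ln k})$ a priori bound --- are precisely where the work lies.
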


\begin{proof}
	Set $k_s=\lfloor k^{1-\varepsilon}\rfloor$ and $k_s^\prime=k-N\lfloor \frac{k-k_s}{N} \rfloor$. It is worth mentioning that $ k_s^\prime\in [k_s,k_s+N-1] $, and $ k - k_s^\prime $ is divisible by $ N $. Assume that $\sup_{j\geq k_s} j^{\varepsilon}\Abs{w_j}\leq 1 $ is true in the rest of the proof. Then, it suffices to prove that $\Abs{\psi_k}^2<\Mpri$.
	
	We firstly simplify the recursive formula of $ \Abs{\psi_k}^2 $. By \eqref{recur_psi} and the monotonicity and Lipschitz continuity of $F(\cdot)$, for any positive real number $ b $, we have
	
	\begin{align}\label{recur:psi_pre}
		& \Abs{\psi_{k}}^2 \nonumber\\
		\leq & \Abs{\psi_{k-1}}^2 \! + \frac{2\beta\phi_k^\top\psi_{k-1}}{k} \! \left(F(C \! - \phi_k^\top\theta-\phi_k^\top w_{k-1}-\phi_k^\top\psi_{k-1})\right.\nonumber\\
		&\left.-F(C-\phi_k^\top\theta)\right)+\frac{2\psi_{k-1}^\top w_{k-1}}{k}+\frac{\left(\beta \lVert \phi_k\rVert+\lVert w_{k-1}\rVert\right)^2}{k^2}\nonumber\\
		=&\Abs{\psi_{k-1}}^2 \! + \frac{2\beta\phi_k^\top\psi_{k-1}}{k} \! \left(F(C \! - \phi_k^\top\theta-\phi_k^\top w_{k-1}-\phi_k^\top\psi_{k-1})\right.\nonumber\\
		&\qquad\qquad\left.-F(C-\phi_k^\top\theta-\phi_k^\top w_{k-1})\right)+O\left(k^{-1-\varepsilon/2}\right)\nonumber\\
		\leq&\Abs{\psi_{k-1}}^2+\frac{2\beta\phi_k^\top\psi_{k-1}}{k}\left(F(C-\phi_k^\top\theta-\phi_k^\top w_{k-1}-b)\right.\nonumber\\
		&\qquad\qquad\left.-F(C-\phi_k^\top\theta-\phi_k^\top w_{k-1})\right)I_{\{\phi_k^\top\psi_{k-1}\geq b\}}\nonumber\\
		&+\frac{2\beta\phi_k^\top\psi_{k-1}}{k}\left(F(C-\phi_k^\top\theta-\phi_k^\top w_{k-1}+b)\right.\nonumber\\
		&\qquad\qquad\left.-F(C-\phi_k^\top\theta-\phi_k^\top w_{k-1})\right)I_{\{\phi_k^\top\psi_{k-1}\leq -b\}}\nonumber\\
		&+O\left(k^{-1-\varepsilon/2}\right).
	\end{align}
	
	By \cref{a2} and the boundedness of $C-\phi_k^\top\theta-\phi_k^\top w_{k-1}$, there exists $ B>0 $ such that
	\begin{align*}
		-2\beta&\left(F(C-\phi_k^\top\theta-\phi_k^\top w_{k-1}-b)\right.\nonumber\\
		&\qquad\qquad\left.-F(C-\phi_k^\top\theta-\phi_k^\top w_{k-1})\right)>B,\\
		2\beta&\left(F(C-\phi_k^\top\theta-\phi_k^\top w_{k-1}+b)\right.\nonumber\\
		&\qquad\qquad\left.-F(C-\phi_k^\top\theta-\phi_k^\top w_{k-1})\right)>B,
	\end{align*}
	which together with \eqref{recur:psi_pre} implies
	\begin{equation*}
		\Abs{\psi_{k}}^2 \! \leq \Abs{\psi_{k-1}}^2-\frac{B\lvert \phi_k^\top\psi_{k-1}\rvert}{k}I_{ \{\lvert \phi_k^\top\psi_{k-1}\rvert\geq b \} }+O\!\left(k^{-1-\varepsilon/2}\right)\!.
	\end{equation*}
	
	Set $ b=\frac{\sqrt{\delta\Mpri}}{2} $. Then, we have 
	\begin{align}\label{ineq:psi^2}
		& \Abs{\psi_{k+N}}^2 \nonumber\\
		\leq& \Abs{\psi_{k}}^2 - \sum_{i=k+1}^{k+N} \frac{B\lvert \phi_i^\top\psi_{i-1}\rvert}{i}I_{ \left\{\lvert \phi_i^\top\psi_{i-1}\rvert\geq \frac{\sqrt{\delta\Mpri}}{2} \right\} } \nonumber\\
		& + \sum_{i=k+1}^{k+N}O\left(i^{-1-\varepsilon/2}\right)
	\end{align}
{
	By \eqref{defT}, \eqref{recur_psi} and \cref{a1},  
	\begin{align}\label{ineq:psi-psi}
		& \Abs{\psi_k-\psi_{k-1}} 
		\leq \frac{1}{k}\left( 2\beta \Abs{\phi_k} + \Abs{w_{k-1}} \right) \nonumber\\
		\leq & \frac{1}{k}\left( 2 \beta M + 2 \beta M \right)
		= \frac{4\beta M}{k}. 
	\end{align}
	Note that $ \Mpri > 0 $. Then, by \cref{lemma:Connection}, when $ k $ is sufficiently large, there exists $k^\prime \in [k+1,k+N]$ such that
	\begin{align*}
		\lvert \phi_{k^\prime}^\top\psi_{k^\prime-1}\rvert 
		\geq \sqrt{\frac{\delta}{2}} \Abs{\psi_k} I_{\left\{\Abs{\psi_k}^2 \geq \frac{\Mpri}{2}\right\}},
	\end{align*}
	which implies
	\begin{align}\label{seteq:sup}
		& \left\{ \lvert \phi_{k^\prime}^\top\psi_{k^\prime-1}\rvert\geq \frac{\sqrt{\delta\Mpri}}{2} \right\} \nonumber\\
		\supseteq & \left\{ \sqrt{\frac{\delta}{2}} \Abs{\psi_k} \geq \frac{\sqrt{\delta\Mpri}}{2} \right\} \cap \left\{ \Abs{\psi_k}^2 \geq \frac{\Mpri}{2} \right\} \nonumber\\
		\supseteq & \left\{ \Abs{\psi_k}^2 \geq \frac{\Mpri}{2} \right\}. 
	\end{align}
	Then, by \eqref{ineq:psi^2} and \eqref{seteq:sup},
	\begin{align}
		& \Abs{\psi_{k+N}}^2 \nonumber\\
		\leq& \Abs{\psi_{k}}^2 - \frac{B\lvert \phi_{k^\prime}^\top\psi_{k^\prime-1}\rvert}{k+N}I_{ \left\{\lvert \phi_{k^\prime}^\top\psi_{k^\prime-1}\rvert\geq \frac{\sqrt{\delta\Mpri}}{2} \right\} }\!\!+ O\left(k^{-1-\varepsilon/2}\right) \nonumber\\
		\leq& \Abs{\psi_{k}}^2 - \frac{B\sqrt{\delta}}{\sqrt{2}}\frac{\Abs{\psi_k}}{k+N}I_{ \left\{\Abs{\psi_{k}}^2 \geq \frac{\Mpri}{2}\right\} }+ O\left(k^{-1-\varepsilon/2}\right).\!\!\! 
	\end{align}
}

	Hence, when $ k=k_s^\prime + N(t-1) $, we have
	\begin{align}\label{recur:psi_simpl}
		& \Abs{\psi_{k_s^\prime+Nt}}^2 \nonumber\\
		\leq& \Abs{\psi_{k_s^\prime+N(t-1)}}^2 
		- \frac{B\sqrt{\delta}}{\sqrt{2}}\frac{\Abs{\psi_{k_s^\prime+N(t-1)}}}{k_s^\prime+Nt}I_{ \left\{\lVert\psi_{k_s^\prime+Nt}\rVert^2 \geq \frac{\Mpri}{2} \right\} }\nonumber\\
		&+O\left((k_s^\prime + Nt)^{-1-\varepsilon/2}\right).
	\end{align}
	\vspace{1pt}
	
	Since $ \lim\limits_{k\to\infty}k_s^\prime = \infty $, we have
	\begin{equation*}
		\lim\limits_{k\to\infty}\sum_{t=1}^{\infty}(k_s^\prime + Nt)^{-1-\varepsilon/2} = 0.
	\end{equation*}
	 Then, by \eqref{recur:psi_simpl} and \cref{lemma:Seq} in \cref{appen:proof}, when $ k $ is sufficiently large, we have
	\begin{equation*}
		\Abs{\psi_{k}}^2
		= \Abs{\psi_{k_s^\prime+N\lfloor \frac{k-\lfloor k^{1-\varepsilon}\rfloor}{N} \rfloor}}^2
		< \max\left\{ \Mpri , \Delta_k^2 +\frac{\Mpri}{2} \right\},
	\end{equation*}
	where
	\begin{equation}
		\Delta_k =\left( \Abs{\psi_{k_s^\prime}} - \frac{B\sqrt{\delta}}{2\sqrt{2}N}\ln \left( \frac{\lfloor \frac{k-\lfloor k^{1-\varepsilon}\rfloor}{N} \rfloor + \frac{k_s^\prime}{N}+ 1}{\frac{k_s^\prime}{N}+1} \right) \right)^+. 
	\end{equation}
	Note that $ \ln \left( \frac{\lfloor (k-\lfloor k^{1-\varepsilon}\rfloor )/N \rfloor+k_s^\prime/N+1}{k_s^\prime/N+1} \right)  $ is of the same order as $ \ln k $ since $ k_s^\prime = O (k^{1-\varepsilon}) $. 
	And, by \cref{coro:worst_SPAO} in \cref{appen:proof}, it holds that $ \psi_{k_s^\prime} = O\left(\sqrt{\ln k}\right) $. Then, when $ k $ is sufficiently large, $ \Delta_k = 0 $ and $ \Abs{\psi_k}^2 < \Mpri $, which proves the theorem.
\end{proof}

\begin{remark}
	 The distribution tail estimation of SPAO $ \psi_k $ in \cref{prop:in} can be promoted to \cref{prop:prom_psi_tail} in \cref{appen:proof}. 
\end{remark}

\begin{remark}
	It is worth noting that the constructed SPAO can not only be adapted to Algorithm \eqref{algo}, but also can be extended to a class of identification algorithms of the {binary-valued} systems. The details are given in \cref{appen:extend}. 
\end{remark}

\subsection{Estimate of the distribution tail}\label{subsec:DistTailEstm}

In this subsection, the distribution tail of the estimation error will be estimated.

\begin{theorem}\label{thm:tail}
	If the system \eqref{1} with binary observations \eqref{2} satisfies \cref{a1,a2}, then for any $\Mpri>0$ and $\varepsilon>0$, there exists $m>0$ such that
	\begin{equation*}
		\mP\left\{ \sup\limits_{j\geq k}\lVert\tilde{\theta}_j\rVert^2 \geq \Mpri \right\}=O\left(\exp(-mk^{1-\varepsilon})\right).
	\end{equation*}
\end{theorem}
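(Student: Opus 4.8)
The plan is to bound $\sup_{j\geq k}\Abs{\tilde{\theta}_j}^2$ through the decomposition $\tilde{\theta}_j=\psi_j+T_j$ and the elementary inequality $\Abs{\tilde{\theta}_j}^2\leq 2\Abs{\psi_j}^2+2\Abs{T_j}^2$, showing that a single favorable event for the averaged noise process $T_j$ simultaneously forces both $\sup_{j\geq k}\Abs{\psi_j}^2$ and $\sup_{j\geq k}\Abs{T_j}^2$ to be small. For an auxiliary parameter $\varepsilon'\in(0,\frac{1}{2})$ to be fixed at the end, I would work on the event $G_k:=\{\sup_{l\geq\lfloor k^{1-\varepsilon'}\rfloor}l^{\varepsilon'}\Abs{T_l}\leq 1\}$, whose complement is already controlled by \cref{lemma:T}.

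On $G_k$ the two terms are handled separately. For the noise term, every $j\geq k$ satisfies $j\geq\lfloor k^{1-\varepsilon'}\rfloor$ once $k$ is large, so $\Abs{T_j}\leq j^{-\varepsilon'}\leq k^{-\varepsilon'}$ and hence $\sup_{j\geq k}\Abs{T_j}^2\leq k^{-2\varepsilon'}<\Mpri/4$ for $k$ large. For the SPAO term, I would apply \cref{prop:in} with the constant $\Mpri/4$ at each index $j\geq k$: its inclusion reads $\{\Abs{\psi_j}^2<\Mpri/4\}\supseteq\{\sup_{l\geq\lfloor j^{1-\varepsilon'}\rfloor}l^{\varepsilon'}\Abs{T_l}\leq 1\}$, and since $j\geq k$ gives $\lfloor j^{1-\varepsilon'}\rfloor\geq\lfloor k^{1-\varepsilon'}\rfloor$, the event on the right contains $G_k$. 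Intersecting over all $j\geq k$ yields $G_k\subseteq\{\sup_{j\geq k}\Abs{\psi_j}^2\leq\Mpri/4\}$. Combining the two estimates on $G_k$ gives the key inclusion $G_k\subseteq\{\sup_{j\geq k}\Abs{\tilde{\theta}_j}^2<\Mpri\}$ for all sufficiently large $k$.

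It then remains to estimate $\mP(G_k^c)$. By \cref{lemma:T} with starting index $\lfloor k^{1-\varepsilon'}\rfloor$, $\mP(G_k^c)=O(\exp(-m(\lfloor k^{1-\varepsilon'}\rfloor)^{1-2\varepsilon'}))=O(\exp(-m'k^{(1-\varepsilon')(1-2\varepsilon')}))$. Since $(1-\varepsilon')(1-2\varepsilon')\to 1$ as $\varepsilon'\to 0$, given the target $\varepsilon$ (only $\varepsilon\in(0,1)$ is nontrivial) I would fix $\varepsilon'$ small enough that $(1-\varepsilon')(1-2\varepsilon')>1-\varepsilon$; then $k^{(1-\varepsilon')(1-2\varepsilon')}\geq k^{1-\varepsilon}$ for $k\geq 1$, which upgrades the bound to the claimed rate $O(\exp(-mk^{1-\varepsilon}))$.

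The main subtlety to get right is the passage from the pointwise tail estimate of \cref{prop:in} to a bound on the supremum $\sup_{j\geq k}\Abs{\psi_j}^2$: this is exactly where the particular shape of the conditioning event in \cref{prop:in}, namely a tail supremum of $T_l$ that starts at the slowly growing index $\lfloor j^{1-\varepsilon'}\rfloor$, is indispensable, since its monotonicity in $j$ is what lets the single event $G_k$ dominate the entire tail $\{j\geq k\}$ at once. The remaining bookkeeping (the factor-of-two split, absorbing constants into $\Mpri/4$, and reconciling the exponents) is routine.
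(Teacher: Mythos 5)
Your proposal is correct and follows essentially the same route as the paper: the same decomposition $\tilde{\theta}_j=\psi_j+T_j$, the same use of the single monotone event $\{\sup_{l\geq\lfloor k^{1-\varepsilon'}\rfloor}l^{\varepsilon'}\lVert T_l\rVert\leq 1\}$ to control both terms uniformly over $j\geq k$ via \cref{prop:in} and \cref{lemma:T}, and the same final shrinking of $\varepsilon'$ to reconcile the exponent $(1-\varepsilon')(1-2\varepsilon')$ with $1-\varepsilon$. The only cosmetic difference is that you intersect the direct inclusions while the paper takes the union of their complements.
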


\begin{proof}
	Reminding that $\tilde{\theta}_k=\psi_k+w_k$, by \cref{prop:in}, for sufficiently large $ k $, we have
	\begin{align*}
		& \left\{ \sup_{j\geq \lfloor k^{1-\varepsilon} \rfloor} j^{\varepsilon}\Abs{w_j}\leq 1 \right\} \nonumber\\
		\subseteq & \left\{ \lVert\psi_k\rVert^2<\frac{\Mpri}{4} \right\}\cap\left\{ \lVert w_k\rVert^2\leq\frac{\Mpri}{4} \right\}
		\subseteq \left\{ \lVert \tilde{\theta}_k \rVert^2< \Mpri \right\},
	\end{align*}
	and hence,
	\begin{align*}
		\left\{ \sup\limits_{j\geq k}\lVert\tilde{\theta}_j\rVert^2 \geq \Mpri \right\}
		\subseteq & \bigcup_{j\geq k}\left\{ \sup_{j_0\geq \lfloor j^{1-\varepsilon} \rfloor} j_0^{\varepsilon}\Abs{w_{j_0}}> 1 \right\} \nonumber\\
		= & \left\{ \sup_{j\geq \lfloor k^{1-\varepsilon} \rfloor} j^{\varepsilon}\Abs{w_j}> 1 \right\}.
	\end{align*}
	So, by \cref{lemma:T}, 
	\begin{equation*}
		\mP\{ \sup_{j\geq k}\lVert\tilde{\theta}_j\rVert^2 \geq \Mpri \}=O\left(\exp(-mk^{(1-\varepsilon)(1-2\varepsilon)})\right). 
	\end{equation*}
	Thus, the theorem can be proved by the arbitrariness of $\varepsilon$.
\end{proof}

\begin{remark}
	\cref{thm:tail} estimates the distribution tail of the estimation error $ \tilde{\theta}_k $. For the convergence analysis of identification algorithms, the existing works are usually interested in the asymptotic properties of the estimation error distribution. For example, the asymptotic normality of $ \rho_k^{-1/2}\tilde{\theta}_k $ is given for general stochastic approximation algorithms under different conditions (\!\!\cite{chen2002stochastic}, Section 3.3 and \cite{fabian1968asymptotic}). For the finite-valued system with i.i.d. inputs and designable quantizers, \cite{You2015recursive} also analyzes the asymptotic normality of the algorithm. Compared with the asymptotic normality, \cref{thm:tail} weakens the description of the estimate distribution in the neighborhood of $ \theta $, but gives a better description on the exponential tail of the estimation error. This helps to obtain the almost sure and mean square convergence of the algorithm.
\end{remark}

\begin{theorem}\label{thm:conv}
	Under the condition of \cref{thm:tail}, Algorithm \eqref{algo} converges to $\theta$ in both almost sure and mean square sense.
\end{theorem}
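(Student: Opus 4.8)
The plan is to read off both convergence modes directly from the exponential tail estimate of \cref{thm:tail}, which uniformly controls $\sup_{j\ge k}\Abs{\tilde{\theta}_j}^2$. The almost sure part will follow from a Borel--Cantelli-type argument on nested events, while the mean square part additionally requires a crude deterministic growth bound on $\Abs{\tilde{\theta}_k}$ to tame the contribution of the rare ``large-error'' event, since Algorithm~\eqref{algo} carries no projection that would keep $\tilde{\theta}_k$ bounded a priori.

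For almost sure convergence, I would fix $\Mpri>0$ and set $A_k=\{\sup_{j\ge k}\Abs{\tilde{\theta}_j}^2\ge \Mpri\}$. These events are decreasing and nested in $k$, and \cref{thm:tail} gives $\mP(A_k)=O(\exp(-mk^{1-\varepsilon}))\to 0$. Hence $\mP(\bigcap_k A_k)=\lim_k \mP(A_k)=0$, so almost surely there is a random index beyond which $\Abs{\tilde{\theta}_j}^2<\Mpri$; that is, $\limsup_k \Abs{\tilde{\theta}_k}^2\le \Mpri$ a.s. Intersecting the corresponding probability-one events over the sequence $\Mpri=1/n\downarrow 0$ yields $\limsup_k\Abs{\tilde{\theta}_k}^2=0$, i.e. $\tilde{\theta}_k\to 0$ a.s.

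For mean square convergence, I would first record the deterministic bound $\Abs{\tilde{\theta}_k}=O(\ln k)$. This comes straight from Algorithm~\eqref{algo}: since $\hat{F}_k-s_k\in[-1,1]$ and $\Abs{\phi_k}\le M$ by \cref{a1}, each increment obeys $\Abs{\hat{\theta}_k-\hat{\theta}_{k-1}}\le \beta M/k$, so $\Abs{\tilde{\theta}_k}\le \Abs{\tilde{\theta}_0}+\beta M\sum_{i=1}^k 1/i=O(\ln k)$ and $\Abs{\tilde{\theta}_k}^2=O((\ln k)^2)$. Now fix $\Mpri>0$ and split
\begin{equation*}
	\mE\Abs{\tilde{\theta}_k}^2 = \mE\!\left[\Abs{\tilde{\theta}_k}^2 I_{\{\Abs{\tilde{\theta}_k}^2<\Mpri\}}\right] + \mE\!\left[\Abs{\tilde{\theta}_k}^2 I_{\{\Abs{\tilde{\theta}_k}^2\ge \Mpri\}}\right].
\end{equation*}
The first term is at most $\Mpri$. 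For the second, the deterministic bound together with \cref{thm:tail} gives
\begin{equation*}
	\mE\!\left[\Abs{\tilde{\theta}_k}^2 I_{\{\Abs{\tilde{\theta}_k}^2\ge \Mpri\}}\right] = O\!\left((\ln k)^2\right)\cdot \mP\{\Abs{\tilde{\theta}_k}^2\ge \Mpri\} = O\!\left((\ln k)^2 \exp(-mk^{1-\varepsilon})\right)\to 0.
\end{equation*}
Therefore $\limsup_k \mE\Abs{\tilde{\theta}_k}^2\le \Mpri$, and letting $\Mpri\to 0$ proves $\mE\Abs{\tilde{\theta}_k}^2\to 0$.

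The main obstacle I anticipate lies not in the almost sure part, which is a routine consequence of the vanishing (indeed summable) tail, but in the mean square part: because there is no projection, $\tilde{\theta}_k$ is not bounded a priori, so the ``bad event'' could in principle carry a large second moment. The key that makes the argument close is that the exponential decay of the tail in \cref{thm:tail} overwhelms the merely logarithmic deterministic growth of $\Abs{\tilde{\theta}_k}$; establishing that growth bound from the step-size structure and correctly pairing it with the exponential tail is the crux of the proof.
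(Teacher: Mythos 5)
Your proof is correct and follows essentially the same route as the paper: the almost sure part is read off from the vanishing tail of the nested events $\{\sup_{j\ge k}\lVert\tilde{\theta}_j\rVert^2\ge \Mpri\}$, and the mean square part splits the expectation and kills the bad event by pairing a deterministic worst-case growth bound with the exponential tail of \cref{thm:tail}. The only difference is cosmetic: you derive the crude bound $\lVert\tilde{\theta}_k\rVert=O(\ln k)$ directly from the step size, whereas the paper invokes its sharper Lyapunov-based bound $\tilde{\theta}_k=O(\sqrt{\ln k})$ (\cref{coro:worst_EstErr}); either is overwhelmed by the exponential decay, so the argument closes the same way.
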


\begin{proof}
	The almost sure convergence can be immediately obtained by \cref{thm:tail}.
	
	By \cref{thm:tail,coro:worst_EstErr} in \cref{appen:proof}, for any $ \Mpri>0 $ and $\varepsilon>0$, there exists $m>0$ such that
	\begin{align*}
		\!\!\!\mE \lVert \tilde{\theta}_k \rVert^2
		=& \int_{\{\lVert \tilde{\theta}_k^2 \rVert < \Mpri\}}\lVert \tilde{\theta}_k \rVert^2 \td\mP + \int_{\{\lVert \tilde{\theta}_k^2 \rVert \geq \Mpri\}}\lVert \tilde{\theta}_k \rVert^2 \td\mP \nonumber\\
		<& \Mpri + O\left( \ln k \cdot \exp(-mk^{1-\varepsilon}) \right) = \Mpri + o(1). 
	\end{align*}
	Thus, the mean square convergence can be obtained by the arbitrariness of $ \Mpri $.
\end{proof}

\begin{remark}
	When the inputs are periodic, the mean square convergence of the empirical measurement method without truncation is also proved by the estimation of the distribution tail \cite{zhao2018consensus}.
	The distribution tail of the estimate is relatively easy to be obtained for the empirical measurement method, because there is a direct connection between the average of the {binary-valued} observations and the distribution tail. But, in the SA-based algorithm, the relationship is much more complicated. Therefore, SPAO is constructed to reveal the connection. 
\end{remark}


\subsection{Almost sure convergence rate}\label{subsec:ASConvRate}

This subsection will estimate the almost sure convergence rate of the SA-based algorithm.

Before the analysis, we define
\begin{equation}\label{def:udf_func}
	\udf(x) = 
	\sup\limits_{z > M\Abs{\theta}+x} \inf\limits_{t\in [C-z,C+z] } f(t)>0,\quad \forall x\geq 0,
\end{equation}
and
\begin{equation}\label{def:udf}
	\udf = \udf(0).
\end{equation}
The convergence rate of the algorithm depends on $ \udf $.

\begin{remark}\label{remark:udf}
	Under \cref{a2}, $ \udf $ is the lower bound of $ f(C-\phi_k^\top\theta) $ for all possible regressors $ \phi_k $. The following lemma gives properties of $ \udf(\cdot) $ and $ \udf $.	
\end{remark}

{\begin{lemma}\label{lemma:udf}
	Under \cref{a2}, $ \udf(\cdot) $ and $ \udf $ have the following properties. 
	\begin{enumerate}[label={(\alph*)}]
		\item $\udf(\cdot)$ is non-increasing and right continuous;
		\item $ \lim\limits_{x\to 0}\udf(x) = \udf $; 
		\item $ \udf(x) \leq
		\inf_{t\in [C-M\Abs{\theta}-x,C+M\Abs{\theta}+x] } f(t) $; 
		\item If, in addition, $f(\cdot)$ is locally Lipschitz continuous, then so is $\udf(\cdot)$.
	\end{enumerate}
\end{lemma}

The proof is given in \cref{appen:proof}. }

The almost sure convergence rate of the algorithm can be achieved through that of $ \psi_k $.

\begin{theorem}\label{prop:psi_as_rate}
	Under the condition of \cref{prop:dist}, for any $ \varepsilon>0 $, we have
	\begin{equation}\label{rate:as}
		\psi_k =
		\begin{cases}
			O\left(\sqrt{\frac{\ln\ln k}{k}}\right), & \eta>\frac{1}{2};\\
			O\left(\frac{1}{k^{\eta-\varepsilon}}\right), & \eta\leq\frac{1}{2},
		\end{cases}\quad \text{a.s.,}
	\end{equation}
	where $ \eta = \beta\delta\udf $ with $ \udf $ defined in \eqref{def:udf}. If $ f(\cdot) $ is assumed to be locally Lipschitz continuous, then the almost sure convergence rate can be promoted into
	\begin{equation}\label{rate:as_promote}
		\psi_k =
		\begin{cases}
			O\left(\sqrt{\frac{\ln\ln k}{k}}\right), & \eta>\frac{1}{2};\\
			O\left(\ln k\sqrt{\frac{\ln\ln k}{k}}\right), & \eta=\frac{1}{2};\\
			O\left(\frac{1}{k^{\eta}}\right), & \eta<\frac{1}{2},
		\end{cases}\quad \text{a.s.} 
	\end{equation}
\end{theorem}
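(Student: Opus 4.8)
The plan is to linearize the SPAO recursion \eqref{recur_psi}, collapse the vector dynamics to a scalar recursion for $\lVert\psi_k\rVert$ over windows of length $N$, and then solve that recursion by variation of constants driven by the iterated-logarithm bound on $T_k$ furnished by \cref{prop:dist}. First I would apply the mean value theorem to the increment $F(C-\phi_k^\top\theta-\phi_k^\top\psi_{k-1}-\phi_k^\top T_{k-1})-F(C-\phi_k^\top\theta)$, writing it as $-f(\xi_k)\phi_k^\top(\psi_{k-1}+T_{k-1})$ for an intermediate point $\xi_k$. Since $\psi_{k-1}\to 0$ and $T_{k-1}\to 0$ almost surely (from \cref{thm:conv} and \cref{prop:dist}), $\xi_k$ stays in a shrinking neighborhood of $C-\phi_k^\top\theta$, so $f(\xi_k)\geq\udf(\abs{\phi_k^\top\psi_{k-1}}+\abs{\phi_k^\top T_{k-1}})\to\udf$ by the properties collected in \cref{remark:udf}. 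Expanding $\lVert\psi_k\rVert^2$ then produces, per step, a contracting term $-\tfrac{2\beta f(\xi_k)}{k}(\phi_k^\top\psi_{k-1})^2$, cross terms of order $\tfrac{1}{k}\lVert\psi_{k-1}\rVert\,\lVert T_{k-1}\rVert$, and a remainder of order $k^{-2}$.

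Next I would sum over a window $i=k+1,\dots,k+N$. The within-window variation of $\psi_i$ is $O(1/k)$, so the persistent excitation bound \eqref{AssumEq_SuffRichCond} applied at $\psi_k$ gives $\sum_i(\phi_i^\top\psi_{i-1})^2\geq N\delta\lVert\psi_k\rVert^2+O(1/k)$, turning the accumulated contraction into a factor $1-\tfrac{2N\eta_k}{k}$ with $\eta_k=\beta\delta\,\udf(\cdot)\to\eta$. The crucial simplification is to pass from $\lVert\psi_{k+N}\rVert^2$ back to $\lVert\psi_{k+N}\rVert$ by completing the square, using $\lVert\psi_k\rVert^2\bigl(1-\tfrac{2N\eta_k}{k}\bigr)+\tfrac{cN}{k}\lVert\psi_k\rVert\tau_k\leq\bigl(\lVert\psi_k\rVert\bigl(1-\tfrac{N\eta_k}{k}\bigr)+\tfrac{cN\tau_k}{2k}\bigr)^2+O(k^{-3})$, where $\tau_k$ bounds $\lVert T_{i-1}\rVert$ over the window. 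This removes the apparent $1/\lVert\psi_k\rVert$ singularity and yields a genuinely linear recursion $\lVert\psi_{k+N}\rVert\leq\lVert\psi_k\rVert\bigl(1-\tfrac{N\eta_k}{k}\bigr)+\tfrac{cN\tau_k}{2k}$.

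Iterating, the homogeneous factor from $j$ to $k$ is $\prod\bigl(1-\tfrac{N\eta}{\cdot}\bigr)\asymp(j/k)^{\eta}$, so $\lVert\psi_k\rVert\lesssim k^{-\eta}\lVert\psi_{k_0}\rVert+k^{-\eta}\sum_j j^{\eta-1}\tau_j$. Inserting the bound $\tau_j=O(\sqrt{\ln\ln j/j})$ on $T_j$ from \cref{prop:dist} gives the model sum $k^{-\eta}\sum_j j^{\eta-3/2}\sqrt{\ln\ln j}$, whose order is $\sqrt{\ln\ln k/k}$ when $\eta>\tfrac12$ (the upper tail of the sum dominates), $\ln k\sqrt{\ln\ln k/k}$ when $\eta=\tfrac12$ (harmonic sum), and $k^{-\eta}$ when $\eta<\tfrac12$ (convergent sum); this is precisely \eqref{rate:as_promote}. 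For \eqref{rate:as} one knows only $\eta_k\to\eta$, so to preserve the contraction factor one must replace $\eta$ by $\eta-\varepsilon$ for large $k$, which costs the extra $k^{\varepsilon}$ in the regime $\eta\leq\tfrac12$ and absorbs the critical case into $O(k^{-\eta+\varepsilon})$; when $\eta>\tfrac12$ one still picks $\varepsilon$ small enough that $\eta-\varepsilon>\tfrac12$ and recovers $\sqrt{\ln\ln k/k}$. Under local Lipschitz continuity, \cref{remark:udf}\ref{remarkitem:udf_d} gives $\udf(x)\geq\udf-Lx$, hence $\eta_k\geq\eta-L'(\lVert\psi_{k-1}\rVert+\lVert T_{k-1}\rVert)$ with a quantitatively controllable gap, which lets me keep the sharp exponent $\eta$ and obtain the finer boundary rate at $\eta=\tfrac12$.

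The main obstacle is the window reduction of the second and third paragraphs: turning the rank-one, direction-by-direction contractions $I-\tfrac{\beta f(\xi_i)}{i}\phi_i\phi_i^\top$ into the isotropic factor $1-\tfrac{2N\eta_k}{k}$ requires simultaneously controlling the within-window drift of $\psi_i$ (so that \eqref{AssumEq_SuffRichCond} may be invoked at the single point $\psi_k$) and the convergence $\udf(\cdot)\to\udf$, the latter being a bootstrap that relies on the already-established $\psi_k\to 0$. Making the $\udf$-gap explicit enough to separate the sharp Lipschitz rate from the $\varepsilon$-lossy general rate is the most delicate accounting, and the a priori bound $\psi_{k_s^\prime}=O(\sqrt{\ln k})$ from \cref{coro:worst_SPAO} is what lets me start the iteration at a controlled initial value.
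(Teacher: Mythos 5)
Your overall strategy coincides with the paper's: linearize via the mean value theorem, aggregate over windows of length $N$ using the persistent excitation condition and the function $\udf(\cdot)$ with $\udf_{k|N}\to\udf$, reduce to the scalar recursion $\Abs{\psi_k}\leq(1-\tfrac{N\eta_k}{k})\Abs{\psi_{k-N}}+O(\sqrt{\ln\ln k/k^3})$, and solve it with the three-regime sum estimate (the paper's \cref{lemma:gen_zhao18}); your treatment of the $\varepsilon$-loss in \eqref{rate:as} versus the convergent perturbation product $\prod_i\bigl(1+\tfrac{\beta\delta N}{i}(\udf-\udf_{i|N})\bigr)$ under local Lipschitz continuity for \eqref{rate:as_promote} is exactly the paper's \eqref{ineq:1-udf}--\eqref{recur:psi/prod}. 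The one genuine divergence is the window reduction: the paper stays with the vector recursion and bounds the operator norm of $\prod_{i=k-N+1}^{k}\bigl(I_n-\tfrac{\beta f(\xi_i)}{i}\phi_i\phi_i^\top\bigr)$ directly (inequality \eqref{ineq:matrix_prod}), so the noise enters additively and no square root is ever taken, whereas you expand $\Abs{\psi_k}^2$, apply persistent excitation to the quadratic form, and complete the square to return to $\Abs{\psi_k}$.

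That substitution can be made to work, but your remainder bookkeeping, as written, breaks it. You record a ``remainder of order $k^{-2}$'' per step and an additive ``$O(1/k)$'' in the persistent-excitation step $\sum_i(\phi_i^\top\psi_{i-1})^2\geq N\delta\Abs{\psi_k}^2+O(1/k)$. If these are absolute errors, then after completing the square you face $\Abs{\psi_{k+N}}^2\leq A^2+O(k^{-2})$ with $A=\Abs{\psi_k}(1-\tfrac{N\eta_k}{k})+\tfrac{cN\tau_k}{2k}$, and the only available bound $\sqrt{A^2+\epsilon}\leq A+\sqrt{\epsilon}$ (since $A\to 0$, you cannot use $\sqrt{A^2+\epsilon}\leq A+\epsilon/(2A)$) converts the $O(k^{-2})$ into an $O(k^{-1})$ driving term in the linear recursion, which yields only $\Abs{\psi_k}=O(1)$ rather than any of the claimed rates. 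The fix is to track the remainders multiplicatively: the quadratic increment is $O\bigl((\Abs{\psi_{k-1}}^2+\Abs{T_{k-1}}^2)/k^2\bigr)$ and the within-window drift error is $O\bigl(\Abs{\psi_k}(\Abs{\psi_k}+\tau_k)/k\bigr)$, so that after completion of the square the leftover is $O(\tau_k^2/k^2)$ and the square root contributes only $O(\tau_k/k)=O(\sqrt{\ln\ln k}/k^{3/2})$, matching the paper's driving term. This is precisely the nuisance the paper's operator-norm route avoids; with that accounting corrected, your argument goes through and is otherwise equivalent to the paper's.
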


\begin{proof}
	The proof is based on \cref{lemma:gen_zhao18} in \cref{appen:proof}.
	
	For the proof of \eqref{rate:as}, we firstly simplify the recursive formula of $ \Abs{\psi_k} $. {By \cref{a2}, $ F(\cdot) $ is Lipschitz continuous, which implies $ \sup_{x\in\mathbb{R}} f(x) < \infty $.}
	Then, in \eqref{recur_psi}, by Lagrange mean value theorem (\!\!\! \cite{zorich}, Theorem 5.3.1), there exist $ \xi_k $ between $ C-\phi_k^\top\theta-\phi_k^\top\psi_{k-1} $ and $ C-\phi_k^\top\theta $, {and $ \xi_k^\prime $ between $ C-\phi_k^\top\theta-\phi_k^\top w_{k-1}-\phi_k^\top\psi_{k-1} $ and $ C-\phi_k^\top\theta-\phi_k^\top\psi_{k-1} $} such that
	{\begin{align}\label{eq:Lagrange}
		&F(C-\phi_k^\top\theta-\phi_k^\top w_{k-1}-\phi_k^\top\psi_{k-1}) - F(C-\phi_k^\top\theta)\nonumber\\
		= & F(C-\phi_k^\top\theta-\phi_k^\top w_{k-1}-\phi_k^\top\psi_{k-1}) \nonumber\\
		& - F(C-\phi_k^\top\theta-\phi_k^\top\psi_{k-1}) \nonumber\\
		& + F(C-\phi_k^\top\theta-\phi_k^\top\psi_{k-1}) - F(C-\phi_k^\top\theta)\nonumber\\
		= & f(\xi_k^\prime) \phi_k^\top w_{k-1} + f(\xi_k)\phi_k^\top\psi_{k-1} \nonumber\\
		=& f(\xi_k)\phi_k^\top\psi_{k-1}+O\left(w_{k-1}\right).
	\end{align}}
	Then, by the law of iterated logarithm (\!\cite{YSC}, Theorem 10.2.1),
	\begin{align*}
		& F(C-\phi_k^\top\theta-\phi_k^\top w_{k-1}-\phi_k^\top\psi_{k-1}) - F_k \nonumber\\
		= & f(\xi_k)\phi_k^\top\psi_{k-1}+O\left(\sqrt{\frac{\ln \ln k}{k}}\right),\ \text{a.s.},
	\end{align*}
	which together with \eqref{recur_psi} implies
	\begin{equation*}
		\psi_k
		= \left( I_n - \frac{\beta f(\xi_k)}{k} \phi_k \phi_k^\top \right) \psi_{k-1} +O\left(\sqrt{\frac{\ln \ln k}{k^3}}\right), \quad \text{a.s.}
	\end{equation*}
	Note that except for the first few steps, we have
	\begin{align}\label{ineq:matrix_prod}
		&\Abs{\prod_{i=k-N+1}^k\left(I_n - \frac{\beta f(\xi_i)}{i} \phi_i \phi_i^\top \right)}\nonumber\\
		\leq& \Abs{I_n - \frac{\beta}{k}\sum_{i=k-N+1}^k f(\xi_i)\phi_i \phi_i^\top } + O\left(\frac{1}{k^2}\right)\nonumber\\
		\leq & \Abs{I_n - \frac{\beta}{k}\udf\left( \max\limits_{k-N<i\leq k} \! \abs{\phi_i^\top\psi_{i-1}} \right) \! \sum_{i=k-N+1}^k \!\!\! \phi_i \phi_i^\top } + O\left(\frac{1}{k^2}\right)\nonumber\\
		\leq& \left( 1 - \frac{\beta \delta N}{k} \udf\left( \max\limits_{k-N<i\leq k} \abs{\phi_i^\top\psi_{i-1}} \right) \right) + O\left(\frac{1}{k^2}\right),
	\end{align}
	where $ \udf(\cdot) $ is defined in \eqref{def:udf_func}. Denote
	\begin{equation*}\label{def:udf_k|N}
		\udf_{k|N} = \udf\left( \max\limits_{k-N<i\leq k} \abs{\phi_i^\top\psi_{i-1}} \right).
	\end{equation*}
	Then, we have
	\begin{equation}\label{recur:psi_abs_model}
		\Abs{\psi_k}
		\leq  \left( 1 - \frac{\beta \delta N}{k}\udf_{k|N} \right) \Abs{\psi_{k-N}} + O\left(\sqrt{\frac{\ln \ln k}{k^3}}\right), \quad \text{a.s.}
	\end{equation}
	
	By (b) of \cref{lemma:udf}, $ \lim\limits_{k\to\infty}\udf_{k|N} = \udf  $ almost surely. 
	{Then, there exists $ \epp \in (0, \min\{\eta-\frac{1}{2},\varepsilon\}) $ if $ \eta > \frac{1}{2} $, and $ \epp \in (0,\min\{\eta,\varepsilon\}) $ otherwise.}
	Therefore,  there almost surely exists $ k_a $ such that for all $ k\geq k_a $, we have $ \beta\delta\udf_{k|N}>\beta\delta\udf-\varepsilon_1 = \eta - \epp $, and thus by \eqref{recur:psi_abs_model},
	\begin{equation*}
		\Abs{\psi_k}
		\leq  \left( 1 - \frac{N}{k}(\eta-\epp)  \right) \Abs{\psi_{k-N}} + O\left(\sqrt{\frac{\ln \ln k}{k^3}}\right),\quad \text{a.s.}
	\end{equation*}
	
	{If} $ k-k_a $ is divisible by $ N $, then by \cref{lemma:gen_zhao18} in \cref{appen:proof}, one can get
	\begin{align*}
		& \Abs{\psi_k}\nonumber\\
		\leq& \prod_{i=1}^{\frac{k-k_a}{N}}\left(1-\frac{N(\eta-\epp)}{k_a+iN}\right)\Abs{\psi_{k_a}}\nonumber\\
		&+ O\left(\sum_{l=1}^{\frac{k-k_a}{N}} \prod_{i=l+1}^{\frac{k-k_a}{N}} \left(1-\frac{N(\eta-\epp)}{k_a+iN}\right) \sqrt{\frac{\ln \ln (k_a+lN)}{(k_a+lN)^3}} \right)\nonumber\\
		=& O\left(\frac{1}{k^{\eta-\epp}}\right) \nonumber\\ 
		& + O\left( \sum_{l=1}^{\frac{k-k_a}{N}} \prod_{i=l+1}^{\frac{k-k_a}{N}} \left(1-\frac{N(\eta-\epp)}{k_a+iN}\right) \sqrt{\frac{\ln \ln (l+2)}{l^3}}\right)\nonumber\\
		=& \begin{cases}
			O\left(\sqrt{\frac{\ln\ln k}{k}}\right), & \eta-\epp>\frac{1}{2};\\
			O\left(\ln k\sqrt{\frac{\ln\ln k}{k}}\right), & \eta-\epp=\frac{1}{2};\\
			O\left(\frac{1}{k^{\eta-\epp}}\right), & \eta-\epp<\frac{1}{2},
		\end{cases}\quad \text{a.s.}
	\end{align*}
	{By the settings of $ \epp $, we have $ \epp \leq \varepsilon $, and $ \eta - \epp > \frac{1}{2} $ if and only if $ \eta > \frac{1}{2} $. Therefore, 
	\begin{align}\label{ineq:rate_psi_divisible}
		\Abs{\psi_k} = 
		\begin{cases}
			O\left(\sqrt{\frac{\ln\ln k}{k}}\right), & \eta>\frac{1}{2};\\
			O\left(\frac{1}{k^{\eta-\varepsilon}}\right), & \eta\leq\frac{1}{2}, 
		\end{cases}\quad \text{a.s.}
	\end{align}}

	{If $ k-k_a $ is not divisible by $ N $, then there exists an integer $ \kappa \in [k-N+1,k] $ such that $ \kappa - k_a $ is divisible by $ N $. By \eqref{ineq:psi-psi}, $ \Abs{\psi_k - \psi_\kappa} \leq \sum_{i=\kappa+1}^{k} \Abs{\psi_i - \psi_{i-1}} \leq \frac{4\beta M (k-\kappa-1)}{k} \leq \frac{4\beta M N}{k} $. Hence, by \eqref{ineq:rate_psi_divisible}, 
	\begin{align*}
		& \Abs{\psi_k} 
		= \Abs{\psi_k - \psi_\kappa}+\Abs{\psi_\kappa} \nonumber\\
		= &\begin{cases}
			O\left(\sqrt{\frac{\ln\ln k}{k}}\right) + O\left( \frac{1}{k} \right), & \eta>\frac{1}{2};\\
			O\left(\frac{1}{k^{\eta-\varepsilon}}\right) + O\left( \frac{1}{k} \right), & \eta\leq\frac{1}{2}, 
		\end{cases}\quad \text{a.s.} \nonumber\\
		= & \begin{cases}
			O\left(\sqrt{\frac{\ln\ln k}{k}}\right), & \eta>\frac{1}{2};\\
			O\left(\frac{1}{k^{\eta-\varepsilon}}\right), & \eta\leq\frac{1}{2}, 
		\end{cases}\quad \text{a.s.}
	\end{align*}
} 
	\eqref{rate:as} is thereby proved.
	
	Then, we now prove \eqref{rate:as_promote}. For sufficiently large $ k $,
	\begin{equation}\label{ineq:1-udf}
		1-\frac{\beta \delta N}{k}\udf_{k|N}\leq \left( 1 + \frac{\beta \delta N}{k}\left(\udf- \udf_{k|N}\right)\right)\left(1-\frac{\beta \delta N}{k}\udf\right).
	\end{equation}
	which together with \eqref{recur:psi_abs_model} implies
	\begin{align}\label{recur:psi/prod}
		&\left.\Abs{\psi_k}\middle/\prod_{i=N+1}^k \left( 1 + \frac{\beta\delta N}{i}\left(\udf- \udf_{i|N}\right)\right)\right. \nonumber\\
		\leq & \left(1-\frac{N\eta}{k}\right) \left.\Abs{\psi_{k-N}}\middle/\prod_{i=N+1}^{k-N} \left( 1 + \frac{\beta\delta N}{i}\left(\udf- \udf_{i|N}\right)\right)\right. \nonumber\\
		& + O\left(\sqrt{\frac{\ln \ln k}{k^3}}\right),\quad \text{a.s.}
	\end{align}
	By \eqref{rate:as} and (d) of \cref{lemma:udf}, $ \udf- \udf_{i|N} $ converges to $ 0 $ at a polynomial rate. Hence, we have 
	\begin{equation*}
		\prod_{i=N+1}^\infty \left( 1 + \frac{\beta\delta N}{i}\left(\udf- \udf_{i|N}\right)\right)<\infty.
	\end{equation*}
	Then, \eqref{rate:as_promote} can be proved by \eqref{recur:psi/prod} and \cref{lemma:gen_zhao18}.
\end{proof}


Then, the almost sure convergence rate of the algorithm can be obtained by \cref{prop:psi_as_rate}. 

\begin{theorem}\label{thm:as_rate}
	Under the condition of \cref{thm:tail}, for any $ \varepsilon>0 $,
	\begin{equation*}
		\tilde{\theta}_k =
		\begin{cases}
			O\left(\sqrt{\frac{\ln\ln k}{k}}\right), & \eta>\frac{1}{2};\\
			O\left(\frac{1}{k^{\eta-\varepsilon}}\right), & \eta\leq\frac{1}{2},
		\end{cases}\quad \text{a.s,}
	\end{equation*}
	where $ \eta = \beta\delta\udf $ with $ \udf $ defined in \eqref{def:udf}. If the density function $ f(\cdot) $ is assumed to be locally Lipschitz continuous, then the almost sure convergence rate can be promoted into
	\begin{equation*}
		\tilde{\theta}_k =
		\begin{cases}
			O\left(\sqrt{\frac{\ln\ln k}{k}}\right), & \eta>\frac{1}{2};\\
			O\left(\ln k\sqrt{\frac{\ln\ln k}{k}}\right), & \eta=\frac{1}{2};\\
			O\left(\frac{1}{k^{\eta}}\right), & \eta<\frac{1}{2},
		\end{cases}\quad \text{a.s.} 
	\end{equation*}
\end{theorem}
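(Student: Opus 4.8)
The plan is to read off the rate for $\tilde{\theta}_k$ directly from the already-established rate for the auxiliary process $\psi_k$ in \cref{prop:psi_as_rate}, controlling the residual $T_k=\tilde{\theta}_k-\psi_k$ separately. Recalling the SPAO decomposition $\tilde{\theta}_k=\psi_k+T_k$, the triangle inequality gives $\Abs{\tilde{\theta}_k}\leq\Abs{\psi_k}+\Abs{T_k}$ almost surely, so it suffices to bound each summand and retain the slower-decaying of the two rates in every regime of $\eta$.

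First I would invoke \cref{prop:dist}(b), which states $\Abs{\tilde{\theta}_k-\psi_k}=\Abs{T_k}=O(\sqrt{\ln\ln k/k})$ almost surely; this is the universal contribution of the noise-averaging term and is independent of $\eta$. Next I would substitute the bound for $\Abs{\psi_k}$ from \cref{prop:psi_as_rate} case by case. In the regime $\eta>\tfrac12$, both $\Abs{\psi_k}$ and $\Abs{T_k}$ are $O(\sqrt{\ln\ln k/k})$, so the sum inherits that rate. In the regime $\eta\leq\tfrac12$ (general $f$) we have $\Abs{\psi_k}=O(k^{-(\eta-\varepsilon)})$; since $\eta-\varepsilon<\tfrac12$, the factor $k^{-(\eta-\varepsilon)}$ decays strictly slower than $k^{-1/2}\sqrt{\ln\ln k}$, so $\Abs{\psi_k}$ dominates $\Abs{T_k}$ and the sum is $O(k^{-(\eta-\varepsilon)})$. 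This yields the first displayed formula.

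For the promoted (locally Lipschitz) rate I would repeat the same dominance check against the refined bounds for $\Abs{\psi_k}$ in the second half of \cref{prop:psi_as_rate}: for $\eta=\tfrac12$ the extra $\ln k$ factor makes $O(\ln k\sqrt{\ln\ln k/k})$ dominate $O(\sqrt{\ln\ln k/k})$; for $\eta<\tfrac12$ the bound $O(k^{-\eta})$ again outpaces $O(k^{-1/2}\sqrt{\ln\ln k})$ because $\eta<\tfrac12$; and in the case $\eta>\tfrac12$ the two contributions coincide as before. Collecting the dominant rate in each branch gives the second displayed formula.

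The main obstacle is clerical rather than conceptual: the only point requiring care is the rate comparison in each branch, i.e.\ verifying that the residual $T_k=O(\sqrt{\ln\ln k/k})$ is never the slower-decaying of the two terms except in the fast regime $\eta>\tfrac12$, where it merely ties. The genuine analytic work --- establishing the per-block contraction inequality for $\Abs{\psi_k}$ and running the product estimate in \cref{lemma:gen_zhao18} --- has already been carried out in \cref{prop:psi_as_rate}, so this theorem is a short corollary-style combination of that rate with \cref{prop:dist}(b).
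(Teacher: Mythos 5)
Your proposal is correct and matches the paper's own argument exactly: the paper proves \cref{thm:as_rate} by combining the SPAO decomposition $\tilde{\theta}_k=\psi_k+T_k$ with the bound $T_k=O(\sqrt{\ln\ln k/k})$ a.s.\ from \cref{prop:dist} and the rates for $\psi_k$ from \cref{prop:psi_as_rate}. Your branch-by-branch dominance check is the right (and only) bookkeeping needed, so nothing is missing.
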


\begin{proof}
	The theorem can be obtained by \cref{prop:dist,prop:psi_as_rate}.
\end{proof}

\begin{remark}
	By \cref{thm:as_rate}, the algorithm may not achieve the optimal almost sure convergence rate when the coefficient $ \eta $ is smaller than $ 1/2 $. Since $ \eta = \beta\delta\udf $, the convergence rate of the algorithm depends on the step-size, the inputs, the noise distribution and the relationship between the threshold $ C $ and $ M\Abs{\theta} $. However, $ M\Abs{\theta} $ relies on the true parameter $ \theta $. Thus, the almost sure convergence rate of Algorithm \eqref{algo} cannot be known without \emph{a priori} information on $ \theta $. The problem can be solved if the step-size is designed as $ \rho_k = \beta_k/k $, where
	\begin{equation}
		\beta_k > \left. 1 \middle/ \left( 2\delta\sup\limits_{z>M\lVert\hat{\theta}_k\rVert} \inf\limits_{t\in [C-z,C+z] } f(t)  \right) \right. .
	\end{equation}
	The analysis for the modified algorithm is similar to the algorithm with time-invariant $ \beta $.
\end{remark}

\begin{remark}
	For the identification problem of stochastic finite-valued systems, $ O(\sqrt{\ln \ln k/k}) $ is the best  almost sure convergence rate. In the periodic input case, the empirical measurement algorithm in \cite{Wang2003System} generates a maximum likelihood estimate (\!\! \cite{godoy2011on}, Lemma 4). The almost sure convergence rate of the empirical measurement algorithm is $ O(\sqrt{\ln \ln k/k}) $ \cite{Mei2014almost}. In the non-periodic input case, \cref{thm:as_rate} appears to be the first to achieve the almost sure convergence rate of $ O(\sqrt{\ln \ln k /k }) $ theoretically. \cite{guo2013recursive} achieves the almost sure convergence rate of $ O(\sqrt{\ln k/k}) $ for the recursive projection method. And, the almost sure convergence rate of stochastic approximation algorithms with expanding truncations is $ O(1/k^{\varepsilon}) $ for $ \varepsilon\in(0,1/2) $ \cite{Song2018recursive}. When properly selecting $ \beta $, the almost sure convergence rate of Algorithm \eqref{algo} is better than both of them.
\end{remark}


\subsection{Mean square convergence rate}\label{subsec:M2ConvRate}

This subsection will estimate the mean square convergence rate of the SA-based algorithm.

\begin{theorem}\label{thm:m2_rate}
	Under the condition of \cref{thm:tail}, for any $ \varepsilon>0 $,
	\begin{equation}\label{rate:the_ms}
		\mE\lVert\tilde{\theta}_k\rVert^2 =
		\begin{cases}
			O\left(\frac{1}{k}\right), & \eta>\frac{1}{2};\\
			O\left(\frac{1}{k^{2\eta-\varepsilon}}\right), & \eta\leq\frac{1}{2},
		\end{cases}
	\end{equation}
	where $ \eta = \beta\delta\udf $ with $ \udf $ defined in \eqref{def:udf}. If $ f(\cdot) $ is assumed to be locally Lipschitz continuous, then the mean square convergence rate can be promoted into
	\begin{equation}\label{rate:the_ms_prom}
		\mE\lVert\tilde{\theta}_k\rVert^2 =
		\begin{cases}
			O\left(\frac{1}{k}\right), & \eta>\frac{1}{2};\\
			O\left(\frac{\ln k}{k}\right), & \eta=\frac{1}{2};\\
			O\left(\frac{1}{k^{2\eta}}\right), & \eta<\frac{1}{2}.
		\end{cases}
	\end{equation}
\end{theorem}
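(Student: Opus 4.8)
The plan is to reduce the mean square rate of $\tilde{\theta}_k$ to that of the SPAO $\psi_k$ and then to solve a scalar recursion for $a_k:=\mE\Abs{\psi_k}^2$. Since $\tilde{\theta}_k=\psi_k+T_k$ and $\mE\Abs{T_k}^2=O(1/k)$ by part (c) of \cref{prop:dist}, the bound $\mE\Athetasq{k}\leq 2a_k+2\mE\Abs{T_k}^2=2a_k+O(1/k)$ shows that it suffices to establish \eqref{rate:the_ms}--\eqref{rate:the_ms_prom} for $a_k$. The backbone of the argument mirrors the almost sure analysis of \cref{prop:psi_as_rate}, but is carried out for second moments, so that the effective drift coefficient becomes $2\eta$ instead of $\eta$ and the mean square rate is, up to the endpoint factors, the square of the almost sure one.

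First I would derive an $N$-step pathwise recursion for $\Abs{\psi_k}^2$. Expanding \eqref{recur_psi} and applying the Lagrange mean value theorem exactly as in \eqref{eq:Lagrange}, the leading increment is the negative quadratic drift $-\frac{2\beta f(\xi_i)}{i}(\phi_i^\top\psi_{i-1})^2$, while $T_{i-1}$ enters either linearly (through $\psi_{i-1}^\top T_{i-1}$ and a Lipschitz remainder) or quadratically. Summing over a window $i\in[k+1,k+N]$ and invoking the uniform persistent excitation condition \eqref{AssumEq_SuffRichCond} turns $\sum_i(\phi_i^\top\psi_{i-1})^2$ into a term bounded below by $N\delta\Abs{\psi_k}^2$, up to window-approximation errors of order $1/k$, and lower-bounding $f(\xi_i)$ by $\udf_{k|N}$ gives
\begin{equation*}
	\Abs{\psi_{k+N}}^2\leq\Bigl(1-\frac{2N\beta\delta\udf_{k|N}}{k}\Bigr)\Abs{\psi_k}^2+R_k,
\end{equation*}
where $R_k$ collects all $T$-dependent terms. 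The device that controls $R_k$ in expectation is Young's inequality: writing $\frac{2\psi_{k-1}^\top T_{k-1}}{k}\leq\frac{\gamma\Abs{\psi_{k-1}}^2}{k}+\frac{\Abs{T_{k-1}}^2}{\gamma k}$ (and similarly for the Lipschitz remainder), the first piece is absorbed into the drift with $\gamma$ arbitrarily small, while $\mE\Abs{T_{k-1}}^2=O(1/k)$ makes the second piece $O(1/k^2)$; the genuinely quadratic term is $O(1/k^2)$ as well. This removes the $\sqrt{a_k}$ coupling that a bare Cauchy--Schwarz bound would introduce.

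The main obstacle is taking expectations while the drift coefficient $\udf_{k|N}$ is only known to converge to $\udf$ almost surely (part (b) of \cref{remark:udf}), not deterministically. I would resolve this by splitting the expectation over the event $\{\udf_{k|N}\geq\udf-\varepsilon\}$---on which the drift coefficient is at least $\frac{2N(\eta-\varepsilon)}{k}$---and its complement. By monotonicity of $\udf(\cdot)$ the complement forces $\Abs{\psi_{i-1}}$ to be large, so by the exponential tail estimates of \cref{prop:in,thm:tail} it has super-polynomially small probability; combined with the worst-case bound $\Abs{\psi_{k+N}}^2=O(\ln k)$ from \cref{coro:worst_SPAO}, its contribution is $O(\ln k\cdot\exp(-mk^{(1-\varepsilon)(1-2\varepsilon)}))$ and hence negligible, exactly as in the mean square part of \cref{thm:conv}. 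This yields the clean deterministic recursion
\begin{equation*}
	a_{k+N}\leq\Bigl(1-\frac{2N(\eta-\varepsilon)}{k}\Bigr)a_k+O(1/k^2),
\end{equation*}
to which \cref{lemma:gen_zhao18} applies with drift coefficient $2(\eta-\varepsilon)$ and remainder order $p=2$: it gives $a_k=O(1/k)$ when $2(\eta-\varepsilon)>1$ and $a_k=O(1/k^{2(\eta-\varepsilon)})$ otherwise, and the arbitrariness of $\varepsilon$ produces \eqref{rate:the_ms}.

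Finally, for the promotion \eqref{rate:the_ms_prom} under local Lipschitz continuity of $f$, I would replace the crude slack $\udf_{k|N}\geq\udf-\varepsilon$ by the Lipschitz estimate $f(\xi_i)\geq\udf-L\abs{\phi_i^\top\psi_{i-1}}$ (using part (d) of \cref{remark:udf}), which converts the $\varepsilon$-loss into an explicit higher-order term of size $O(\Abs{\psi_{i-1}}^3/k)$ in the recursion. Bounding its expectation through the already-proved rate \eqref{rate:the_ms} together with the exponential tail gives $\mE\Abs{\psi}^3=O(1/k^{3\eta-\varepsilon})$, so this term has remainder order $p-1=3\eta-\varepsilon>2\eta$ and is dominated by the drift; applying \cref{lemma:gen_zhao18} with the exact coefficient $2\eta$ then delivers $O(1/k^{2\eta})$ for $\eta<\frac12$ and the boundary rate $O(\ln k/k)$ at $\eta=\frac12$, completing \eqref{rate:the_ms_prom}.
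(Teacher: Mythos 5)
Your overall architecture (reduce to the SPAO $\psi_k$, window the recursion over $N$ steps, split the expectation on the event where $\udf_{k|N}$ is close to $\udf$, and invoke \cref{lemma:gen_zhao18}) parallels the paper's, and it does deliver the non-promoted rate \eqref{rate:the_ms}, since there the $\varepsilon$-slack absorbs all of your losses. The genuine gap is in the promoted rate \eqref{rate:the_ms_prom}, and it is created precisely by the step you advertise as the key device: handling the cross term $\frac{2}{k}\mE[\psi_{k-1}^\top T_{k-1}]$ by Young's inequality. Writing $\frac{2\psi^\top T}{k}\le\frac{\gamma\Abs{\psi}^2}{k}+\frac{\Abs{T}^2}{\gamma k}$ with a fixed $\gamma>0$ permanently reduces the drift coefficient from $2\eta$ to $2\eta-\gamma$, which already spoils the exact exponent for $\eta\le\frac12$; letting $\gamma=\gamma_k=k^{-\alpha}$ repairs the drift (the multiplicative perturbation $\prod_i(1+i^{-1-\alpha})$ converges) but inflates the remainder to $\mE\Abs{T_{k-1}}^2/(\gamma_k k)=O(k^{\alpha-2})$, and \cref{lemma:gen_zhao18} with effective drift $2\eta=1$ against a remainder of order $k^{-(1+(1-\alpha))}$ then yields only $O(1/k^{1-\alpha})$ at the boundary $\eta=\frac12$ --- not the claimed $O(\ln k/k)$. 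A Cauchy--Schwarz bound $\abs{\mE[\psi_{k-1}^\top T_{k-1}]}\le(\mE\Abs{\psi_{k-1}}^2)^{1/2}\,O(k^{-1/2})$ fares no better: bootstrapping it stabilizes at $O(\ln^2 k/k)$. The point is that $\mE[\psi_{k-1}^\top T_{k-1}]$ is not controllable by these generic inequalities; it is small only because of the martingale structure of the noise, and the SPAO decomposition hides that structure because $T_{k-1}$ aggregates all past innovations and is therefore correlated with $\psi_{k-1}$. The paper sidesteps this entirely by running the second-moment recursion on $\tilde{\theta}_k$ itself: over one $N$-window the noise increment $\sum_{l=k-N+1}^{k}\frac{\beta\phi_l}{l}(F_l-s_l)$ is orthogonal to $\tilde{\theta}_{k-N}$ (the fresh innovations are independent of the past), so the cross term is exactly $O(1/k^2)$ with no loss in the drift coefficient. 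To close your argument you would need to import that orthogonality computation, at which point you are essentially doing the paper's proof on $\tilde{\theta}_k$ rather than on $\psi_k$.

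A secondary imprecision: in the promotion step you assert $\mE\Abs{\psi_k}^3=O(1/k^{3\eta-\varepsilon})$ ``from the already-proved rate together with the exponential tail.'' Neither gives a third moment of that order: \eqref{rate:the_ms} controls only the second moment, and the polynomial-tail exponent $\varepsilon^\prime$ in \cref{prop:prom_psi_tail} is some small constant unrelated to $\eta$. What one actually gets is $\mE\Abs{\psi_k}^3\le k^{-\varepsilon^\prime}\mE\Abs{\psi_k}^2+O(\ln^{3/2}k\cdot\exp(-mk^{1-\varepsilon}))$, which turns the cubic term into a multiplicative perturbation $(1+O(k^{-1-\varepsilon^\prime}))$ of the recursion; this is exactly how the paper handles the Lipschitz correction (via \cref{coro:tail_prom}), and it does suffice, so this part is fixable --- but the bound as you state it is not justified.
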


\begin{proof}
	To prove \eqref{rate:the_ms}, we firstly simplify the recursive formula of $ \mE\lVert\tilde{\theta}_k\rVert^2 $.
	
	By \eqref{algo} and the Lagrange mean value theorem (\!\! \cite{zorich}, Theo-rem 5.3.1), there exists $ \zeta_k $ between $ C-\phi_k^\top\theta $ and $ C-\phi_k^\top\theta - \phi_k^\top\tilde{\theta}_{k-1} $ such that
	\begin{align*}
		\tilde{\theta}_k
		=& \tilde{\theta}_{k-1} + \frac{\beta \phi_k}{k}\left( \hat{F}_k - F_k \right) + \frac{\beta \phi_k}{k}\left( F_k - s_k \right)\nonumber\\
		=& \left( I_n - \frac{\beta}{k}f(\zeta_k)\phi_k\phi_k^\top \right)\tilde{\theta}_{k-1} + \frac{\beta \phi_k}{k}\left( F_k - s_k \right)\nonumber\\
		=& \prod_{i=k-N+1}^k \left( I_n - \frac{\beta}{i}f(\zeta_i)\phi_i\phi_i^\top \right)\tilde{\theta}_{k-N}\nonumber\\
		&+ \sum_{l=k-N+1}^{k} \prod_{i=l+1}^{k}\left( I_n - \frac{\beta}{i}f(\zeta_i)\phi_i\phi_i^\top \right)\frac{\beta \phi_l}{l}\left( F_l - s_l \right)\nonumber\\
		=& \prod_{i=k-N+1}^k \left( I_n - \frac{\beta}{i}f(\zeta_i)\phi_i\phi_i^\top \right)\tilde{\theta}_{k-N} \nonumber\\
		& + \sum_{l=k-N+1}^{k} \frac{\beta \phi_l}{l}\left( F_l - s_l \right) + O\left(\frac{1}{k^2}\right).
	\end{align*}
	Similar to \eqref{ineq:matrix_prod}, except for the first few steps, we have
	\begin{align*}
		& \Abs{\prod_{i=k-N+1}^k\left( I_n - \frac{\beta f(\zeta_i)}{i}\phi_i\phi_i^\top \right)} \\
		\leq & \left( 1 -  \frac{\beta \delta N}{k}\udf_{k|N}^\prime \right) + O\left(\frac{1}{k^2}\right),
	\end{align*}
	where $ \udf_{k|N}^\prime =  \udf\left( \max\limits_{k-N<i\leq k} \abs{\phi_i^\top\tilde{\theta}_{i-1}} \right) $, and $ \udf(\cdot) $ is defined in \eqref{def:udf_func}. Besides, noticing that $ \tilde{\theta}_{k-N} $ is independent of $ \sum_{l=N+1}^{k} \frac{\beta \phi_l}{l}\left( F_l - s_l \right) $, we have
	\begin{align*}
		&\mE \left[ \left( \sum_{l=k-N+1}^{k} \frac{\beta \phi_l}{l}\left( F_l - s_l \right) \right)^\top \right. \nonumber\\
		& \qquad \left. \cdot \prod_{i=k-N+1}^k \left( I_n - \frac{\beta}{i}f(\zeta_i)\phi_i\phi_i^\top \right)\tilde{\theta}_{k-N} \right] \nonumber\\
		=& \mE \left[ \left( \sum_{l=k-N+1}^{k} \frac{\beta \phi_l}{l}\left( F_l - s_l \right) \right)^\top  \right. \nonumber\\
		& \qquad \left. \cdot \left(\prod_{i=k-N+1}^k \left( I_n - \frac{\beta}{i}f(\zeta_i)\phi_i\phi_i^\top \right) - I_n \right)\tilde{\theta}_{k-N} \right] \nonumber\\
		=& O\left(\frac{1}{k^2}\right).
	\end{align*}
	Therefore, for sufficiently large $ k $, one can get
	\begin{equation}\label{recur:mE_pre}
		\mE\lVert\tilde{\theta}_k\rVert^2
		\leq \mE \left[\left( 1 - \frac{\beta \delta N}{k} \udf_{k|N}^\prime \right)^2 \lVert\tilde{\theta}_{k-N}\rVert^2 \right] + O\left(\frac{1}{k^2}\right).
	\end{equation}
	
	By \cref{thm:tail} and (b) of \cref{lemma:udf}, $ \mP\{ \udf_{k|N}^\prime < \udf - \frac{\varepsilon}{2\beta \delta} \} = O(\exp(-mk^{1/2})) $. Hence, by \cref{coro:worst_EstErr} in \cref{appen:proof}, we have
	\begin{align*}
		&\mE \left[\left( 1 - \frac{\beta \delta N}{k} \udf_{k|N}^\prime  \right)^2 \lVert\tilde{\theta}_{k-N}\rVert^2 \right] \nonumber\\
		\leq& \int_{\left\{ \udf_{k|N}^\prime \geq \udf - \frac{\varepsilon}{2\beta \delta} \right\}} \left( 1 - \frac{N}{k} \left( \eta - \frac{\varepsilon}{2} \right) \right)^2 \lVert\tilde{\theta}_{k-N}\rVert^2 \td\mP \nonumber\\
		& + \int_{\left\{ \udf_{k|N}^\prime < \udf - \frac{\varepsilon}{2\beta \delta} \right\}} \lVert\tilde{\theta}_{k-N}\rVert^2 \td\mP \nonumber\\
		=&\left( 1 - \frac{N}{k} \left( \eta - \frac{\varepsilon}{2} \right) \right)^2 \mE\lVert\tilde{\theta}_{k-N}\rVert^2 \nonumber\\
		& + O\left(\ln k\cdot\exp(-mk^{1/2})\right).
	\end{align*}
	Substituting the above estimate into \eqref{recur:mE_pre} gives
	\begin{equation*}
		\mE\lVert\tilde{\theta}_k\rVert^2
		\leq \left( 1 - \frac{N}{k} \left( \eta - \frac{\varepsilon}{2} \right) \right)^2 \mE\lVert\tilde{\theta}_{k-N}\rVert^2 + O\left(\frac{1}{k^2}\right).
	\end{equation*}
	Thus, \eqref{rate:the_ms} can be proved by \cref{lemma:gen_zhao18} in \cref{appen:proof}.
	
	Then, we prove \eqref{rate:the_ms_prom}. Similar to \eqref{ineq:1-udf}, for sufficiently large $ k $,
	\begin{equation*}
		1-\frac{\beta \delta N}{k}\udf_{k|N}^\prime\leq \left( 1 + \frac{\beta \delta N}{k}\left(\udf- \udf_{k|N}^\prime\right)\right)\left(1-\frac{\beta \delta N}{k}\udf\right).
	\end{equation*}
	Therefore, by \eqref{recur:mE_pre} and $ \eta = \beta\delta\udf $, one can get
	\begin{align}\label{recur:m2prom_pre}
		\mE\lVert\tilde{\theta}_k\rVert^2
		\leq& \left(1-\frac{N\eta}{k}\right)^2 \mE \left[\left( 1 + \frac{\beta \delta N}{k}\left(\udf- \udf_{k|N}^\prime\right)\right)^2  \right.  \nonumber\\
		& \qquad\qquad \left. \cdot \lVert\tilde{\theta}_{k-N}\rVert^2 \right] + O\left(\frac{1}{k^2}\right).
	\end{align}
	
	By (d) of \cref{lemma:udf}, since $ f(\cdot) $ is assumed to be locally Lipschitz continuous here, $ \udf(\cdot) $ is also locally Lipschitz continuous. Hence, if $ \lVert \tilde{\theta}_j \rVert \leq j^{-\varepsilon^\prime} $ for $ \varepsilon^\prime>0 $ and all $ j = k-N+1, \ldots , k $, then there exists $ L>0 $ such that $ \udf -\udf_{k|N}^\prime\leq L k^{-\varepsilon^\prime}$, which together with \cref{coro:worst_EstErr,coro:tail_prom} in \cref{appen:proof} implies that there exist positive numbers $ m $ and $ \varepsilon $ such that
	\begin{align*}
		&\mE \left[\left( 1 + \frac{\beta \delta N}{k}\left(\udf- \udf_{k|N}^\prime\right)\right)^2 \lVert\tilde{\theta}_{k-N}\rVert^2 \right] \nonumber\\
		\leq& \left(1+\frac{\beta\delta N L}{k^{1+\varepsilon^\prime}}\right)^2 \int_{\cap_{j=k-N+1}^k\{\lVert \tilde{\theta}_j \rVert \leq j^{-\varepsilon^\prime}\}} \lVert\tilde{\theta}_{k-N}\rVert^2 \td\mP \nonumber\\
		& + O\left(\ln k \cdot \exp(-mk^{1-\varepsilon})\right) \nonumber\\
		\leq& \left(1+\frac{\beta\delta N L}{k^{1+\varepsilon^\prime}}\right)^2 \mE \lVert\tilde{\theta}_{k-N}\rVert^2 \td\mP + O\left(\ln k \cdot \exp(-mk^{1-\varepsilon})\right).
	\end{align*}
	Substituting the above estimate into \eqref{recur:m2prom_pre} gives
	\begin{equation*}
		\mE\lVert\tilde{\theta}_k\rVert^2\!
		\leq \! \left(1-\frac{N\eta}{k}\right)^2\! \left(1+\frac{\beta\delta N L}{k^{1+\varepsilon^\prime}}\right)^2\! \mE\lVert\tilde{\theta}_{k-N}\rVert^2 + O\left(\frac{1}{k^2}\right).
	\end{equation*}
	Therefore, we have
	\begin{align*}
		&\left.\mE\lVert\tilde{\theta}_k\rVert^2 \middle/ \prod_{i=1}^k\left(1+\frac{\beta \delta N L}{i^{1+\varepsilon^\prime}}\right)^2 \right. \nonumber\\
		\leq& \left(\! 1-\frac{N\eta}{k}\right)^2 \!\! \left. \mE\lVert\tilde{\theta}_{k-N}\rVert^2 \middle/ \prod_{i=1}^{k-N}\left(1+\frac{\beta\delta N L}{i^{1+\varepsilon^\prime}}\right)^2 \right. \!\! + O\left(\frac{1}{k^2}\right).
	\end{align*}
	Then, by \cref{lemma:gen_zhao18}, one can get	
	\begin{equation}
		\left. \mE\lVert\tilde{\theta}_k\rVert^2 \middle/ \prod_{i=1}^k\left(1+\frac{\beta \delta N L}{i^{1+\varepsilon^\prime}}\right)^2 \right.  =
		\begin{cases}
			O\left(\frac{1}{k}\right), & \eta>\frac{1}{2};\\
			O\left(\frac{\ln k}{k}\right), & \eta=\frac{1}{2};\\
			O\left(\frac{1}{k^{2\eta}}\right), & \eta<\frac{1}{2}.
		\end{cases}
	\end{equation}
	Due to the boundedness of $ \prod_{i=1}^\infty\left(1+\frac{\beta \delta N L}{i^{1+\varepsilon^\prime}}\right)^2 $, \eqref{rate:the_ms_prom} is proved.
\end{proof}

\begin{remark}
	By \cref{thm:m2_rate}, the mean square convergence rate of the SA-based algorithm achieves $ O(1/k) $ when properly selecting the coefficient $ \beta $. By \cite{zhang2019asymptotically}, the Cram{\'e}r-Rao lower bound for estimating $ \theta $ based on binary observations $ s_1,\ldots,s_k $ is
	\begin{equation}
		\sigma^2_{CR}(s_1,\ldots,s_k)=\left( \sum_{i=1}^k\frac{f_i^2}{F_i(1-F_i)} \phi_i\phi_i^\top \right)^{-1}=O\left(\frac{1}{k}\right).
	\end{equation}
	Besides, for the identification problem of MA systems with accurate observations and Gaussian noises, the least square algorithm generates a minimum variance  estimate (\!\! \cite{ZHUAN}, Theorem 4.4.2). And, the mean square convergence rate of the recursive least square algorithm is $ O(1/k) $. Therefore, $ O(1/k) $ is the best mean square convergence rate in theory of the identification problem of the {binary-valued} MA systems and even accurate ones.
\end{remark}

\begin{remark}
	In the multiple threshold case, when properly selecting the coefficient $ \beta $, the almost sure and mean square convergence rates of the SA-based algorithm are also $ O(\sqrt{\ln \ln k/k}) $ and $ O(1/k) $, respectively. The analysis is similar to the binary observation case.
\end{remark}

\begin{remark}
	From \cref{thm:as_rate,thm:m2_rate}, we learn that the almost sure and mean square convergence rates are influenced by the step-size, inputs, and the threshold. Here we give intuitive explanations:
	\begin{enumerate}[label={\roman*)}]
		\item The step-size coefficient $ \beta $ influences the convergence rates. If we adopt a small step-size $ \beta $, then the algorithm updates the estimate at a slow rate. 
		\item Excitations of $ \{\phi_k,k\geq 1\} $ also affect the convergence rates. If $ \delta $ in \eqref{AssumEq_SuffRichCond} is large, then $ \{y_k,j\geq 1\} $ provides rich information on $ \theta $ from every direction, which causes good effectiveness of the algorithm. 
		\item The threshold $ C $ is another factor influencing the convergence rates. If the threshold $ C $ is too high or too low, then $ s_k $ may have always the same value, which causes poor effectiveness of the algorithm. 
	\end{enumerate}
	Besides, given $ \{\phi_k, k\geq 1\} $, the upper bound $ M $ does not influence the actual convergence rate of the algorithm, but only influences the estimation on the convergence rates. If $ M $ is too large, then we have less information on $ \{\phi_k,k\geq 1\} $ for estimating the convergence rates, which may lead to an unsatisfactory estimation on the convergence rates.
\end{remark}

\section{Numerical simulation}\label{sec:nume simu}

A numerical simulation will be performed in the section to verify \cref{thm:conv,thm:as_rate,thm:m2_rate}.

Consider an MA system $y_k=\phi_k^\top\theta+d_k$ with binary-valued observation
\begin{equation}
	s_k=I_{\{y_k\leq C\}}=
	\begin{cases}
		1,\quad y_k\leq C;\\
		0,\quad y_k > C,\\
	\end{cases}
\end{equation}
where the unknown parameter $\theta=[3,-1]^\top$, the threshold $C=1$, and $d_k$ is i.i.d. Gaussian noise with variance $\sigma^2=25$ and zero mean. The regressed function of inputs $ \phi_k = [u_k,u_{k-1}]^\top$ is generated by $ u_{3i}=-1+e_{3i},\ u_{3i+1}=2+e_{3i+1},\ u_{3i+2}=1+e_{3i+2} $ for natural number $ i $, where $ e_k = 0.1\sin(\ln(k+1)) $.  
It can be verified that the input follows \cref{a1} with $ M = 2.38 $, $ N = 3 $ and $ \delta = 1.42 $.


In the simulation, set $ \beta = 20 $, $ k_0 = 20 $, and the initial value $\hat{\theta}_{k_0}=[1,1]^\top$.
\cref{fig:simu1} shows a trajectory of $\hat{\theta}_k$. 
\cref{fig:box} gives the box-plots of $ \hat{\theta}_k $ in 200 repeated experiments. 
The figures demonstrate the convergence of Algorithm \eqref{algo}.



\begin{remark}
	We set $ \beta = 20 $ to have $ \eta > 1/2 $. 
	When $ k_0 = 0 $, large $ \beta $ causes large step-sizes in first few steps. 	
	Due to the randomness of $ \{s_k\} $, the estimate $ \hat{\theta}_k $ may run away from the true value $ \theta $ after first few steps of iterations. Then, it will takes much more time to reduce the estimation error. 
	To avoid this situation, we should adjust the starting point $ k_0 $ according the selection of $ \beta $. In the simulation, we set $ k_0 = 20 $.
\end{remark}

\begin{figure}[!htbp]
	\subfigure[A trajectory of $ \hat{\theta}_k $.]{
	\begin{minipage}[t]{1\linewidth}
		\centering
		\includegraphics[width=0.85\linewidth]{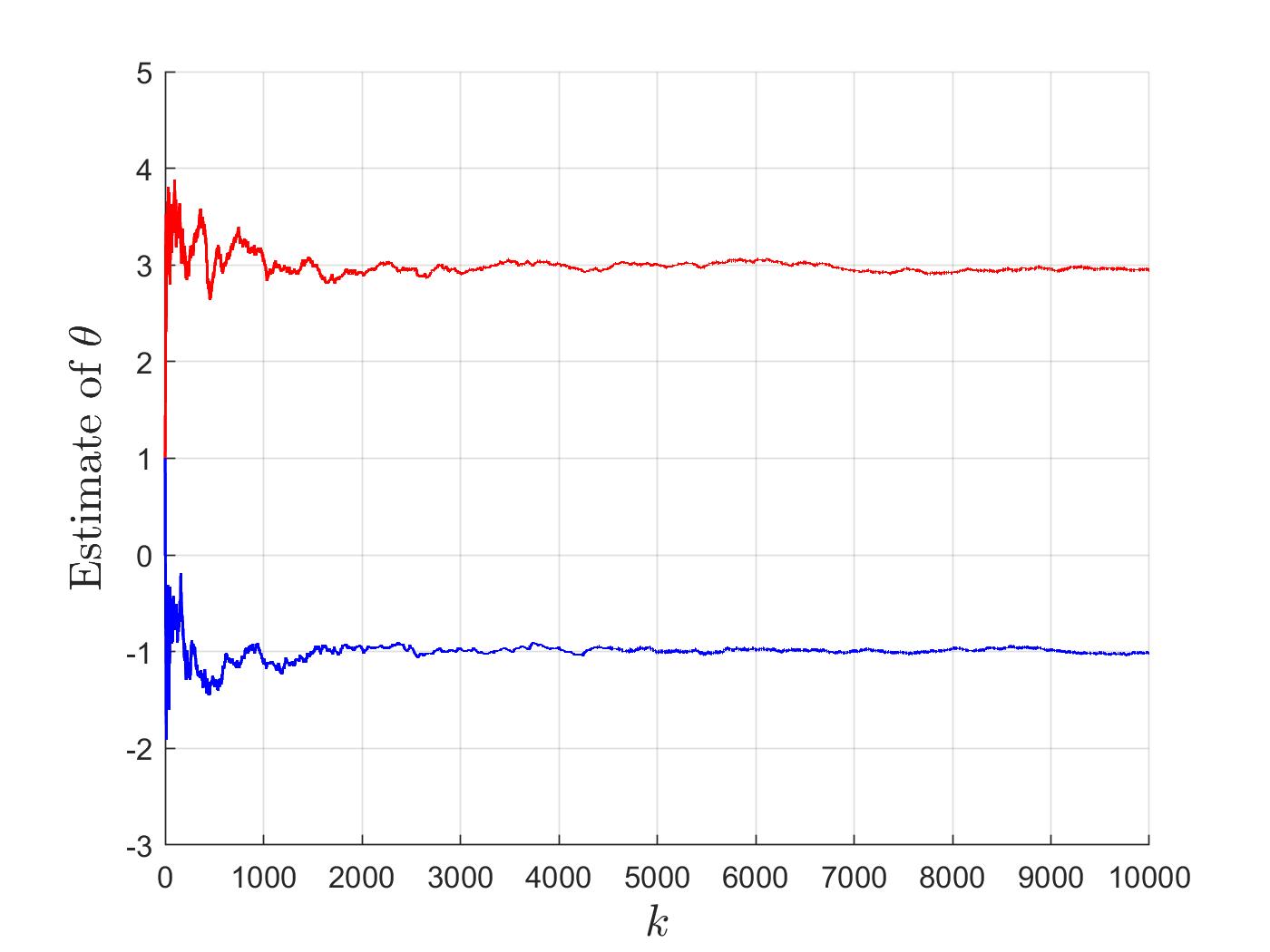}
		\label{fig:simu1}
	\end{minipage}
}
	\subfigure[The box-plots of $ \hat{\theta}_k $ in 200 repeated experiments.]{
	\begin{minipage}[t]{1\linewidth}
		\centering
		\includegraphics[width=0.85\linewidth]{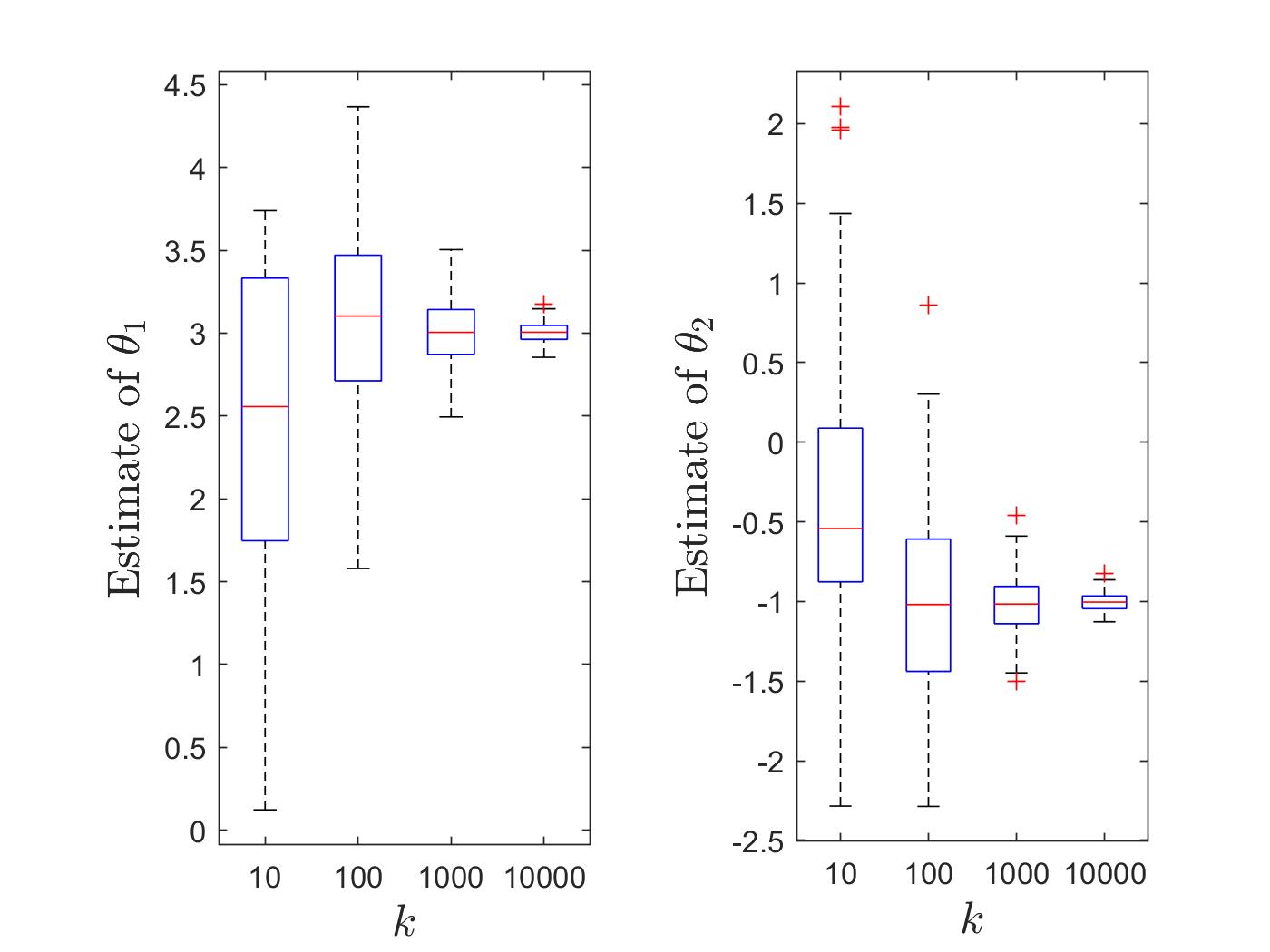}
		\label{fig:box}
	\end{minipage}
}
	\caption{Convergence of Algorithm \eqref{algo}.}
\end{figure}

Note that $ \eta $ is about $ 0.53 $. Then, by \cref{thm:as_rate}, Algorithm \eqref{algo} achieves the almost sure convergence rate of $ O(\sqrt{\ln\ln k/k}) $.  \cref{fig:a.s.convergence rate} shows that the trajectory of $k\lVert \tilde{\theta}_k\rVert^2/\ln \ln k$ is bounded, which consists with the almost sure convergence rate of $ O(\sqrt{\ln\ln k/k}) $.

\begin{figure}[!htbp]
	\centering
	\includegraphics[width=0.85\linewidth]{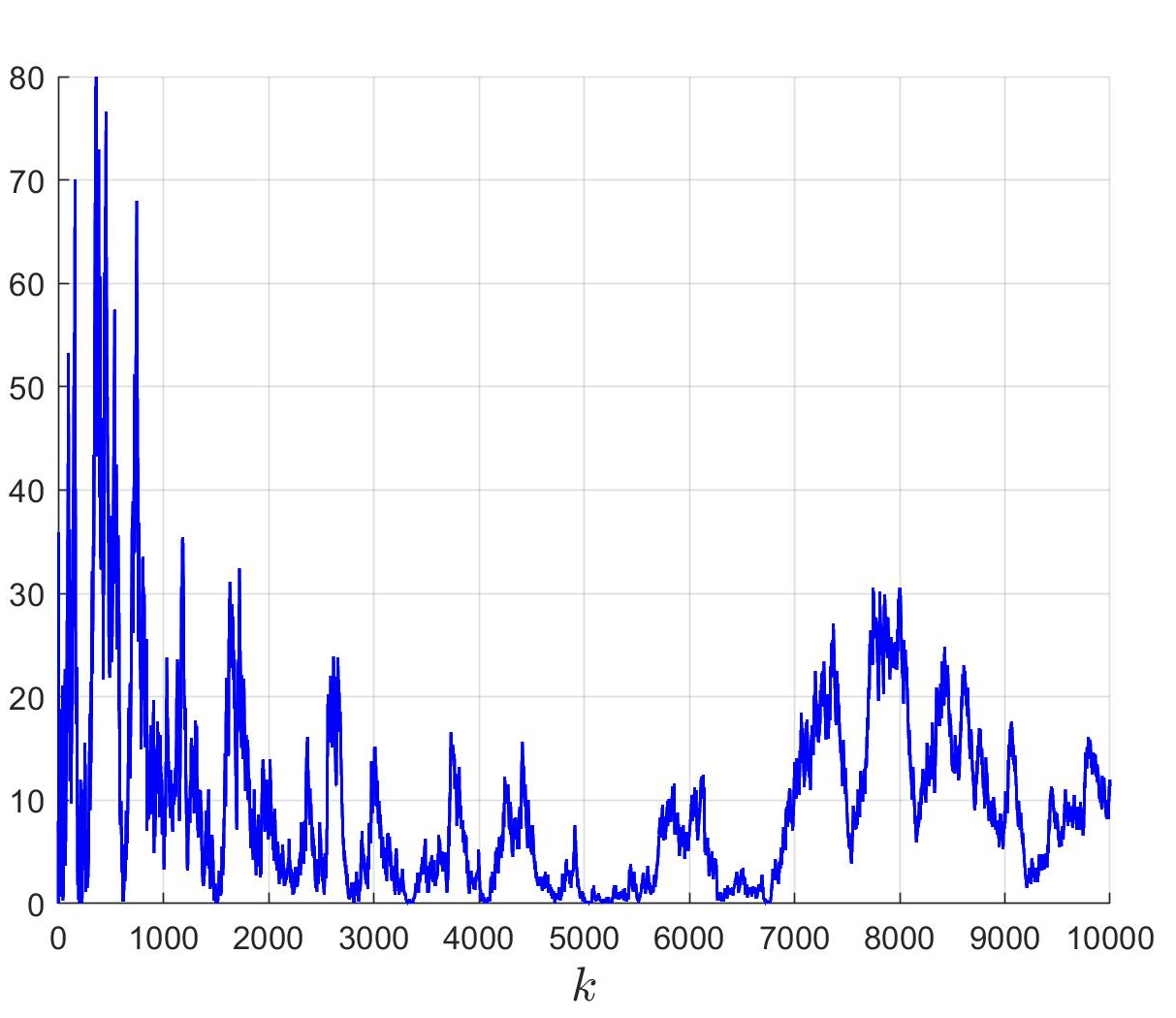}
	\caption{The trajectory of $k\lVert \tilde{\theta}_k\rVert^2/ \ln \ln k$.}
	\label{fig:a.s.convergence rate}
\end{figure}

By \cref{thm:m2_rate}, Algorithm \eqref{algo} achieves the mean square convergence rate of $ O(1/k) $. 
\cref{fig:empirical variance convergence rate} illustrates that the average of the 200 trajectories of $ k\lVert \tilde{\theta}_k\rVert^2 $ is bounded, which consists with the mean square convergence rate of $ O(1/k) $.

Besides, by \cref{thm:m2_rate}, the step-size $ \beta $ influences the mean square convergence rate. \cref{fig:diffeta} shows the empirical mean square convergence rate under the case of $ \beta = 20 $ is faster than that under the case of $ \beta = 1 $. 

{
	It will be an interesting problem to consider the situation where the distribution used in Algorithm \eqref{algo} is different from the actual noise distribution. When the distribution used in Algorithm \eqref{algo} is Gaussian with variance 20 and zero mean, but the actual noise variance is 25, \cref{fig:wrongvariance} shows that the estimation error is bounded. 
}



\begin{figure}[!htbp]
	\centering
	\includegraphics[width=0.85\linewidth]{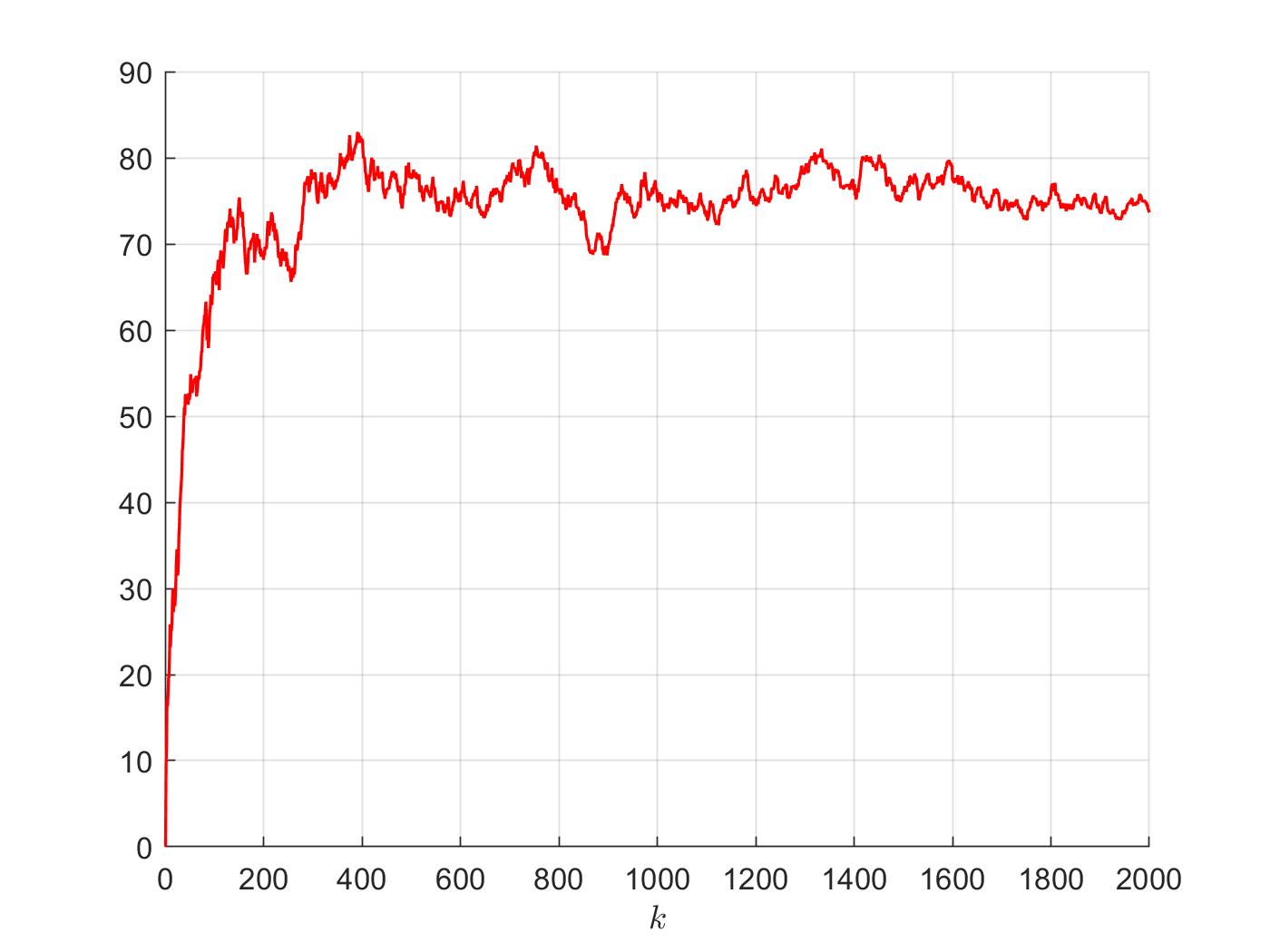}
	\caption{The trajectory of $k\lVert \tilde{\theta}_k\rVert^2$ in 200 repeated experiments.}
	\label{fig:empirical variance convergence rate}
\end{figure}

\begin{figure}[!htbp]
	\centering
	\includegraphics[width=0.85\linewidth]{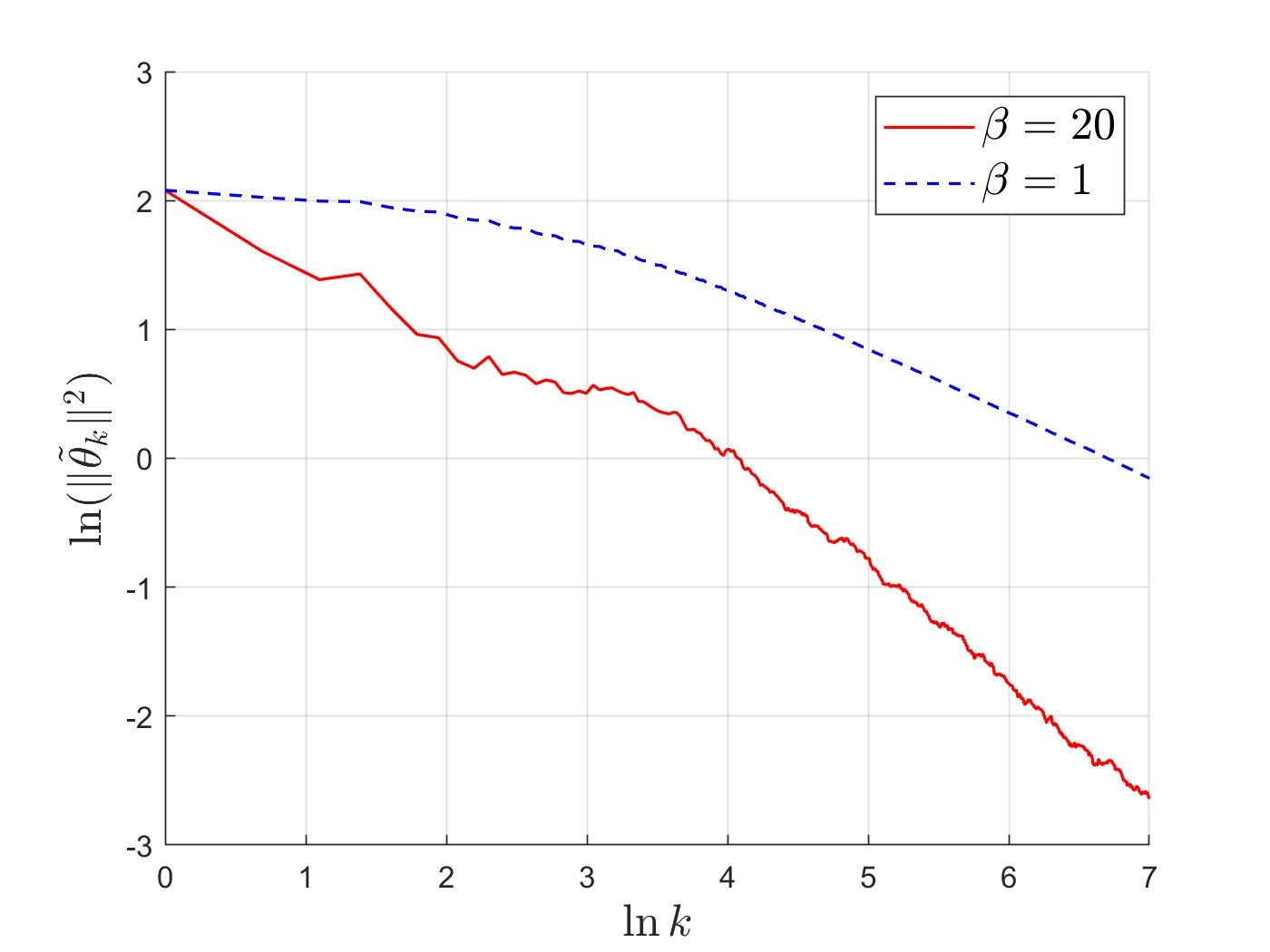}
	\caption{Empirical mean square convergence rates under different $ \beta $.}
	\label{fig:diffeta}
\end{figure}

\begin{figure}[!htbp]
	\centering
	\includegraphics[width=0.85\linewidth]{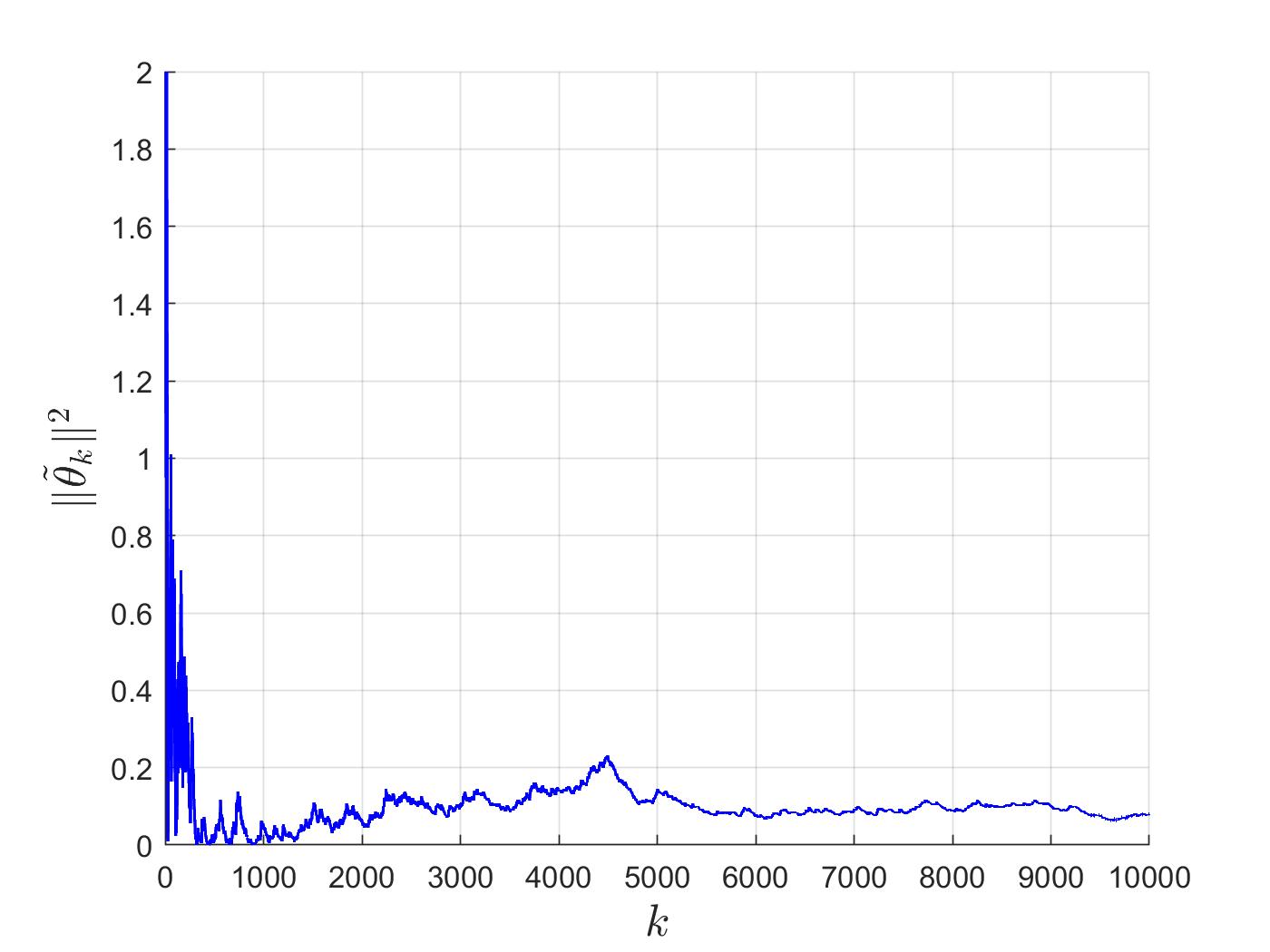}
	\caption{The trajectory of $\lVert \tilde{\theta}_k\rVert^2$ for the wrong variance case.}
	\label{fig:wrongvariance}
\end{figure}


\section{Conclusion}\label{sec:conclusion}

The paper investigates the identification problem of {binary-valued} MA systems with uniformly persistently exciting inputs. An SA-based algorithm without projection is proposed to identify the unknown parameter. The algorithm appears to be the first online identification method for {binary-valued} systems whose implementation does not rely on projections or truncations. 
When properly selecting the coefficients, the almost sure convergence rate of the SA-based algorithm is $ O(\sqrt{\ln \ln k/k}) $, and the mean square convergence rate is $ O(1/k) $. Both the convergence rates are the best for the identification problem of {binary-valued} systems. 
Moreover, an auxiliary stochastic process named SPAO is constructed for the effectiveness analysis. 

Here we give some topics for future research. Firstly, the design of the step-size $\rho_k$ is left as an open question. How can we design a dynamic $\rho_k$ to allow the convergence rates to be the best automatically, and how can we design $\rho_k$ to make the identification algorithm achieve the Cram{\'e}r-Rao lower bound asymptotically? Secondly, can the algorithm be extended to other general forms of systems, such as the infinite impulse response system? And thirdly, how can we design system control laws to regulate the system performance using the SA-based algorithm?


\appendices

\section{Lemmas and the proofs}\label[appen]{appen:proof}

\setcounter{equation}{0}
\renewcommand{\theequation}{A.\arabic{equation}}


\noindent {\it Proof of \cref{lemma:T}.} The lemma can be indicated by Theorem 5.5.1 of \cite{Tucker1967A}. We transfer the problem first.

Firstly, we claim that it is sufficient to prove that there exists $ m>0 $ such that $ \mP\left\{\Abs{w_k}>k^{-\varepsilon}\right\} =O\left(\exp(-mk^{1-2\varepsilon})\right) $. This is because $ \sum_{j=k}^{\infty}\exp(-mj^{1-2\varepsilon})=O\left(k^{2\varepsilon}\exp(-mk^{1-2\varepsilon})\right) =O\left(\exp(-mk^{1-2\varepsilon}/2)\right)$.

Secondly, we claim that it is sufficient to prove that for any $ i\in\{1,2,\ldots,n\} $, $w_{k,i}$ satisfies $ \mP\left\{ \abs{w_{k,i}} > k^{-\varepsilon}/\sqrt{n} \right\} =O\left(\exp(-mk^{1-2\varepsilon})\right) $, where $w_{k,i}$ is the $i$-th component of $w_k$. This is because $\left\{\Abs{w_k}>k^{-\varepsilon}\right\}\subseteq \cup_i \left\{ \abs{w_{k,i}} > k^{-\varepsilon}/\sqrt{n} \right\}$, which implies
\begin{equation*}
	\mP\left\{\Abs{w_k}>k^{-\varepsilon}\right\} \leq \sum_{i=1}^n \mP\left\{ \abs{w_{k,i}} > k^{-\varepsilon}/\sqrt{n} \right\}.
\end{equation*}

The transformation has been finished. And, now we show the converted problem is a corollary of Theorem 5.5.1 of \cite{Tucker1967A}.

\begin{lemmax}[\!\!\cite{Tucker1967A}, Theorem 5.5.1]\label{lemma:Tucker}
	Assume that
	\begin{enumerate}[label=\roman*)]
		\item $\{X_k, k\geq 1\}$ is a sequence of independent random variables;
		\item $\mE X_k = 0$ and $\abs{X_k}\leq \bar{X}<\infty$;
		\item $S_k = \sum_{i=1}^k X_i$, $\sigma_k =\sqrt{{\rm var}(S_k)}$.
	\end{enumerate}
	Then
	\begin{equation}
		\mP\left\{ \frac{S_k}{\sigma_k} > d_k\right\} < \max\left\{\exp\left(-\frac{d_k^2}{4} \right), \exp\left(-\frac{d_k \sigma_k}{4 \bar{X}}  \right) \right\}.
	\end{equation}
\end{lemmax}

Set $X_{k,i}=\beta\phi_{k,i}\left(F_k-s_k\right)$, $ S_{k,i}=\sum_{j=1}^k X_{j,i} $, $ \sigma_{k,i}=\sqrt{\text{var}(S_{k,i})} $ and $d_{k,i}=k^{1-\varepsilon}/\sigma_{k,i}\sqrt{n}$, where $ \phi_{k,i} $ is the  $i$-th component of $ \phi_k $. Then, by \cref{lemma:Tucker},
\begin{align*}
	&\mP\left\{ w_{k,i} > \frac{k^{-\varepsilon}}{\sqrt{n}} \right\}
	= \mP \left\{\frac{S_{k,i}}{\sigma_{k,i}} > d_{k,i} \right\} \nonumber\\
	<&  \max\left\{\exp\left(-\frac{d_{k,i}^2}{4}\right),\exp\left(-\frac{d_{k,i} \sigma_{k,i}}{4 \bar{X}}  \right)  \right\} \nonumber\\
	=& \max\left\{\exp\left( -\frac{k^{2-2\varepsilon}}{4n\sigma_{k,i}^2} \right), \exp\left(-\frac{k^{1-\varepsilon}}{4 \bar{X}\sqrt{n}}  \right)  \right\}.
\end{align*}
Noting that
\begin{equation*}
	\sigma_{k,i}^2=\text{var}(S_{k,i})
	=\sum_{j=1}^k\text{var}(X_{j,i})
	\leq 4\bar{X}^2 k,
\end{equation*}
then $\exp\left( -k^{2-2\varepsilon}/4n\sigma_{k,i}^2\right) \leq \exp\left( -k^{1-2\varepsilon}/16n\bar{X}^2 \right)$. Therefore, there exists  $m_+ >0 $ such that $\mP\left\{ w_{k,i} > k^{-\varepsilon}/\sqrt{n} \right\}=O\left(\exp(-m_+ k^{1-2\varepsilon})\right)$.

$\mP\left\{ w_{k,i} < -k^{-\varepsilon}/\sqrt{n} \right\}$ can be similarly analyzed.

Combining the two consequences, the converted problem is thereby proved. That is to say, we get \cref{lemma:T}. \hfill\qedsymbol

{
	\noindent {\it Proof of \cref{lemma:udf}.} 
	\textbf{(a)} For $ x_1 \geq x_2 $, one can get
	\begin{align*}
		\udf(x_1) = & \sup_{z > M\Abs{\theta}+x_1} \inf\limits_{t\in [C-z,C+z] } f(t) \\
		\leq & \sup_{z > M\Abs{\theta}+x_2} \inf\limits_{t\in [C-z,C+z] } f(t) = \udf(x_2).
	\end{align*}
	Therefore, $ \udf(\cdot) $ is non-increasing. 
	
	Due to the monotonicity of $ \udf(\cdot) $, $ \sup_{x>\chi} \udf(x) $ is the right limit of $ \udf(\cdot) $ at the point $ \chi $. 
	Then, $ \udf(\cdot) $ is right continuous because
	\begin{align*}
		\sup_{x>\chi} \udf(x) 
		= & \sup_{x>\chi} \sup_{z > M\Abs{\theta}+x} \inf_{t\in [C-z,C+z]} f(t) \\
		= & \sup_{z > M\Abs{\theta}+\chi} \inf_{t\in [C-z,C+z]} f(t) = \udf(\chi).  
	\end{align*}
	
	\noindent \textbf{(b)} Since $ \udf(\cdot) $ is right continuous and only defined on $ [0,\infty) $, we have
	$ \lim\limits_{x\to 0} f(x) = \udf(0) = \udf $. 
	
	
	\noindent \textbf{(c)} 
	By \eqref{def:udf_func}, we have
	\begin{align*}
		\udf(x) 
		\leq & \sup_{z \geq M\Abs{\theta}+x} \inf_{t\in [C-z,C+z]} f(t) \\
		= &	\inf\limits_{t\in [C-M\Abs{\theta}-x,C+M\Abs{\theta}+x] } f(t). 
	\end{align*}
	
	\noindent \textbf{(d)} We firstly prove $ g(z) = \inf_{t\in [C-z,C+z]} f(t) $ is locally Lipschitz continuous on $ z \geq 0 $. Since $ f(t) $ is locally Lipschitz continuous, for any given $ z_0 \geq 0 $, there exist $ \delta_1 > 0 $ and $ K_1 > 0 $ such that 
	\begin{align}\label{ineq:lipschitz1}
		\lvert f(t_1) - f(t_2) \rvert \leq K_1 \lvert t_1 - t_2 \rvert
	\end{align}
	for all $ t_1, t_2 \in (C+z_0-\delta_1,C+z_0+\delta_1) $ and $ t_1, t_2 \in (C-z_0-\delta_1,C-z_0+\delta_1) $. 
	
	Consider $ z_1, z_2 \in (z_0 - \delta_1, z_0 + \delta_1) \cap [0,\infty) $. 
	
	If $ z_1 = z_2 $, then $ g(z_2)-g(z_1) = 0 $. 
	
	If $ z_1 \neq z_2 $, then without the loss of generality, consider $ z_1 > z_2 $, which implies 
	$ g(z_1) = \inf_{t\in [C-z_1,C+z_1]} f(t) \leq \inf_{t\in [C-z_2,C+z_2]} f(t) = g(z_2) $. Hence, $ \abs{g(z_1) - g(z_2)} = g(z_2) - g(z_1) $. By the definition of infimum \cite{zorich}, there exists $ \tau_1 \in [C-z_1,C+z_1] $ such that 
	\begin{align}\label{eq:inf_g1}
		g(z_1) = \inf_{t\in [C-z_1,C+z_1]} f(t) \geq f(\tau_1) - ( z_1 - z_2 ).
	\end{align}
	When $ \tau_1 \in [C-z_2,C+z_2] $, 
	\begin{align*}
		g(z_2) - g(z_1) 
		\leq f(\tau_1) - f(\tau_1) + z_1 - z_2 = z_1 - z_2. 
	\end{align*}
	When $ \tau_1 \in [C-z_1,C-z_2) $, set $ \tau_2 = C-z_2 $. Therefore, $ \tau_2 - \tau_1 \leq z_1 - z_2 $, and 
	\begin{align*}
		\tau_1, \tau_2 \subseteq [C-z_1,C-z_2] \subseteq (C-z_0-\delta_1,C-z_0+\delta_1), 
	\end{align*}
	which together with \eqref{ineq:lipschitz1} and \eqref{eq:inf_g1} implies
	\begin{align}\label{ineq:lip_g}
		& g(z_2) - g(z_1)
		\leq  f(\tau_2) - f(\tau_1) + (z_1 - z_2) \nonumber\\
		\leq & K_1 (\tau_2 - \tau_1) + (z_1 - z_2)
		\leq (K_1+1) (z_1 - z_2). 
	\end{align}
	When $ \tau_1 \in (C+z_2,C+z_1] $, set $ \tau_2 = C+z_2 $. Then, \eqref{ineq:lip_g} can be obtained similar to the case of $ \tau_1 \in [C-z_1,C-z_2) $. 
	
	Therefore, $ g(z) $ is locally Lipschitz continuous on $ z \geq 0 $. 
	
	Now we further prove that $ \udf(x) = \sup_{z > M\Abs{\theta}+x} g(z) $ is also local Lipschitz continuous on $ x \geq 0 $. Since $ g(z) $ is locally Lipschitz continuous, for any given $ x_0 \geq 0 $, there exist $ \delta_2 > 0 $ and $ K_2 > 0 $ such that 
	\begin{align}\label{ineq:lipschitz2}
		\lvert g(z_1) - g(z_2) \rvert \leq K_2 \lvert z_1 - z_2 \rvert
	\end{align}
	for all $ z_1, z_2 \in (M\Abs{\theta}+x_0 - \delta_2, M\Abs{\theta}+x_0 + \delta_2) \cap [0,\infty) $. 
	
	Consider $ x_1, x_2 \in (x_0 - \delta_2, x_0 + \delta_2)\cap [0,\infty) $. 
	
	If $ x_1 = x_2 $, then $ \udf(x_1) - \udf(x_2) = 0 $. 
	
	If $ x_1 \neq x_2 $, then without loss of generality, consider $ x_1 > x_2 $, which together with (a) of this lemma implies $ 	\abs{\udf(x_1)-\udf(x_2)} = \udf(x_2)-\udf(x_1) $. By the definition of supremum \cite{zorich}, there exists $ \upsilon_2 \in (M\Abs{\theta}+x_2,\infty) $ such that
	\begin{align}\label{eq:lip_f}
		\udf(x_2) = \sup_{z > M\Abs{\theta}+x_2} g(z) \leq g(\upsilon_2) + (x_1 - x_2). 
	\end{align}
	When $ \upsilon_2 \in (M\Abs{\theta}+x_1,\infty) $, 
	\begin{align*}
		\udf(x_2) - \udf(x_1) \leq g(\upsilon_2)  + (x_1 - x_2) - g(\upsilon_2). 
	\end{align*}
	When $ \upsilon_2 \in (M\Abs{\theta}+x_2,M\Abs{\theta}+x_1] $, set 
	\begin{align*}
		\upsilon_1 = \min\left\{M\Abs{\theta} + 2x_1-x_2,M\Abs{\theta} + \frac{x_1+x_0+\delta_2}{2}\right\}.
	\end{align*}
	Therefore, $ \upsilon_1 > M \Abs{\theta} + x_1 \geq \upsilon_2 $, $ \upsilon_1 - \upsilon_2 < 2(x_1-x_2) $, and 
	\begin{align*}
		\upsilon_1,\upsilon_2 \in (M\Abs{\theta}+x_0 - \delta_2, M\Abs{\theta}+x_0 + \delta_2) \cap [0,\infty),
	\end{align*}
	which together with \eqref{ineq:lipschitz2} and \eqref{eq:lip_f} implies
	\begin{align*}
		& \udf(x_2) - \udf(x_1) \leq g(\upsilon_2) + (x_1-x_2) - g(\upsilon_1) \\
		\leq & K_2( \upsilon_1-\upsilon_2) + (x_1-x_2)
		\leq (2K_2+1) (x_1-x_2). 
	\end{align*}
	Hence, $ \udf(\cdot) $ is local Lipschitz continuous. \hfill\qedsymbol
}

%
%

\begin{lemmax}\label{lemma:Connection}
	Assume that $\phi_k$ satisfies \cref{a1} and the stochastic process $\psi_k$ satisfies $\Abs{\psi_{k}-\psi_{k-1}}\leq \Psi/k$ for some $\Psi>0$. Then,
	\begin{align*}
		\delta \Abs{\psi_{k}}^2 
		\leq& \frac{1}{N}\sum_{j=k+1}^{k+N}\left(\phi_j^\top\psi_{j-1}\right)^2 + \frac{2NM^2\Psi}{k}\sum_{j=k}^{k+N-1}\Abs{\psi_j} \nonumber\\
		& + \frac{N^2M^2\Psi^2}{k^2}.
	\end{align*}
	Furthermore, if $b^\prime>0$ and $k$ is large enough, then there is $k^\prime\in[k+1,k+N]$ such that
	$\lvert \phi_{k^\prime}^\top\psi_{k^\prime-1}\rvert\geq \sqrt{\delta/2}\Abs{\psi_{k}}I_{\{\Abs{\psi_{k}}>b^\prime\}}$.
\end{lemmax}

\begin{proof}
	The lemma is based on \cref{a1}.
	
	Because $\Abs{\psi_{k}-\psi_{k-1}}\leq \Psi/k$, $\Abs{\psi_{k}-\psi_{j-1}}\leq N\Psi/k$ for any $j\in[k+1,k+N]$. Therefore,
	\begin{align*}
		& \frac{1}{N}\sum_{j=k+1}^{k+N}\left(\phi_j^\top\psi_{j-1}\right)^2  \nonumber \\
		\geq & \frac{1}{N}\sum_{j=k+1}^{k+N}\left(\phi_j^\top\psi_{k}\right)^2 - \frac{2NM^2\Psi}{k}\sum_{j=k}^{k+N-1}\Abs{\psi_j} - \frac{N^2M^2\Psi^2}{k^2}.
	\end{align*}
	Besides, by \cref{a1},
	\begin{equation*}
		\! \frac{1}{N}\sum_{j=k+1}^{k+N}\left(\phi_j^\top\psi_{k}\right)^2
		=\frac{1}{N}\sum_{j=k+1}^{k+N}\psi_{k}^\top \phi_j \phi_j^\top \psi_{k}\geq \delta \Abs{\psi_{k}}^2.\!\!\!\!	
	\end{equation*}
	Thus, the first part of the lemma is proved.
	
	As for the second part, we note that under the condition of the lemma, $\psi_k=O(\ln k)$. Then,
	\begin{equation*}
		\frac{2NM^2\Psi}{k}\sum_{j=k}^{k+N-1}\Abs{\psi_j} + \frac{N^2M^2\Psi^2}{k^2} = O\left(\frac{\ln k}{k}\right).
	\end{equation*}
	
	Hence, if $ \Abs{\psi_{k}}>b^\prime $ and $ k $ is sufficiently large, then one can get
	\begin{equation}
		\frac{1}{N}\sum_{j=k+1}^{k+N}\left(\phi_j^\top\psi_{j-1}\right)^2\geq \delta \Abs{\psi_{k}}^2+O\left(\frac{\ln k}{k}\right)>\frac{\delta}{2} \Abs{\psi_{k}}^2,
	\end{equation}
	which implies $ \frac{1}{N}\sum_{j=k+1}^{k+N}\left(\phi_j^\top\psi_{j-1}\right)^2>\frac{\delta}{2} \Abs{\psi_{k}}^2 I_{\{\Abs{\psi_{k}}>b^\prime\}} $ for sufficiently large $ k $. Then, there exists $k^\prime\in\left[k+1,k+N\right]$ such that $\left(\phi_{k^\prime}^\top\psi_{k^\prime-1}\right)^2\geq \frac{\delta}{2} \Abs{\psi_{k}}^2 I_{\{\Abs{\psi_{k}}>b^\prime\}}$, which verifies the second part of the lemma.
\end{proof}

\begin{lemmax}\label{lemma:Seq}
	If a sequence $ \{a_k\} $ satisfies the recursive function
	\begin{equation}\label{recur_LemmaSeq}
		a_{k} \leq a_{k-1}-\frac{D\sqrt{a_{k-1}}}{k+k_0}I_{\{a_{k-1}\geq \frac{\Mpri}{2}\}}+{\nu_k},
	\end{equation}
	where $D$, $k_0$ and $\Mpri$ are all positive, and $\sum_{k=1}^{\infty}\abs{\nu_k}<\Mpri/2$, then
	\begin{equation}\label{Eq_SeqConcl}
		a_k \! < \! \max\left\{ \! \Mpri,\left[\left(\sqrt{a_0} -\frac{D}{2} \ln\left( \frac{k+k_0+1}{k_0+1} \right)\right)^+\right]^2 \!\!\! +\frac{\Mpri}{2} \! \right\},
	\end{equation}
	where $x^+=\max\{0,x\}$.
\end{lemmax}

\begin{proof}
	If $ a_k < \Mpri $, then the lemma is proved. Hence, we can assume that $ a_k\geq \Mpri $ in the rest of the proof, which implies
	\begin{equation*}
		a_t \geq a_k - \sum_{i=t+1}^k \nu_i \geq a_k - \frac{\Mpri}{2} \geq \frac{\Mpri}{2}, \quad \forall t\leq k.
	\end{equation*}
	
	Define $a^\prime_0=a_0$ and $a^\prime_t = a_t - \sum_{i=1}^{t}\abs{\nu_i}>\Mpri/2-\Mpri/2=0$ for $t\geq 1$. Then, we have
	\begin{align*}
		a^\prime_t
		=& a_t - \sum_{i=1}^{t}\abs{\nu_i}
		\leq a_{t-1} - \frac{D\sqrt{a_{t-1}}}{t+k_0} + \nu_t - \sum_{i=1}^{t}\abs{\nu_i} \nonumber\\
		\leq& a_{t-1}- \sum_{i=1}^{t-1}\abs{\nu_i}-\frac{D\sqrt{\left(a_{t-1}- \sum_{i=1}^{t-1}\abs{\nu_i}\right)^+}}{t+k_0} \nonumber\\
		=& a^\prime_{t-1}-\frac{D\sqrt{a^\prime_{t-1}}}{t+k_0},
	\end{align*}
	and hence,
	\begin{align*}
		a^\prime_k
		< & a^\prime_{k-1}-\frac{D\sqrt{a^\prime_{k-1}}}{k+k_0}+\frac{D^2}{4(k+k_0)^2} \nonumber\\
		= & \left(\sqrt{a^\prime_{k-1}}-\frac{D}{2(k+k_0)}\right)^2,
	\end{align*}
	which implies $\sqrt{a^\prime_{k}}< \sqrt{a^\prime_{k-1}}-\frac{D}{2(k+k_0)}$. Therefore, by $ x\leq x^+ $,
	\begin{align*}
		\sqrt{a^\prime_k}
		< & \sqrt{a^\prime_0}- \sum_{t=1}^{k} \frac{D}{2(t+k_0)} \nonumber\\
		\leq & \left(\sqrt{a_0} -\frac{D}{2}\ln\left( \frac{k+k_0+1}{k_0+1} \right)\right)^+.
	\end{align*}
	So, we have
	\begin{align*}
		a_k
		=& a^\prime_k+\sum_{i=1}^{t}\abs{\nu_i} \nonumber\\
		<& \left[\left(\sqrt{a_0} -\frac{D}{2} \ln\left( \frac{k+k_0+1}{k_0+1} \right)\right)^+\right]^2+\frac{\Mpri}{2}.
	\end{align*}
	The lemma is thereby proved.
\end{proof}

\begin{remarkx}
	\cref{lemma:Seq} ensures the uniform ultimate upper boundedness of the sequence $ \{a_k\} $ which satisfies \eqref{recur_LemmaSeq}. Given the initial value $ a_0 $,
	\begin{equation*}
		\sqrt{a_0} -\frac{D}{2} \ln\left( \frac{k+k_0+1}{k_0+1} \right)<0
	\end{equation*}
	when $ k > \left(k_0+1\right) \exp(2\sqrt{a_0}/D)-k_0-1 $, which together with \eqref{Eq_SeqConcl} implies $ a_k<\Mpri $.
\end{remarkx}

\begin{lemmax}\label{lemma:vgx}
	Assume that
	\begin{enumerate}[label=\roman*)]
		\item $ v(\cdot): \mathbb{R}^n\to \mathbb{R}$ is a continuously twice differentiable non-negative function, whose second derivative is bounded;
		\item $ g_k(\cdot): \mathbb{R}^n \to \mathbb{R}^n$ is uniformly bounded;
		\item $ \nabla v(x)^\top g_k(x) $ is uniformly upper bounded, where $ \nabla v(\cdot) $ is the gradient of $ v(\cdot) $;
		\item the positive step-size $ \rho_k\in \mathbb{R} $ satisfies $ \lim\limits_{k\to\infty} \rho_k = 0 $;
		\item $ x_k = x_{k-1} + \rho_k g_k(x_{k-1}) $.
	\end{enumerate}
	Then, $ v(x_k) = O\left(\sum_{i=1}^k \rho_i\right) $.
\end{lemmax}

\begin{proof}
	From
	\begin{align*}
		v(x_k)
		=& v(x_{k-1} + \rho_k g_k(x_{k-1})) \nonumber\\
		=& v(x_{k-1}) + \rho_k \nabla v(x_{k-1})^\top g_k(x_{k-1}) + O\left(\rho_k^2\right) \nonumber\\
		\leq& v(x_{k-1}) + O(\rho_k) \leq O\left(\sum_{i=1}^k \rho_i\right),
	\end{align*}
	we get the lemma.
\end{proof}

\begin{corollary}\label{coro:worst_EstErr}
	Under \cref{a1,a2}, the estimation error of Algorithm \eqref{algo} satisfies $ \tilde{\theta}_k = O\left(\sqrt{\ln k}\right) $.
\end{corollary}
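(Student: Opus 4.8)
The plan is to invoke \cref{lemma:vgx} with the quadratic Lyapunov function $v(x)=\Abs{x}^2$, after writing the error recursion of Algorithm \eqref{algo} in the canonical form $x_k=x_{k-1}+\rho_k g_k(x_{k-1})$. Taking $x_k=\tilde{\theta}_k$, $\rho_k=\beta/k$, and
\begin{equation*}
	g_k(x)=\phi_k\left(F(C-\phi_k^\top\theta-\phi_k^\top x)-s_k\right),
\end{equation*}
the recursion $\tilde{\theta}_k=\tilde{\theta}_{k-1}+\frac{\beta\phi_k}{k}(\hat{F}_k-s_k)$ takes exactly this form, since $\hat{F}_k=F(C-\phi_k^\top\hat{\theta}_{k-1})=F(C-\phi_k^\top\theta-\phi_k^\top\tilde{\theta}_{k-1})$. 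Because $\rho_k=\beta/k\to 0$ with $\sum_{i=1}^k\rho_i=O(\ln k)$, once the hypotheses of \cref{lemma:vgx} are verified its conclusion $v(\tilde{\theta}_k)=O(\sum_{i=1}^k\rho_i)=O(\ln k)$ gives $\Abs{\tilde{\theta}_k}=O(\sqrt{\ln k})$ at once.

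Most hypotheses of \cref{lemma:vgx} are immediate. The function $v(x)=\Abs{x}^2$ is non-negative and twice continuously differentiable with constant (hence bounded) Hessian $2I_n$; and since $\Abs{\phi_k}\leq M$ by \cref{a1}, $F(\cdot)\in[0,1]$, and $s_k\in\{0,1\}$, the map $g_k$ is uniformly bounded by $\Abs{g_k(x)}\leq M$. The positive step size $\rho_k=\beta/k$ is positive and tends to $0$. It therefore remains only to check the single nontrivial hypothesis, namely that $\nabla v(x)^\top g_k(x)$ is uniformly upper bounded.

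First I would reduce this to a scalar statement. Writing $t=\phi_k^\top x$ and $c=C-\phi_k^\top\theta$,
\begin{equation*}
	\nabla v(x)^\top g_k(x)=2(\phi_k^\top x)\left(F(c-\phi_k^\top x)-s_k\right)=2t\left(F(c-t)-s_k\right),
\end{equation*}
where $t$ ranges over all of $\mathbb{R}$ and, by \cref{a1}, $c$ ranges over the fixed bounded interval $[C-M\Abs{\theta},\,C+M\Abs{\theta}]$. Using that $F$ is a monotonically increasing distribution function, I would split on signs: for $s_k=1,\,t\geq 0$ and for $s_k=0,\,t\leq 0$ the product is non-positive, so only two cases require work. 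In those cases the quantity equals $2t\,\mP\{d\leq c-t\}$ (when $s_k=0,\,t>0$) or $2\abs{t}\,\mP\{d> c-t\}$ (when $s_k=1,\,t<0$), i.e.\ a linear factor times a noise-tail probability evaluated at a point whose magnitude grows like $\abs{t}$. Since \cref{a2} gives $\mE\abs{d}\leq\sigma<\infty$, Markov's inequality bounds each such product by a fixed constant once $\abs{t}$ exceeds a threshold depending only on $M\Abs{\theta}$ and $C$, while continuity handles the remaining compact range of $t$; this yields the required uniform upper bound.

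The main obstacle is precisely this verification of hypothesis (iii): the pointwise drift $2t(F(c-t)-s_k)$ is not sign-definite, because for the realization of $s_k$ that is ``wrong'' relative to the sign of $t$ the product is positive. Consequently the uniform upper bound cannot come from monotonicity of $F$ alone and must instead exploit the integrability of the noise to control $t\cdot F(c-t)$ as $\abs{t}\to\infty$. Everything else is a direct application of \cref{lemma:vgx}.
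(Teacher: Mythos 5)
Your proposal is correct and follows essentially the same route as the paper: both apply \cref{lemma:vgx} with $v(x)=x^\top x$ and verify the only nontrivial hypothesis by bounding the noise-tail factor $F(C-\phi_k^\top\theta-t)$ (resp. $1-F$) via Markov's inequality, which kills the linear growth of $t$ in the ``wrong-sign'' cases. The only cosmetic difference is that you use the first moment of $d_k$ where the paper uses the second; both are available under \cref{a2}.
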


\begin{proof}
	Due to the finite covariance of the noise, by Markov inequality (\!\cite{YSC}, Theorem 5.1.1), when $ t $ goes to $ \infty $,
	\begin{align}
		F\left( C- \phi_k^\top\theta - t \right)
		=& \mP\left\{ d_k < C - \phi_k^\top\theta - t \right\}\nonumber\\
		\leq&  \mP \left\{ d_k^2 > \left( C - \phi_k^\top\theta - t \right)^2 \right\}\nonumber\\
		\leq& \frac{\mE d_k^2}{\left( C - \phi_k^\top\theta - t \right)^2}
		= O\left(\frac{1}{t^2}\right), \label{lim:F_tail1}
	\end{align}
	and similarly, when $ t $ goes to $ -\infty $,
	\begin{equation}\label{lim:F_tail2}
		1- F\left( C- \phi_k^\top\theta - t \right) = O\left(\frac{1}{t^2}\right).
	\end{equation}
	
	Set $ v(x) = x^\top x $. Then, $ \nabla v(x) = x $. By \eqref{lim:F_tail1} and \eqref{lim:F_tail2},
	\begin{align*}
		& \nabla v(x)^\top \phi_k\left( F\left( C- \phi_k^\top\theta - \phi_k^\top x \right) - s_k \right) \nonumber\\
		= & \phi_k^\top x \left( F\left( C- \phi_k^\top\theta - \phi_k^\top x \right) - s_k \right)
	\end{align*}
	is uniformly upper bounded. Thus, we get the corollary by \cref{lemma:vgx}.
\end{proof}

\begin{corollary}\label{coro:worst_SPAO}
	Under the condition of \cref{prop:dist}, $ \psi_k = O(\sqrt{\ln k}) $.
\end{corollary}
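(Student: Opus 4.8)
The plan is to exploit the defining relation $\psi_k = \tilde{\theta}_k - T_k$ together with the triangle inequality, reducing the claim to separate order estimates of $\tilde{\theta}_k$ and $T_k$. First I would invoke \cref{coro:worst_EstErr}, which under \cref{a1,a2} already gives $\tilde{\theta}_k = O(\sqrt{\ln k})$; since the hypotheses of \cref{prop:dist} include \cref{a1,a2}, this bound is available in the present setting. It then remains only to show that $T_k$ contributes nothing worse than $O(\sqrt{\ln k})$, after which the result follows by combining the two pieces.

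For $T_k$, the key observation is that $T_k = \frac{1}{k}\sum_{i=1}^{k}\beta\phi_i(F_i - s_i)$ is an average of uniformly bounded increments. By \cref{a1} we have $\Abs{\phi_i}\leq M$, and $\abs{F_i - s_i}\leq 1$ because $F_i\in[0,1]$ (being a value of the distribution function) while $s_i\in\{0,1\}$. Hence $\Abs{\beta\phi_i(F_i - s_i)}\leq \beta M$ for every $i$, so that $\Abs{T_k}\leq \beta M = O(1)$, and this bound holds deterministically rather than merely almost surely. The triangle inequality $\Abs{\psi_k}\leq \Abs{\tilde{\theta}_k} + \Abs{T_k} = O(\sqrt{\ln k}) + O(1) = O(\sqrt{\ln k})$ then completes the argument.

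Essentially all the substance here is carried by \cref{coro:worst_EstErr}, whose proof applies \cref{lemma:vgx} to Algorithm \eqref{algo} with $v(x)=x^\top x$; the present corollary merely transports that worst-case bound from $\tilde{\theta}_k$ to $\psi_k$ through the crude estimate on $T_k$. An alternative self-contained route would apply \cref{lemma:vgx} directly to the SPAO recursion \eqref{recur_psi}. The only mild obstacle on that route is that \eqref{recur_psi} carries the extra driving term $T_{k-1}/k$, which is not of the clean form $\rho_k g_k(x_{k-1})$ required by \cref{lemma:vgx}; however this is harmless, since $T_{k-1}=O(1)$ can be absorbed into the error terms. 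Because the triangle-inequality route needs no such modification, I expect it to be the cleanest, and I anticipate no genuine difficulty beyond correctly citing the already-established estimate for $\tilde{\theta}_k$.
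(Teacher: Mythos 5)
Your proposal is correct and matches the paper's own one-line argument, which likewise writes $\psi_k = \tilde{\theta}_k - T_k = O(\sqrt{\ln k}) + O(1)$ using \cref{coro:worst_EstErr} for the first term and the uniform boundedness of the averaged increments for the second. The only difference is that you spell out the deterministic bound $\Abs{T_k}\leq \beta M$ explicitly, which the paper leaves implicit.
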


\begin{proof}
	From $ \psi_k = \tilde{\theta}_k - w_k = O(\sqrt{\ln k}) + O(1) $, we get the corollary.
\end{proof}

\begin{remarkx}
	\cref{coro:worst_EstErr,coro:worst_SPAO} estimate the estimation error $ \tilde{\theta}_k $ and SPAO $ \psi_k $ in the worst case, respectively.
\end{remarkx}

\begin{lemmax}\label{lemma:gen_zhao18}
	For the sequence $ \{h_k\} $, assume that
	\begin{enumerate}[label=\roman*)]
		\item $ h_k $ is positive and monotonically increasing;
		\item $ \ln h_k = o(\ln k) $.
	\end{enumerate}
	Then, for non-negative real numbers {$ i_0 $}, $ {\eta^\prime} $ and $ \varepsilon $, and any positive integer $ p $,
	\begin{equation*}
		\sum_{l=1}^{k}\prod_{i=l+1}^k\left(1-\frac{{\eta^\prime}}{i+{i_0}}\right)^p\frac{h_l}{l^{1+\varepsilon}}=
		\begin{cases}
			O\left(\frac{h_k}{k^{\varepsilon}}\right), &p{\eta^\prime}>\varepsilon;\\
			O\left(\frac{h_k \ln k}{k^{\varepsilon}}\right), &p{\eta^\prime}=\varepsilon;\\
			O\left(\frac{1}{k^{p{\eta^\prime}}}\right), &p{\eta^\prime}<\varepsilon.
		\end{cases}
	\end{equation*}
\end{lemmax}

\begin{proof}
	{
	Since $ i_0 \geq 0 $, one can get $ \frac{l+1+i_0}{l} = 1 + \frac{i_0+1}{l} \leq  2 + i_0 $ and $ \frac{k}{k+i_0} \leq 1 $ for all positive integers $ l $ and $ k $. 
	Then, by Lemma A.2 in \cite{WangJM2024bi}, we have
	\begin{align*}
		\prod_{i=l+1}^k\left(1-\frac{\eta^\prime}{i+i_0}\right)
		\leq \left(\frac{l+1+i_0}{k+i_0}\right)^{\eta^\prime}
		\leq (2+i)^{\eta^\prime} \left( \frac{l}{k} \right)^{\eta^\prime},
	\end{align*}
	which leads to
	\begin{align*}
		& \sum_{l=1}^{k}\prod_{i=l+1}^k\left(1-\frac{\eta^\prime}{i+i_0}\right)^p\frac{h_l}{l^{1+\varepsilon}}\nonumber\\
		=& \sum_{l=1}^{k}\left[\prod_{i=l+1}^k\left(1-\frac{\eta^\prime}{i+i_0}\right)\right]^p\frac{h_l}{l^{1+\varepsilon}} \\
		\leq & 
		(2+i)^{p\eta^\prime} \sum_{l=1}^{k}\left(\frac{l}{k}\right)^{p\eta^\prime}\frac{h_l}{l^{1+\varepsilon}} 
		= O\left(\frac{1}{k^{p\eta^\prime}}\sum_{l=1}^{k}\frac{h_l}{l^{1+\varepsilon-p\eta^\prime}}\right).
	\end{align*}
}
	Then, it suffices to estimate $ \sum_{l=1}^{k}h_l/l^{1+\varepsilon-p\eta^\prime} $.
	
	Firstly, when $ p\eta^\prime<\varepsilon $, by $ \ln h_k = o(\ln k) $, we have $ h_k<k^{(\varepsilon-p\eta^\prime)/2} $ for sufficiently large $ k $, which implies $ \sum_{l=1}^{\infty}\frac{h_l}{l^{1+\varepsilon-p\eta^\prime}}<\infty $. So, we can get
	\begin{equation*}
		\sum_{l=1}^{k}\prod_{i=l+1}^k\left(1-\frac{\eta^\prime}{i+i_0}\right)^p\frac{h_l}{l^{1+\varepsilon}} =O\left(\frac{1}{k^{p\eta^\prime}}\right).
	\end{equation*}
	
	Secondly, by the monotonicity of $ h_k $, we have
	\begin{align*}
		\sum_{l=1}^{k}\frac{h_l}{l}
		\leq& \sum_{l=1}^{k}h_l\left(\ln l - \ln (l-1) \right)\nonumber\\
		\leq& \sum_{l=1}^{k} \left( h_l \ln l - h_{l-1} \ln (l-1) \right) = h_k\ln k.
	\end{align*}
	Hence, when $ p\eta^\prime=\varepsilon $, one can get
	\begin{equation*}
		\sum_{l=1}^{k}\prod_{i=l+1}^k\left(1-\frac{\eta^\prime}{i+i_0}\right)^p\frac{h_l}{l^{1+\varepsilon}} = O \left(\frac{h_k\ln k}{k^{\varepsilon}}\right).
	\end{equation*}
	
	Lastly, when $ p\eta^\prime>\varepsilon $, we have
	\begin{align*}
		\sum_{l=1}^{k}\frac{h_l}{l^{1 + \varepsilon- p\eta^\prime}}
		=& O\left(\sum_{l=1}^{k} h_l \left( l^{p\eta^\prime-\varepsilon} - (l-1)^{p\eta^\prime-\varepsilon} \right) \right) \nonumber\\
		\leq & O\left(\sum_{l=1}^{k} \left( h_l l^{p\eta^\prime-\varepsilon}- h_{l-1} (l-1)^{p\eta^\prime-\varepsilon}\right) \right) \nonumber\\
		=& O\left( h_k k^{p\eta^\prime-\varepsilon} \right),
	\end{align*}
	which implies
	\begin{equation*}
		\sum_{l=1}^{k}\prod_{i=l+1}^k\left(1-\frac{\eta^\prime}{i+i_0}\right)^p\frac{h_l}{l^{1+\varepsilon}} = O\left(\frac{h_k}{k^{\varepsilon}}\right). \qedhere
	\end{equation*}
\end{proof}

\begin{remarkx}
	If $ h_k $ is constant, $ p=1 $ and $ i_0=0 $, then \cref{lemma:gen_zhao18} implies Lemma 4 in \cite{zhao2018consensus}. Besides, if $ h_k/\ln k $ is assumed to be monotonically decreasing, then the estimate of \cref{lemma:gen_zhao18} is accurate.
\end{remarkx}

\begin{theoremx}\label{prop:prom_psi_tail}
	Under the condition of \cref{prop:dist}, for any $ \varepsilon\in(0,1) $, there exist positive numbers $ \varepsilon^\prime $ and $ m $ such that
	\begin{equation}
		\mP\left\{ \Abs{\psi_k} > k^{-\varepsilon^\prime} \right\} = O\left( \exp\left(-mk^{1-\varepsilon}\right) \right).
	\end{equation}
\end{theoremx}

\begin{proof}
	The theorem can be proved by verifying that there exists $ \varepsilon^\prime>0 $ such that
	\begin{equation}\label{subset:appen}
		\left\{ \Abs{\psi_k} \leq k^{-\varepsilon^\prime} \right\} \supseteq \left\{ \sup_{j\geq \lfloor k^{1-2\varepsilon} \rfloor} j^{\varepsilon}\Abs{w_j}\leq 1 \right\}.
	\end{equation}
	
	By the monotonicity of $ \{ \sup_{j\geq k} j^{\varepsilon}\Abs{w_j}\leq 1 \} $ and \cref{prop:in},
	\begin{equation}
		\left\{ \sup_{j\geq \lfloor k^{1-\varepsilon} \rfloor}\lVert\psi_j\rVert^2 < \Mpri \right\} \supseteq \left\{ \sup_{j\geq \lfloor k^{1-2\varepsilon} \rfloor} j^{\varepsilon}\Abs{w_j}\leq 1 \right\}.
	\end{equation}
	Therefore,  if $ \sup_{j\geq \lfloor k^{1-2\varepsilon} \rfloor} j^{\varepsilon}\Abs{w_j}\leq 1 $, then by \eqref{eq:Lagrange} and \eqref{ineq:matrix_prod}, for all $ j\geq \lfloor k^{1-\varepsilon} \rfloor+N $,
	\begin{equation}
		\Abs{\psi_j}
		\leq  \left( 1 - \frac{\beta \delta N}{j}\udf\left(M\sqrt{\Mpri}\right) \right) \Abs{\psi_{j-N}} + O\left(\frac{1}{j^{1+\varepsilon}}\right), 
	\end{equation}
	where $ \udf(\cdot) $ is defined in \eqref{def:udf_func}. Then, by \cref{coro:worst_SPAO,lemma:gen_zhao18}, $ \Abs{\psi_j} $ converges at a polynomial rate. Hence, we get \eqref{subset:appen}. Then, the theorem can be proved by \cref{lemma:T} and the arbitrariness of $ \varepsilon $.
\end{proof}

\begin{corollary}\label{coro:tail_prom}
	Under the condition of \cref{thm:tail}, for any $ \varepsilon>0 $, there exist positive numbers $ \varepsilon^\prime $ and $ m $ such that
	\begin{equation}
		\mP\left\{ \lVert\tilde{\theta}_k\rVert > k^{-\varepsilon^\prime} \right\} = O\left( \exp\left(-mk^{1-\varepsilon}\right) \right).
	\end{equation}
\end{corollary}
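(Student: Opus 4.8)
The plan is to exploit the decomposition $\tilde{\theta}_k = \psi_k + T_k$ that underlies the entire convergence analysis, and to control the upper tail of each summand separately at a polynomial threshold. Both tails have already been established: \cref{prop:prom_psi_tail} gives exponential decay of $\mP\{\Abs{\psi_k} > k^{-\varepsilon^\prime}\}$, while \cref{lemma:T} gives exponential decay of the tail of $T_k$. Since a sum of two quantities, each below a polynomial threshold, stays below a slightly larger polynomial threshold, a union bound will transfer the exponential tail from $\psi_k$ and $T_k$ to $\tilde{\theta}_k$.

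First I would reduce to the case $\varepsilon \in (0,1)$; for $\varepsilon \geq 1$ the claimed bound $\exp(-mk^{1-\varepsilon})$ is bounded away from $0$, so the statement holds trivially because probabilities never exceed $1$. With $\varepsilon \in (0,1)$ fixed, I would apply \cref{prop:prom_psi_tail} with this same $\varepsilon$ to obtain $\varepsilon_1^\prime > 0$ and $m_1 > 0$ such that $\mP\{\Abs{\psi_k} > k^{-\varepsilon_1^\prime}\} = O(\exp(-m_1 k^{1-\varepsilon}))$. For the other summand I would use the inclusion $\{\sup_{j\geq k} j^{\varepsilon/2}\Abs{T_j} \leq 1\} \subseteq \{\Abs{T_k} \leq k^{-\varepsilon/2}\}$, so that \cref{lemma:T}, applied with the admissible exponent $\varepsilon/2 \in (0,\tfrac12)$, yields $m_2 > 0$ with $\mP\{\Abs{T_k} > k^{-\varepsilon/2}\} = O(\exp(-m_2 k^{1-\varepsilon}))$. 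The two exponents match the target $1-\varepsilon$ exactly: \cref{prop:prom_psi_tail} returns it directly, and \cref{lemma:T} returns $1 - 2\cdot(\varepsilon/2) = 1-\varepsilon$.

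Then I would set $\varepsilon^\prime = \tfrac12\min\{\varepsilon_1^\prime,\varepsilon/2\}$. On the event $\{\Abs{\psi_k}\leq k^{-\varepsilon_1^\prime}\}\cap\{\Abs{T_k}\leq k^{-\varepsilon/2}\}$ the triangle inequality gives $\Abs{\tilde{\theta}_k}\leq k^{-\varepsilon_1^\prime}+k^{-\varepsilon/2}\leq 2k^{-\min\{\varepsilon_1^\prime,\varepsilon/2\}}\leq k^{-\varepsilon^\prime}$ for all sufficiently large $k$, since $2k^{-a}\leq k^{-a/2}$ once $k\geq 2^{2/a}$. Hence $\{\Abs{\tilde{\theta}_k} > k^{-\varepsilon^\prime}\}\subseteq\{\Abs{\psi_k} > k^{-\varepsilon_1^\prime}\}\cup\{\Abs{T_k} > k^{-\varepsilon/2}\}$ for large $k$, and a union bound with $m = \min\{m_1,m_2\}$ delivers the conclusion.

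On difficulty: there is no substantive obstacle, since the heavy lifting resides in \cref{prop:prom_psi_tail,lemma:T}. The only points requiring care are bookkeeping, namely checking that the auxiliary exponent $\varepsilon/2$ is admissible for \cref{lemma:T} (it is, as $\varepsilon < 1$), and tuning both tail estimates so that they produce the \emph{same} decay exponent $1-\varepsilon$; otherwise the slower of the two would dominate and degrade the rate in the union bound.
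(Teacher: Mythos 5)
Your proposal is correct and follows essentially the same route as the paper: decompose $\tilde{\theta}_k = \psi_k + T_k$, invoke the polynomial-threshold tail bound for $\psi_k$ (\cref{prop:prom_psi_tail}, i.e.\ the inclusion \eqref{subset:appen}) and \cref{lemma:T} for $T_k$, and combine via the triangle inequality. The only cosmetic difference is that the paper nests both events inside the single event $\{\sup_{j\geq \lfloor k^{1-2\varepsilon}\rfloor} j^{\varepsilon}\Abs{T_j}\leq 1\}$ and applies \cref{lemma:T} once, whereas you bound the two tails separately and take a union bound; the conclusions are identical.
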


\begin{proof}
	By \eqref{subset:appen} and $ \tilde{\theta}_k = \psi_k + w_k $, we have
	\begin{align*}
		\left\{ \lVert \tilde{\theta}_k \rVert^2 \leq k^{-\varepsilon^\prime} + k^{-\varepsilon} \right\}
		\supseteq &
		\left\{ \Abs{\psi_k} \leq k^{-\varepsilon^\prime} \right\} \cap \left\{ \Abs{w_k}\leq k^{-\varepsilon} \right\} \nonumber\\
		\supseteq & \left\{ \sup_{j\geq \lfloor k^{1-2\varepsilon} \rfloor} j^{\varepsilon}\Abs{w_j}\leq 1 \right\}\!.\!\!\!
	\end{align*}
	Then, the corollary can be proved by \cref{lemma:T}.
\end{proof}

\begin{remarkx}
	\cref{prop:prom_psi_tail,coro:tail_prom} are extensions of \cref{prop:in,thm:tail}, respectively.
\end{remarkx}

\section{Other application of SPAO}\label[appen]{appen:extend}

\setcounter{equation}{0}
\renewcommand{\theequation}{B.\arabic{equation}}

Firstly, the construction of SPAO can be applied to many online identification algorithms of {binary-valued} systems. For {binary-valued} systems with threshold $C_k$, a large number of recursive identification algorithms can be represented as
\begin{equation*}
	\hat{\theta}_k = \hat{\theta}_{k-1} + \rho_k v_k \left(h(\phi_k, \hat{\theta}_{k-1})-s_k \right),
\end{equation*}
where $\{\phi_k, k\geq1\}$ are independent regressed function of inputs, $C_k$ and $v_k$ are generated by $\{\phi_j, s_{j-1}, j\leq k\}$ \cite{Csaji2012recursive,guo2013recursive,Song2018recursive,Song2024identification,Wang2021distributed,You2015recursive,zhang2019asymptotically,WangY2022unified,Fu2022Distributed}. The step-size $\rho_k$ can also be matrices \cite{Wang2021distributed,zhang2019asymptotically}.

Define $ \psi_k = \tilde{\theta}_k - w_k $, where $ \tilde{\theta}_k = \hat{\theta}_k - \theta $ is the estimation error and
\begin{align*}
	w_k = & \rho_k \left(\sum_{i=1}^k v_i\left(\mE\left[s_i | \phi_j,s_{j-1},j\leq i\right]-s_i\right)\right) \nonumber\\
	= & \rho_k \left(\sum_{i=1}^k v_i\left(F(C_i-\phi_i^\top\theta)-s_i\right)\right).
\end{align*}
Then, one can get
\begin{align*}
	\psi_k
	=&\psi_{k-1}+\rho_k\left(\rho_k^{-1}-\rho_{k-1}^{-1}\right)w_{k-1}\nonumber\\
	& +\rho_k v_k\left(h(\phi_k, \psi_{k-1}+w_{k-1}+\theta) -F(C_k-\phi_k^\top\theta)\right) 
\end{align*}

If there is a good convergence property for $ w_k $, then the trajectory of $ \psi_k $ is similar to that of $ \tilde{\theta}_k $ and that of the deterministic sequence
\begin{equation*}
	\overline{\psi}_k = \overline{\psi}_{k-1}+\rho_k v_k\left(h(\phi_k, \overline{\psi}_{k-1}+\theta) -F(C_k-\phi_k^\top\theta)\right).
\end{equation*}
Therefore, we can analyze the convergence property of the algorithm through $ \psi_k $.

Secondly, SPAO technique can be applied in the robustness analysis of Algorithm \eqref{algo}. If the noise distribution used in our algorithm $ F(\cdot) $ is different from the true noise distribution {$ F_{\text{true}}(\cdot) $}, then by the SPAO technique, we can prove that under the condition of \cref{thm:tail}
\begin{equation}
	\varlimsup\limits_{k\to\infty} \lVert \tilde{\theta}_k \rVert^2 \leq M^{\prime\prime}(\Delta_F),\ \text{a.s.},
\end{equation}
where $  \Delta_F = \sup_{x\in \mathbb{R}} \left| F(x) - {F_{\text{true}}(x)} \right| $, and $ M^{\prime\prime}(\cdot) $ is a positive function satisfying $ \lim\limits_{\Delta_F\to 0} M^{\prime\prime}(\Delta_F) = 0 $. The detailed analysis is similar to Theorem 1, and hence, omitted here.

\begin{IEEEbiography}[{\includegraphics[width=1in,height=1.25in,clip,keepaspectratio]{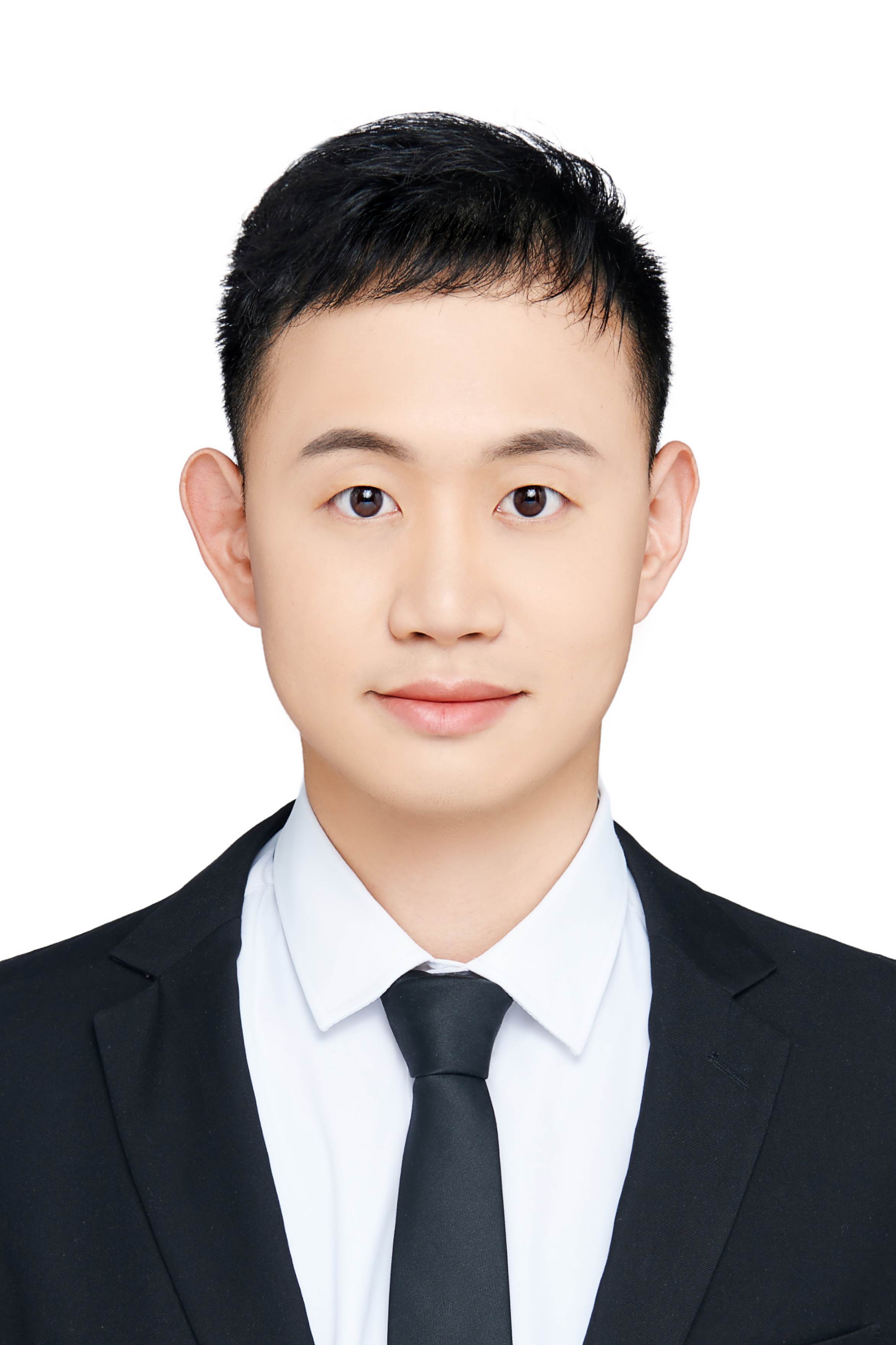}}]{Jieming Ke}(S'22)
	received the B.S. degree in Mathematics from University of Chinese Academy of Science, Beijing, China, in 2020. He is currently working toward the Ph.D. degree majoring in system theory at Academy of Mathematics and Systems Science (AMSS), Chinese Academy of Science (CAS), Beijing, China. 
	
	His research interests include identification and control of quantized systems and the information security problems of control systems. 
\end{IEEEbiography}

\begin{IEEEbiography}[{\includegraphics[width=1in,height=1.25in,clip,keepaspectratio]{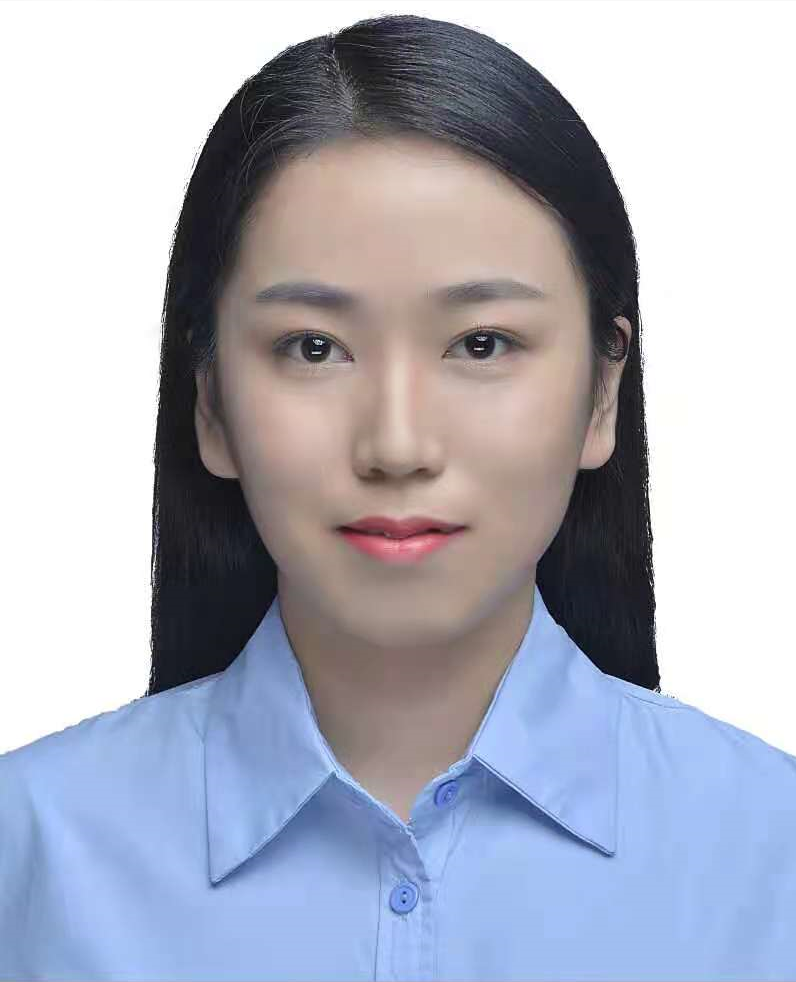}}]{Ying Wang}(S'20-M'22)
	received the B.S. degree in Mathematics from Wuhan University, Wuhan, China, in 2017, and the Ph.D. degree in systems theory from the Academy of Mathematics and Systems Science (AMSS), Chinese Academy of Sciences (CAS), Beijing, China, in 2022. She is currently a Post-Doctoral Research Associate in AMSS, CAS.
	
	Her research interests include identification and control of quantized systems, and multiagent systems. 
\end{IEEEbiography}

\begin{IEEEbiography}[{\includegraphics[width=1in,height=1.25in,clip,keepaspectratio]{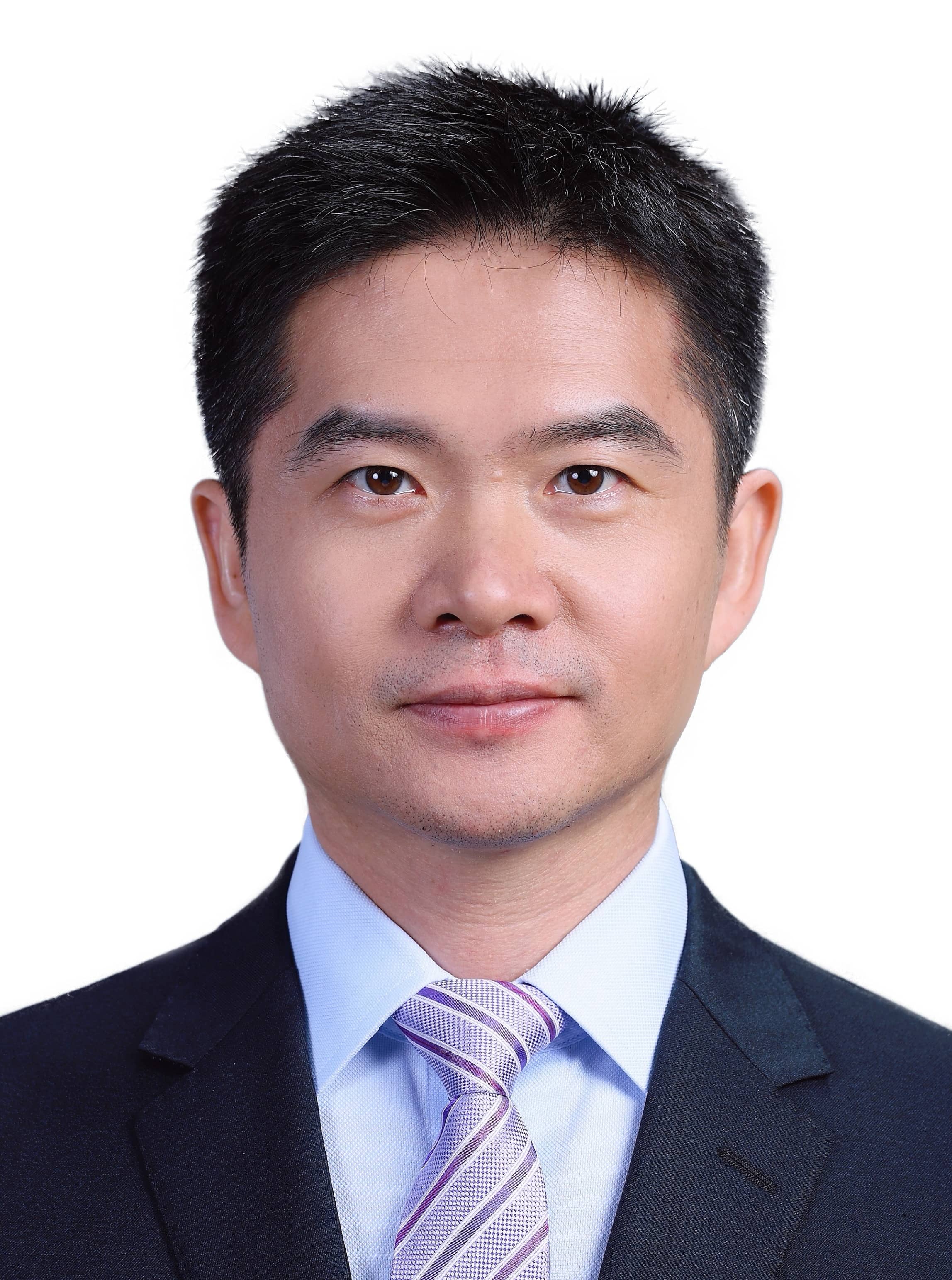}}]{Yanlong Zhao}(S'07-SM'18)
	received the B.S. degree in mathematics from Shandong University, Jinan, China, in 2002, and the Ph.D. degree in systems theory from the Academy of Mathematics and Systems Science (AMSS), Chinese Academy of Sciences (CAS), Beijing, China, in 2007. Since 2007, he has been with the AMSS, CAS, where he is currently a full Professor. His research interests include identification and control of quantized systems, information theory and modeling of financial systems.
	
	He has been a Deputy Editor-in-Chief {\em Journal of Systems and Science and Complexity}, an Associate Editor of {\em Automatica}, {\em SIAM Journal on Control and Optimization}, and {\em IEEE Transactions on Systems, Man and Cybernetics: Systems}. 
	He served as a Vice-President of Asian Control Association and a Vice-President of IEEE CSS Beijing Chapter, and is now a Vice General Secretary of Chinese Association of Automation (CAA) and the Chair of Technical Committee on Control Theory (TCCT), CAA.
\end{IEEEbiography}

\begin{IEEEbiography}[{\includegraphics[width=1in,height=1.25in,clip,keepaspectratio]{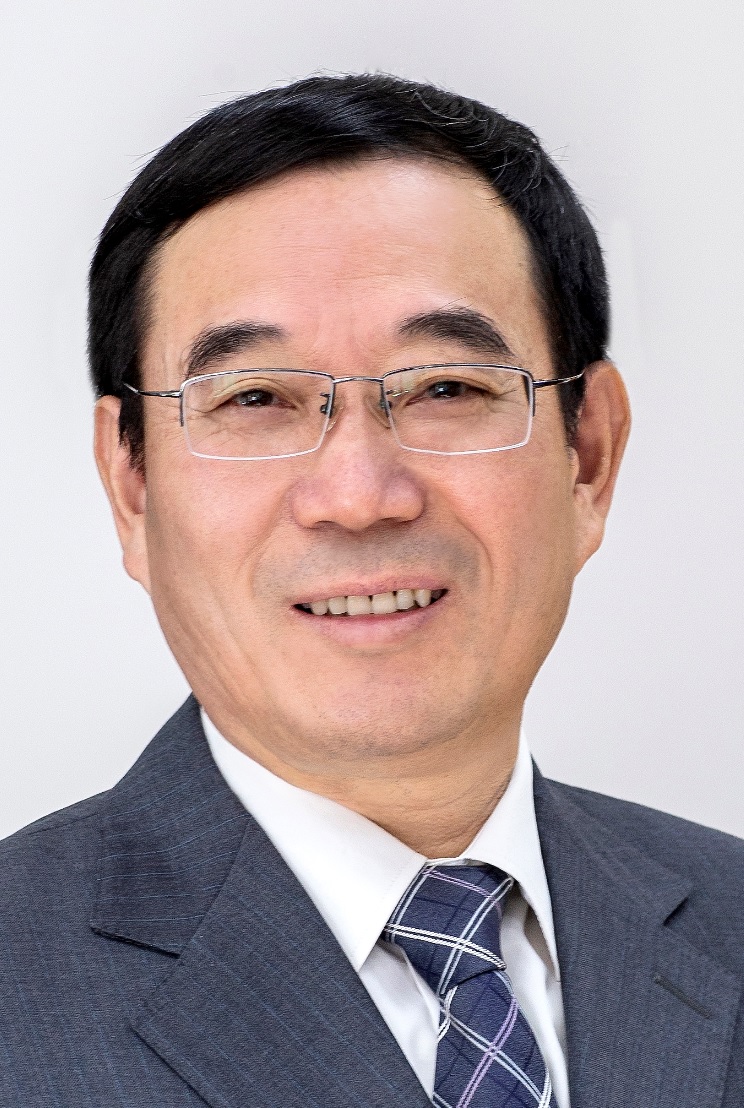}}]{Ji-Feng Zhang}(M'92-SM'97-F'14)
	received the B.S. degree in mathematics from Shandong University, China, in 1985, and the Ph.D. degree from the Institute of Systems Science (ISS), Chinese Academy of Sciences (CAS), China, in 1991. Since 1985, he has been with the ISS, CAS. His current research interests include system modeling, adaptive control, stochastic systems, and multi-agent systems.
	
	He is an IEEE Fellow, IFAC Fellow, CAA Fellow, SIAM Fellow, member of the European Academy of Sciences and Arts, and Academician of the International Academy for Systems and Cybernetic Sciences. He received the Second Prize of the State Natural Science Award of China in 2010 and 2015, respectively. He was a Vice-President of the Chinese Association of Automation, the Chinese Mathematical Society and the Systems Engineering Society of China. He was a Vice-Chair of the IFAC Technical Board, member of the Board of Governors, IEEE Control Systems Society; Convenor of Systems Science Discipline, Academic Degree Committee of the State Council of China. He served as Editor-in-Chief, Deputy Editor-in-Chief or Associate Editor for more than 10 journals, including {\em Science China Information Sciences}, {\em IEEE Transactions on Automatic Control} and {\em SIAM Journal on Control and Optimization} etc.
\end{IEEEbiography}

\end{document}